\newtheorem{theorem}{Theorem}
\numberwithin{theorem}{section}
\newtheorem{lemma}[theorem]{Lemma}
\newtheorem{corollary}[theorem]{Corollary}
\newtheorem{observation}[theorem]{Observation}
\newtheorem{definition}[theorem]{Definition}
\newtheorem{claim}[theorem]{Claim}
    \def\squareforqedb{\hbox{$\blacksquare$}}
\def\qedb{\ifmmode\squareforqedb\else{\unskip\nobreak\hfil
    \penalty50\hskip1em\null\nobreak\hfil\squareforqedb
    \parfillskip=0pt\finalhyphendemerits=0\endgraf}\fi}
    \newenvironment{claimproof}[1][\proofname]{
        \pushQED{\qed}\normalfont \trivlist
        \item\relax{\itshape#1{.}}\hspace\labelsep\ignorespaces
    }{
        \popQED\endtrivlist 
    }
\newcommand{\eps}{\varepsilon}
\renewcommand{\epsilon}{\varepsilon}
\newcommand{\dist}{{\mathrm{dist}}}
\newcommand{\opt}{\mathrm{opt}}
\newcommand{\tmerge}{t_{\mathrm{merge}}}
\newcommand{\drop}{\mathrm{drop}}
\newcommand{\mst}{\mathrm{mst}}
\newcommand{\deltaval}{0.00858}
\newcommand{\gammaval}{0.0774}
\newcommand{\betaval}{7.249}
\newcommand{\Mval}{1.941792}
\newcommand{\alphaval}{2.081}
\newcommand{\lambdaval}{0.02004}
\newcommand{\muval}{0.377}
\newcommand{\gapval}{1.9988}\newcommand{\epsval}{10^{-7}}
\newcommand{\highlight}[1]{\textcolor{orange!80!black}{#1}}
\newcommand{\highlight}[1]{#1}
\def\Cscr{\mathcal{C}}
\def\Iscr{\mathcal{I}}
\def\Lscr{\mathcal{L}}
\def\Sscr{\mathcal{S}}
\def\@fnsymbol#1{\ensuremath{\ifcase#1\or *\or \ddagger\or
    \mathsection\or \mathparagraph\or \|\or **\or \dagger\dagger
    \or \ddagger\ddagger \else\@ctrerr\fi}}
\title{The Bidirected Cut Relaxation for Steiner Tree\\has Integrality Gap Smaller than 2.}
\author{
Jaros\l aw Byrka\thanks{University of Wroc\l aw.  \href{mailto:jaroslaw.byrka@cs.uni.wroc.pl}{jaroslaw.byrka@cs.uni.wroc.pl}, supported by NCN grant number 2020/39/B/ST6/01641.
}  
\and
Fabrizio Grandoni\thanks{IDSIA, USI-SUPSI. 
 \href{mailto:fabrizio.grandoni@idsia.ch}{fabrizio.grandoni@idsia.ch}. Partially supported by the SNF Grant $200021\_200731/1$.
}
\and
Vera Traub\thanks{
Research Institute for Discrete Mathematics and Hausdorff Center for Mathematics, University of Bonn.\\
 \href{mailto:traub@dm.uni-bonn.de}{traub@dm.uni-bonn.de}.
}
}
\date{}
\begin{document}

\maketitle

\thispagestyle{empty}
\addtocounter{page}{-1}
\enlargethispage{-1cm}

\begin{abstract}
\noindent
The Steiner tree problem is one of the most prominent problems in network design. Given an edge-weighted undirected graph and a subset of the vertices, called terminals, the task is to compute a minimum-weight tree containing all terminals (and possibly further vertices). 
The best-known approximation algorithms for Steiner tree involve enumeration of a (polynomial but) very large number of candidate components and are therefore slow in practice. 

A promising ingredient for the design of fast and accurate approximation algorithms for Steiner tree is the bidirected cut relaxation (BCR): bidirect all edges, choose an arbitrary terminal as a root, and enforce that each cut containing some terminal but not the root has one unit of fractional edges leaving it. 
BCR is known to be integral in the spanning tree case [Edmonds'67], i.e., when all the vertices are terminals. For general instances, however, it was not even known whether the integrality gap of BCR is better than the integrality gap of the natural undirected relaxation, which is exactly 2.
We resolve this question by proving an upper bound of $\gapval$ on the integrality gap of BCR.
\end{abstract}
\clearpage

\section{Introduction}
\label{sec:introduction}

In the classical Steiner tree problem we are given an undirected graph $G=(V,E)$, with edge costs $c:E\rightarrow \mathbb{R}_{\ge 0}$, and a subset of vertices $R\subseteq V$ (the \emph{terminals}). Our goals is to compute a tree $T$ of minimum cost $c(T):=\sum_{e\in E(T)}c(e)$ which contains $R$ (and possibly other vertices, called \emph{Steiner nodes}). Steiner tree is one of the best-studied problems in Computer Science and Operations Research, with great practical and theoretical relevance.

\paragraph{Existing Approximation Algorithms.} The Steiner tree problem is NP-hard, indeed it is one of the earliest problems which were shown to belong to this class. More precisely, it is NP-hard to approximate below a factor $\frac{96}{95}$ \cite{CC08}. Several (polynomial-time) constant-factor approximation algorithms are known for this problem. The simple \emph{minimum spanning tree heuristic} gives a $2$-approximation. Without loss of generality, we can replace $G$ with its metric closure. Consider the subgraph  $G[R]$ of $G$ induced by the terminals, and  return a minimum-cost spanning tree of it (a so-called  \emph{terminal MST}). Mehlhorn \cite{Mehlhorn88} gives a nearly-linear-time $\tilde{O}(m+n)$ implementation of this algorithm. The same approximation factor is obtained by Agrawal, Klein, and Ravi \cite{AKR91} (indeed for the more general \emph{Steiner forest} problem) using the primal-dual method (see also \cite{GW95}). This result also implies that the following Undirected Cut Relaxation \eqref{eq:undir-lp-tree} for Steiner tree has integrality gap at most~$2$. 
For a vertex set $U\subseteq V$, let $\delta(U)=\{\{u,v\}\in E: |U\cap \{u,v\}|=1\}$. Then the Undirected Cut Relaxation is the following linear program:
\begin{equation}\label{eq:undir-lp-tree}\tag{UCR}
\begin{aligned}
        \min & & \sum_{e\in E} c(e) \cdot x_e & \\
        \text{s.t.} &  & \sum_{e\in \delta(U)} x_e  &\ \geq\ 1 & & \text{for all } U \subseteq V \text{ with } R\cap U \neq \emptyset \text{ and } R \setminus U \neq \emptyset    \\
        & & x_e &\ \geq\ 0 & & \text{for all } e\in E.
\end{aligned}
\end{equation}

All the known better-than-$2$ approximation algorithms for Steiner tree involve one way or the other the notion of \emph{components}. A component is a connected subgraph of $G$ connecting a subset $X$ of terminals. The component is \emph{full} if its leaves coincide with the terminals it connects. Any minimal Steiner tree can be decomposed into full components, however such components might be arbitrarily large. Borchers and Du \cite{BD95} proved that considering Steiner trees whose full components contain at most $h$ terminals is sufficient to obtain a $1+\frac{1}{\lfloor \log_2 h\rfloor}$ approximation (in particular, we may assume that restricting to components with at most $h=2^{\lceil 1/\eps\rceil}$ terminals increases the cost of a cheapest Steiner tree by at most a factor $1+\eps$). We also remark that the cheapest component connecting a given subset of $h=O(1)$  terminals can be computed in polynomial time.
(Actually this can be done even for $h=O(\log n)$ \cite{DW71}.)

A series of works exploits these ingredients within (variants of) a relative-greedy approach \cite{Z93,zelikovsky_1996_better, KZ97, PS00}, culminating with a $1.55$ approximation \cite{RZ05}. The basic idea is to start with a terminal MST, which provides an initial $2$ approximation $S$. Then one iteratively improves $S$ with the following process. Consider a component $K$ with terminals $X$, and the subgraph $S/X$ obtained from the contraction of $X$. Now drop from $S/X$ a subset of edges of maximum cost while maintaining the connectivity (this can be done by computing a minimum spanning tree of $S/X$). When the cost $\drop(X)$ of dropped edges is larger than the cost $c(K)$ of the component, this induces a cheaper Steiner tree.  
Different algorithms differ in the greedy choice of the improving components. 

The current best $\ln 4+\eps<1.39$ approximation by Byrka, Grandoni, Rothvo{\ss}, and Sanit{\`a}
\cite{BGRS13} is based on a different approach. The authors consider a \emph{hypergraphic relaxation} HYP of the problem based on \emph{directed components} over $2^{\lceil 1/\eps\rceil}$ terminals. This relaxation is exploited within an \emph{iterative randomized rounding} framework: sample one component with probability proportional to its value in an optimal LP solution, contract it, and iterate. Goemans, Olver, Rothvo{\ss}, and Zenklusen \cite{GORZ12} later proved a matching upper bound on the integrality gap of HYP. The same approximation factor was obtained recently by Traub and Zenklusen \cite{TZ22} using a (non-oblivious) local search approach.

All the mentioned better-than-$2$ approximation algorithms for Steiner tree (and in particular the current-best ones) need to enumerate over a (polynomial but) large set\footnote{All known better-than-2 approximation algorithms enumerate over at  least $\Omega(k^3)$ components, where $k$ denotes the number of terminals. The currently best known approximation algorithms enumerate over much larger sets of components.} of candidate components: this makes them mostly unpractical. This raises the following natural question: 
 Is there a very fast (ideally, nearly-linear-time) algorithm for Steiner tree with approximation factor (ideally, substantially) smaller than $2$?
 
\paragraph{The Bidirected Cut Relaxation.} One ingredient that might play a crucial role towards achieving the above goal is the Bidirected Cut Relaxation (BCR). BCR is one of the oldest and best-studied relaxations for Steiner tree, see for example \cite{GM93}.

Let us bidirect all the edges of the given graph $G$: in more detail, replace each undirected edge $e=\{u,v\}\in E$ with two oppositely directed edges $(u,v),(v,u)$, both with cost $c(e)$. Let $\overrightarrow{E}$ be this set of directed edges. Choose an arbitrary terminal $r\in R$ as a root. For $U\subseteq V$, let $\delta^+(U)=\{(u,v)\in \overrightarrow{E}: u\in U,v\notin U\}$. 
Then BCR is the following linear programming relaxation:
\begin{equation}\label{eq:bcr-tree}\tag{BCR}
\begin{aligned}
        \min & & \sum_{e\in \overrightarrow{E}} c(e) \cdot x_e & \\
        \text{s.t.} &  & \sum_{e\in \delta^+(U)} x_e  &\ \geq\ 1 & & \text{for all } U \subseteq V\setminus \{r\}  \text{ with } R \cap U \neq \emptyset \\
        & & x_{e} &\ \geq\ 0 & & \text{for all } e\in \overrightarrow{E}.
\end{aligned}
\end{equation}
\ref{eq:bcr-tree} is indeed a relaxation of the Steiner tree problem because we can orient every Steiner tree towards the root $r$ and then consider the incidence vector of this oriented tree to obtain a feasible solution of \ref{eq:bcr-tree}.
Like \ref{eq:undir-lp-tree}, also \ref{eq:bcr-tree} has a linear number of variables. 
However, it is provably stronger than \ref{eq:undir-lp-tree} in several interesting special cases. For example, a famous result by Edmonds \cite{E67} shows that \ref{eq:bcr-tree} is integral in the minimum spanning tree case, i.e., when $R=V$. 
Notice that, in contrast, \ref{eq:undir-lp-tree} has integrality gap $2$ already on such instances: simply consider a cycle of length $n$ with unit-cost edges; the optimal fractional solution to \ref{eq:undir-lp-tree} sets all the variables to $\frac{1}{2}$, at total cost $\frac{n}{2}$, while the minimum spanning tree costs $n-1$. 

We also know that \ref{eq:bcr-tree} has integrality gap at most $\frac{4}{3}$ on quasi-bipartite instances \cite{CDV11} (improving on an earlier $\frac{3}{2}$ bound from~\cite{RajagopalanV99}), and at most $\frac{991}
{732}$ on Steiner claw free instances \cite{FKOS16,hyattdenesik_et_al:LIPIcs.ICALP.2023.79} (see Section~\ref{sec:related_work} for more details). 
The current best lower bound on the integrality gap of \ref{eq:bcr-tree} is $\frac{6}{5}$, recently shown in \cite{Vicari20}, improving on the earlier lower bound of $\frac{36}{31}$ \cite{BGRS13}. 
Notice that $\frac{6}{5}<\ln(4) \approx 1.386$ and even $\frac{6}{5}<\frac{991}
{732} \approx 1.354$.
The best-known upper bound on the integrality gap of \ref{eq:bcr-tree} prior to our work was~$2$, the same as for \ref{eq:undir-lp-tree}.

The interest in \ref{eq:bcr-tree} goes beyond its possible implications for Steiner tree approximation. If the integrality gap of \ref{eq:bcr-tree} would be sufficiently small, this might
help to find improved approximation for \emph{prize-collecting Steiner tree}: this is the generalization of Steiner tree where one is allowed to leave some terminals disconnected, however one has to pay a penalty for each such terminal.
Another generalization of Steiner tree which might benefit from a better understanding of \ref{eq:bcr-tree} is \emph{Steiner forest}: given a set of pairs of terminals, find the cheapest forest such that each such pair belongs to the same connected component of the forest. 
A natural attempt to improve on the existing $2$-approximation algorithms for this problem \cite{AKR91,GW95,J01}, a well-known open problem in the area, might be to consider some multi-root variant of \ref{eq:bcr-tree}\footnote{Hypergraphic relaxations seem less promising for Steiner forest because in the Steiner forest problem we may assume that every vertex is a terminal.}. 
However, this attempt would be hopeless if the integrality gap of \ref{eq:bcr-tree} was $2$ already for the Steiner tree case. 
For all the mentioned reasons, a natural and important open question is: 

\smallskip
\begin{center}
{ Is the integrality gap of \ref{eq:bcr-tree} smaller than $2$?}    
\end{center}
\smallskip

\noindent We answer this question affirmatively:
\begin{restatable}{theorem}{mainthm}\label{thm:main}
The integrality gap of \ref{eq:bcr-tree} is at most $\gapval$.    
\end{restatable}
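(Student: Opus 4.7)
The plan is to convert an optimal fractional solution $x^*$ of \ref{eq:bcr-tree} (of value $B^* := \mathrm{OPT}(\text{BCR})$) into an integral Steiner tree of cost at most $\gapval \cdot B^*$. The starting observation is that the terminal-MST / primal--dual algorithm of \cite{AKR91} gives a Steiner tree $T_{\mst}$ with $c(T_{\mst}) \le 2\, B^*$, matching the \ref{eq:undir-lp-tree} gap. To go strictly below~$2$, I would design an augmentation that exploits the directed nature of \ref{eq:bcr-tree} (which is what distinguishes it from \ref{eq:undir-lp-tree}, as witnessed by Edmonds' theorem for $R=V$). Concretely, I would exhibit either (a)~a better-than-$2$ certificate directly comparing $c(T_{\mst})$ with $B^*$ when $x^*$ is ``terminal-heavy'', i.e., close to a convex combination of terminal spanning arborescences; or (b)~a cheap \emph{full component} $K$ of size $\ge 3$ whose insertion into $T_{\mst}$ (followed by cycle-breaking, as in the relative-greedy framework of \cite{Z93,RZ05,BGRS13}) is expected to reduce the cost by a quantifiable margin when $x^*$ is ``Steiner-heavy''.

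The main technical steps are as follows. First, I would run one iteration of a component-sampling procedure in the spirit of \cite{BGRS13}: define a distribution over directed full components supported on $x^*$ (for instance, via path-decomposition of the fractional arborescence given by $x^*$), and compare the expected $\mathrm{drop}(X)$ against the expected component cost $c(K)$ weighted by BCR duality. Second, I would analyze the resulting tree via a potential function of the form $\Phi = c(T) + \alpha\cdot \bcr(G/T)$ for a parameter $\alphaval$, with $\bcr$ denoting the BCR value of the residual (contracted) instance, so that $\Phi$ decreases by enough in each step to amortize against $\alphaval\cdot B^*$. The constants hard-coded in the preamble ($\alphaval=2.081$, $\Mval=1.941792$, $\betaval=7.249$, $\muval=0.377$, $\gammaval$, $\deltaval$, $\lambdaval$) strongly suggest that a two-regime argument, balanced around an MST-cost threshold of roughly $\Mval\cdot B^*$, is needed: below the threshold the MST alone already beats~$2$, above it the random component swap yields a savings of at least $(2-\gapval)B^*$ in expectation. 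The final bound $\gapval$ would emerge from the optimization over all these parameters.

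The most delicate step, and what I expect to be the main obstacle, is proving the positive expected gain of the component-swap phase using only \emph{bidirected} LP information. Unlike the hypergraphic relaxation used in \cite{BGRS13,GORZ12}, \ref{eq:bcr-tree} has only a linear number of variables and knows nothing about components a priori, so one cannot read off component values from $x^*$. The crux is therefore to extract, from the cut structure of $x^*$, a probability distribution on full components and a matching lower bound on $\mathrm{drop}(X)-c(K)$. I would try to do this by rooting at an arbitrary terminal and applying Edmonds' arborescence-packing theorem to $x^*$ to obtain a distribution over \emph{directed terminal-spanning subgraphs}; the support of this distribution then implicitly exposes full components whose internal Steiner structure can be contracted and re-optimized. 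Making the two regimes patch continuously (so that neither ``MST beats $2$'' nor ``swap beats $2$'' fails at the transition point $\Mval$) is where all the numerical constants will have to be tuned simultaneously, and is the place where the proof-plan must be executed with the greatest care.
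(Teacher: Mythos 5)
Your plan takes a genuinely different route from the paper, and I do not think it can be completed as stated. The paper never solves the \ref{eq:bcr-tree} LP at all: it constructs a feasible solution to \ref{eq:dual-bcr-tree} directly, by running a primal-dual-style dual growth in which an edge~$e$ becomes ``tight'' only once the dual load on it reaches $\frac{1}{1+\delta}\cdot c(e)$, and then either proves that the (scaled) vector is feasible — which certifies $\mst(G[R]) \le \frac{2}{1+\delta}\cdot\mathrm{BCR}$ — or extracts an improving component from the \emph{failure} of the growth (a growing set about to swallow the root), contracts the terminals it connects, and recurses. The case split is thus an instance-side condition, ``locally $\gamma$-MST-optimal'', not a ``terminal-heavy vs.~Steiner-heavy'' partition read off an optimal primal $x^*$; and the improving component is found automatically by the dual growth, not by sampling from $x^*$.

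The step you yourself single out as ``the main obstacle'' — extracting a component distribution and a matching lower bound on $\drop(X)-c(K)$ from the cut structure of $x^*$ alone — is exactly where the plan is missing the key idea, and I do not see how to fill it. Edmonds' arborescence packing would decompose $x^*$ into spanning arborescences of its support, but there is no mechanism there to aggregate Steiner subtrees into full components carrying quantifiable drop savings; this is precisely the obstruction that makes \ref{eq:bcr-tree} weaker than the hypergraphic relaxation outside special instance classes (cf.~\cite{CKP10,FKOS16}). Two further issues. First, the paper shows (Appendix~\ref{sec:non-laminar}) that no dual solution with \emph{laminar} support can beat ratio~$2$, and its growth procedure deliberately produces non-laminar families; your plan does not engage with this necessity at all. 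Second, the preamble constants do not play the role you posit: $M\approx 1.94$ is the separation factor between merge times of disjoint contributing sets (Lemma~\ref{lem:different_merge_times}) in the feasibility proof, and $\alpha,\beta,\lambda,\mu$ parametrize an inductive claim about the lengths of $S$-tight paths (Claim~\ref{claim:main}), not a potential of the form $c(T)+\alpha\cdot\bcr(G/T)$ nor a two-regime threshold around $\Mval\cdot B^*$.
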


We remark that we did not optimize the exact upper bound on the integrality gap, and rather aimed at making the proof as simple as possible.

To prove Theorem~\ref{thm:main}, we give a procedure that yields both a Steiner tree and a dual solution for \ref{eq:bcr-tree}.
We show that the ratio between the cost of these is at most $\gapval$, which implies the desired upper bound on the integrality gap of \ref{eq:bcr-tree}.
Our construction combines the idea of contracting components, as used for example in prior relative greedy approaches, with techniques from primal-dual algorithms.

A key challenge here is that classical primal-dual algorithms always produce dual solutions with laminar support\footnote{For a description of the dual LPs of \eqref{eq:undir-lp-tree} and \eqref{eq:bcr-tree}, see Section~\ref{sec:outline}.
A solution $y$ to one of these LPs has \emph{laminar support} if the support $\mathcal{L} \coloneqq \{ U\subseteq V: y_U > 0\}$ of $y$ is a laminar family, i.e., for all sets $A,B\in \mathcal{L}$ we have $A\subseteq B$, $B\subseteq A$, or  $A\cap B = \emptyset$.
}.
However, there is a family of instances of the Steiner tree problem, where the ratio between the cost of an optimum Steiner tree and the maximum value of a dual solution with laminar support can be arbitrarily close to $2$ (see Appendix~\ref{sec:non-laminar}). 
Hence, for our approach to work, we necessarily need to construct dual solutions whose support is \emph{not} laminar.

Intuitively, we first consider a terminal MST together with the  dual solution produced by the primal-dual algorithm for \ref{eq:undir-lp-tree} (see \cite{AKR91, GW95}) and then try to construct a corresponding dual solution for \ref{eq:bcr-tree}, where each set is grown by a $(1+\delta)$ factor more than in the dual solution for \ref{eq:undir-lp-tree} (for some positive $\delta > 0$).
We show that in this way we can either construct a good dual solution to \ref{eq:bcr-tree}, or we can identify a good component to contract, hence improving the primal solution, i.e., the current Steiner tree.
We provide a more detailed outline of our proof techniques in Section~\ref{sec:outline}.

While the  focus of our work is on proving an upper bound on the integrality gap of \ref{eq:bcr-tree} and we do not discuss algorithmic aspects and  running times in detail, we remark  that our proof techniques give rise to a polynomial-time algorithm. 
In contrast to previous algorithms for Steiner tree with approximation factor smaller than $2$, this algorithm does not require enumerating subsets of terminals to construct the improving components. 
In our approach it is the dual growing procedure that identifies a new improving component to be added (which can be of arbitrary size).

\subsection{Further Related work}
\label{sec:related_work}

Chakrabarty, K{\"{o}}nemann, and Pritchard \cite{CKP10} prove that HYP and \ref{eq:bcr-tree} are equivalent on \emph{quasi-bipartite} instances of Steiner tree, i.e., the special case when there are no edges between Steiner vertices. Chakrabarty, Devanur, and Vazirani~\cite{CDV11}, using a primal-dual construction, show that the integrality gap of \ref{eq:bcr-tree} is at most 4/3 on quasi-bipartite instances. 
Feldman, K{\"o}nemann, Olver, and Sanit{\`a}~\cite{FKOS16} later extended the equivalence between HYP and \ref{eq:bcr-tree} to \emph{Steiner-claw-free} instances of Steiner tree, i.e., instances where each Steiner vertex is adjacent to at most $2$ other Steiner vertices. Notice that this implies an upper bound of $\ln 4$ on the integrality gap of \ref{eq:bcr-tree} on such instances by the result in \cite{GORZ12}.
They also showed that, on small instances built around a single Steiner claw, \ref{eq:bcr-tree} can be weaker than HYP. 
Recently, Hyatt-Denesik, Ameli, and Sanit{\`a} improved the upper bound on the integrality gap of HYP and \ref{eq:bcr-tree} for \emph{Steiner-claw-free} instances to $\frac{991}
{732} < 1.354$ (see proof of Theorem 5 in \cite{hyattdenesik_et_al:LIPIcs.ICALP.2023.79}).

Vicari~\cite{Vicari20} improved the  lower bound on the integrality gap of \ref{eq:bcr-tree} from the previously know $\frac{36}{31}$ from~\cite{BGRS13} to $\frac{6}{5}$.
He also analyzed a strengthening of \ref{eq:bcr-tree} by adding degree constraints.
Filipecki and Van Vye~\cite{FV20} computationally studied a multi-commodity flow strengthening of \ref{eq:bcr-tree}. 

Goemans and Williamson \cite{GW95} present a primal-dual $2$-approximation for prize-collecting Steiner tree. This was slightly improved to $1.9672$ by Archer, Bateni, Hajiaghayi, and Karlof \cite{ABHK11}, and very recently to $1.79$ by 
Ahmadi, Gholami, Hajiaghayi, Jabbarzade, and Mahdavi~\cite{AGHJM24}. 
A $3$-approximation for the prize-collecting generalization of Steiner forest can be obtained with the technique by Bienstock, Goemans, Simchi-Levi, and Williamson \cite{BGSW93}. Goemans later observed that a similar approach gives a $2.542$ approximation (see, e.g., \cite{HJ06} for a detailed proof). Very recently this approximation was improved to $2$ by Ahmadi, Gholami, Hajiaghayi, Jabbarzade, and Mahdavi~\cite{AGHJM24soda}. K{\"{o}}nemann, Olver, Pashkovich, Ravi, Swamy, and Vygen \cite{KOP0SV17} proved that a natural linear programming relaxation for the prize-collecting Steiner forest problem (generalizing \ref{eq:undir-lp-tree}) has integrality gap larger than~$2$.

The \emph{Steiner network} problem is a generalization of Steiner tree (and Steiner forest) where we are given pairwise vertex connectivity requirements $r(u,v)\geq 0$, and the task is to compute a cheapest subgraph of $G$ such that each such pair of vertices $u,v$ is $r(u,v)$-edge connected. In a celebrated result, Jain \cite{J01} obtained a $2$-approximation for this problem using the \emph{iterative rounding} technique. Since then, breaching the $2$-approximation barrier, even just for relevant special cases of Steiner network, became an important open problem. This was recently achieved for some problems in this family, such as \emph{connectivity augmentation} \cite{BGJ23sicomp,CTZ21,TZ21,TZ22,TZ23} and \emph{forest augmentation} \cite{GJT22stoc}. Among the special cases for which $2$ is still the best-known factor, we already mentioned the Steiner forest problem. Another interesting special case is the \emph{minimum-weight 2-edge connected spanning subgraph} problem.

In the directed Steiner tree problem (DST) we are given a directed edge-weighted graph, a root $r$ and a set of terminals $R$. Our goal is to find a minimum-weight arborescence rooted at $r$ that includes $R$.
A classical result by Charikar, Chekuri, Cheung, Dai, Goel, Guha, and Li \cite{CCCDGGL99} gives a $O(k^{\eps}/\eps^3)$ approximation for DST in time $O(n^{1/\eps})$. This implies a $k^{\eps}$ approximation in polynomial time and a $O(\log^3 k)$ approximation in quasi-polynomial time. The quasi-polynomial time approximation factor was improved to $O(\log^2 k/\log\log k)$ by Grandoni, Laekhanukit and Li \cite{GLL19}, who also present a matching inapproximability result. 
We highlight that, while an analogue of \ref{eq:bcr-tree} would be a valid relaxation for DST, all known integrality gap upper bounds for \ref{eq:bcr-tree} only apply to the undirected Steiner tree problem.
In fact, the natural analogue of \ref{eq:bcr-tree} for DST has integrality gap $\Omega(\sqrt{|R|})$ \cite{ZK02} and $\Omega(|V|^{\delta})$ for some constant $\delta >0$ \cite{LL22}.

\section{Outline of our approach}
\label{sec:outline}

In this section we provide an outline of our techniques and explain the motivation behind our construction.
We start with a brief recap of the classical primal-dual approach from \cite{AKR91} in Section~\ref{sec:primal-dual}.
Then we explain our dual growth procedure to construct a dual solution for \ref{eq:bcr-tree}.
In Section~\ref{sec:contracting_components} we then discuss under which conditions our dual growth procedure is successful and constructs a sufficiently good dual solution.
We also discuss how to handle situations where this is not the case.
Finally, in Section~\ref{sec:outline_analysis} we provide a brief overview of the analysis of our procedure.

\subsection{The primal-dual algorithm for Steiner tree}
\label{sec:primal-dual}

The starting point of our approach is the famous primal-dual algorithm from \cite{AKR91} that yields a $2$-approximation for the Steiner tree problem (and the more general Steiner forest problem). 
We will describe it here for Steiner tree from a perspective that will help us to explain our new techniques later on, introducing some useful notation along the way. 
The primal-dual algorithm works with the Undirected Cut Relaxation~\eqref{eq:undir-lp-tree}.
As mentioned in the introduction (Section~\ref{sec:introduction}), this LP relaxation is well-known to have an integrality gap of exactly $2$.
The upper bound of $2$ follows for example from the analysis of the  primal-dual algorithm~\cite{AKR91} (but there are many other proofs, too).

In the following we will assume without loss of generality that $G$ is a complete graph and the edge costs satisfy the triangle inequality (by taking the metric closure).
Then the primal-dual algorithm can always return a terminal MST, i.e., a minimum-cost spanning tree of the graph $G[R]$ induced by the set $R$ of terminals (and we assume without loss of generality that this is indeed the case).
We write $\mst(G[R])$ to denote the cost of a terminal MST.
For $t\in \mathbb{R}_{\ge 0}$, let $G^t[R]$ be the graph $G[R]$  restricted to the edges of cost at most $2t$.
Then, we define $\mathcal{S}^t$ to be the partition of $R$ into the vertex sets of the connected components of $G^t[R]$.
At any time $t$, the primal-dual algorithm has included the edges of  a minimum cost spanning tree in $G[S]$ for every set $S\in \mathcal{S}^t$ (hence proceeding like Kruskal's algorithm for minimum spanning trees).
We denote by $t_{\max}$ the first time $t$ where $G^t[R]$ is connected.
Then the total cost of the returned terminal MST is 
$\mst(G[R]) = 2  \int_{0}^{t_{\max}}  \Bigl(|\mathcal{S}^t| - 1\Bigr)\ dt$.

The primal-dual algorithm also constructs a solution $y$ to the dual linear program of \ref{eq:undir-lp-tree}:
\begin{equation}\label{eq:dual-undir-lp-tree}\tag{Dual-UCR}
\begin{aligned}
        \max & & \sum_{\substack{U \subseteq V: \\
        R\cap U, R \setminus U \neq \emptyset}} y_U & \\
        \text{s.t.} &  & \sum_{U : e\in \delta(U)} y_U  &\ \leq c(e)\ & & \text{for all } e \in E   \\
        & & y_U &\ \geq\ 0 & & \text{for all } U \subseteq V \text{ with }  R\cap U, R \setminus U \neq \emptyset.
\end{aligned}
\end{equation}
The cost of a terminal MST will be no more than twice the cost of the constructed dual solution $y$ and hence
no more than twice the value of~\ref{eq:undir-lp-tree}.

The construction of the dual solution $y$ in the primal-dual algorithm can be described as follows.
We maintain a feasible dual solution $y$ and start with $y$ being the all-zero vector at time $t=0$.
We say that an edge $e$ is tight if the corresponding constraint in \ref{eq:dual-undir-lp-tree} is tight, i.e., $\sum_{U : e\in \delta(U)} y_U  = c(e)$.
Then at any time $t$, we grow the dual variables $y_{U_S}$ for all sets $S\in \mathcal{S}^t$, where
\[
U_S \coloneqq \bigl\{ v\in V : v \text{ is reachable from $S$ by tight edges (at the current time $t$)}\bigr\},
\]
where $v$ being reachable from $S$ means that $v$ is reachable from at least one $s\in S$.
By the definition of the set $U_S$, no edge in $\delta(U_S)$ is tight and hence the constraints of \ref{eq:dual-undir-lp-tree} will not be violated.
Moreover, one can show that for any time $t \le t_{\max}$, the sets $U_S$ with $S\in \mathcal{S}^t$ are disjoint.
In particular, both $U_S \cap R$ and $U_S \setminus R$ will be nonempty and hence $y$ remains a feasible solution to \ref{eq:dual-undir-lp-tree}.
At time $t_{\max}$ the dual solution $y$ has value $\int_{0}^{t_{\max}}  |\mathcal{S}^t|\ dt$, implying that the cost $\mst(G[R])= 2\int_{0}^{t_{\max}} \Bigl(|\mathcal{S}^t| - 1\Bigr) \ dt$ of the terminal MST is at most twice the value of $y$.

\subsection{Growing dual solutions for BCR}\label{sec:dual_growing}

Instead of the linear programming relaxation~\ref{eq:undir-lp-tree} with integrality gap $2$, we now consider the stronger relaxation~\ref{eq:bcr-tree}.
We remark that any feasible solution to \ref{eq:bcr-tree} can be converted into a feasible solution to \ref{eq:undir-lp-tree} of the same cost by omitting the orientation of the edges, i.e., setting $x_{\{v,w\}} \coloneqq x_{(v,w)} + x_{(w,v)}$ for every edge $\{v,w\}\in E$.
It is well-known that the value of \ref{eq:bcr-tree} is independent of the choice of the root~$r$ \cite{GM93}.
The dual of \ref{eq:bcr-tree} is 
\begin{equation}\label{eq:dual-bcr-tree}\tag{Dual-BCR}
\begin{aligned}
        \max & & \sum_{\substack{U \subseteq V\setminus \{r\}: \\
        R\cap U\neq \emptyset}} y_U & \\
        \text{s.t.} &  & \sum_{U : e\in \delta^+(U)} y_U  &\ \leq\ c(e) & & \text{for all } e \in \overrightarrow{E}   \\
        & & y_U &\ \geq\ 0 & & \text{for all } U \subseteq V\setminus \{r\} \text{ with }  R\cap U\neq \emptyset.
\end{aligned}
\end{equation}
We observe that taking any solution $y$ to \ref{eq:dual-undir-lp-tree} and omitting the variables corresponding to sets~$U$ with $r\in U$ yields a feasible solution to \ref{eq:dual-bcr-tree}.
In particular, we can take $y$ to be the solution to \ref{eq:dual-undir-lp-tree} produced by the primal-dual algorithm (see Section~\ref{sec:primal-dual}) and 
let $\overline{y}$ be the solution to \ref{eq:dual-bcr-tree} resulting from $y$ by omitting the variables corresponding to sets~$U$ with $r\in U$.
Because $\mathcal{S}^t$ is a partition of the terminal set,  at any time $t$ we omit the dual variable corresponding to $U_S$ for exactly one set $S \in \mathcal{S}^t$.
Therefore, the value of the solution $\overline{y}$ to \ref{eq:dual-bcr-tree} is $\int_{0}^{t_{\max}}  \Bigl( |\mathcal{S}^t|-1\Bigr)\ dt$ and hence exactly equal to half the cost of a terminal MST.

If the cost of an optimum Steiner tree is significantly less than the cost of a terminal MST, then our solution $\overline{y}$ to \ref{eq:dual-bcr-tree} is good enough to prove that the integrality gap of \ref{eq:bcr-tree} is less than $2$.
The difficult case is when the cost of an optimum Steiner tree is almost equal to the cost of a terminal MST.
In this case we need to improve the construction of our dual solution.

Therefore, let us first consider the special case where our Steiner tree instance is an instance of the minimum spanning tree problem, i.e., the special case where every vertex is a terminal.
In this special case it is well-known that \ref{eq:bcr-tree} is integral \cite{E67}. One way to see this is as follows.
For two terminals $s,\tilde{s} $ we define $\tmerge(s,\tilde{s})$ to be the first time where $s$ and $\tilde{s}$ belong to the same part of the partition $\mathcal{S}^t$.
Then for a directed edge $(s,\tilde{s})\in \overrightarrow{E}$ with $s \neq r$ we have\footnote{
If $s= r$, there are no variables $\overline{y}_U$ for sets $U$ with $(s,\tilde{s})\in \delta^+(U)$ in \ref{eq:dual-bcr-tree} and hence the constraint corresponding to $(s,\tilde{s})$ can be ignored.}
\[
 \sum_{U : (s,\tilde{s})\in \delta^+(U)} \overline{y}_U \ =\ \tmerge(s,\tilde{s}) \ =\ \tfrac{1}{2}  \sum_{U : \{s,\tilde{s}\}\in \delta(U)} y_U  \ \le\ \tfrac{1}{2} \cdot c(\{s, \tilde{s}\}).
\]
Hence, in the special case where all vertices are terminals, we can simply multiply the dual solution $\overline{y}$ with a factor $2$ and maintain a feasible solution to \ref{eq:dual-bcr-tree}. 
This scaled-up dual solution $2 \overline{y}$ has a value equal to the cost of a terminal MST.

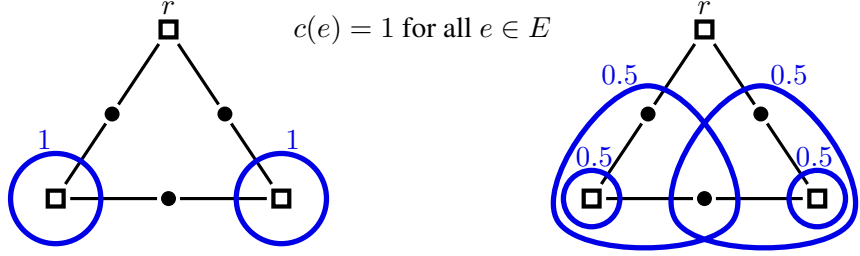
\begin{figure}
\begin{center}
\begin{tikzpicture}[scale=0.75]
\tikzset{terminal/.style={
ultra thick,draw,fill=none,rectangle,minimum size=0pt, inner sep=3pt, outer sep=2.5pt}
}
\tikzset{steiner/.style={
fill=black,circle,inner sep=0em,minimum size=0pt, inner sep=2pt, outer sep=1.5pt}
}

\tikzset{dual/.style={line width=2pt}}

\begin{scope}[every node/.style=terminal]
\node (s1) at (0,0) {};
\node (s2) at (4,0) {};
\node (r) at (2,3) {};
\end{scope}
\node[above=2pt] at (r) {$r$};

\begin{scope}[every node/.style=steiner]
\node (v1) at (1,1.5) {};
\node (v2) at (2,0) {};
\node (v3) at (3,1.5) {};
\end{scope}

\begin{scope}[very thick]
\draw (s1) -- (v1) -- (r) -- (v3) -- (s2) -- (v2) -- (s1);
\end{scope}

\begin{scope}[blue!90!black]
\draw[dual] (s1) ellipse (0.8 and 0.8);
\draw[dual] (s2) ellipse (0.8 and 0.8);
\node at (-0.2,1.05) {$1$};
\node at (4.2,1.05) {$1$};
\end{scope}

\node at (6.5,3) {$c(e)=1$ for all $e\in E$};

\begin{scope}[shift={(9.5,0)}]
\begin{scope}[every node/.style=terminal]
\node (s1) at (0,0) {};
\node (s2) at (4,0) {};
\node (r) at (2,3) {};
\end{scope}
\node[above=2pt] at (r) {$r$};

\begin{scope}[every node/.style=steiner]
\node (v1) at (1,1.5) {};
\node (v2) at (2,0) {};
\node (v3) at (3,1.5) {};
\end{scope}

\begin{scope}[very thick]
\draw (s1) -- (v1) -- (r) -- (v3) -- (s2) -- (v2) -- (s1);
\end{scope}

\begin{scope}[blue!90!black]
\draw[dual] (s1) ellipse (0.5 and 0.5);
\draw[dual] (s2) ellipse (0.5 and 0.5);
\node at (0.05,0.75) {$0.5$};
\node at (3.95,0.75) {$0.5$};
\draw [dual] plot [smooth cycle, tension=0.95] coordinates { (-0.6,-0.4) (1,2) (2.5,-0.4)};
\draw [dual] plot [smooth cycle, tension=0.95] coordinates { (1.5,-0.4) (3,2) (4.6,-0.4)};
\node at (0.5,2.2) {$0.5$};
\node at (3.5,2.2) {$0.5$};
\end{scope}
\end{scope}
\end{tikzpicture}
\caption{\label{fig:scaling_up_problem}
An instance of the Steiner tree problem that arises from a minimum-cost spanning tree problem by subdividing edges.
Terminals are shown as squares and Steiner nodes as circles; all edges have cost $1$. 
On the left, we see in blue the dual solution $\overline{y}$ resulting from the dual solution $y$ computed by the primal-dual algorithm by omitting the set containing the root $r$.
Scaling up $\overline{y}$ by a factor of $2$ does not yield a feasible dual solution to \ref{eq:bcr-tree} because the constraints corresponding to outgoing edges of the two terminals in $R\setminus \{r\}$ are already tight with respect to $\overline{y}$.
On the right, we see in blue the dual solution $z$ resulting from our dual growth process for \ref{eq:bcr-tree}. 
Scaling up $z$ by a factor of $2$ yields an optimum dual solution for \ref{eq:bcr-tree}. 
Notice that our construction does not provide a dual solution with laminar support.
}
\end{center}
\end{figure}

Let us next consider instances that arise from minimum spanning tree instances by subdividing edges, i.e., instances where every non-terminal vertex has degree two in the graph $G$ (before taking the metric closure).
Such instances are of course equivalent to minimum spanning tree instances.
However, for such an instance we cannot simply scale up the dual solution $\overline{y}$ by a factor $2$, as Figure~\ref{fig:scaling_up_problem} illustrates.
To obtain a solution $z$ to \ref{eq:dual-bcr-tree} with twice the value of $\overline{y}$ also on such instances, we consider the following dual growth procedure for \ref{eq:bcr-tree}, which on instances of the minimum spanning tree problem again yields the solution $2 \overline{y}$.

We construct a dual solution $z\in \mathbb{R}_{\ge 0}^{2^V}$.
We highlight that \ref{eq:dual-bcr-tree} has no variables $z_U$ for sets $U$ with $U\cap R = \emptyset$ or $r\in U$ and we therefore say that $z$ is a feasible solution to \ref{eq:dual-bcr-tree} only if these variables are equal to zero (and could thus be omitted).
It will nevertheless be convenient to keep these variables when explaining the construction of $z$.
To describe the construction we need the following definition.
\begin{definition}
For $\delta\ge 0$,  we say that an edge $e\in \overrightarrow{E}$ is \emph{$\delta$-tight} if
\[
\sum_{U\subseteq V: e\in \delta^+(U)} z_U = \tfrac{1}{1+\delta}\cdot c(e).
\]
\end{definition}
Let us fix $\delta=1$ for a moment.
We again consider the partition $\mathcal{S}^t$ of $R$ into the vertex sets of the connected components of the undirected graph $G^t[R]$. To construct $z$, we start from $z$ being the all-zero vector at time $t=0$.
Then at any time $t\in [0, t_{\max})$, for each $S\in {\cal S}^t$ with $r\notin S$ we grow the dual variable $z_{U_S}$ for
\[
U_S \coloneqq \bigl\{ v\in V : v \text{ is reachable from }S\text{ by }\delta\text{-tight edges in $\overrightarrow{E}$ at time } t\bigr\}.
\]
This yields a feasible dual solution $z$ of the same value as $\overline{y}$;
 see Figure~\ref{fig:scaling_up_problem} for an example.
However, in contrast to $2\overline{y}$, the vector $2z$ is a feasible solution to \ref{eq:dual-bcr-tree} (in our special case of instances arising from minimum spanning tree instances by subdividing edges).

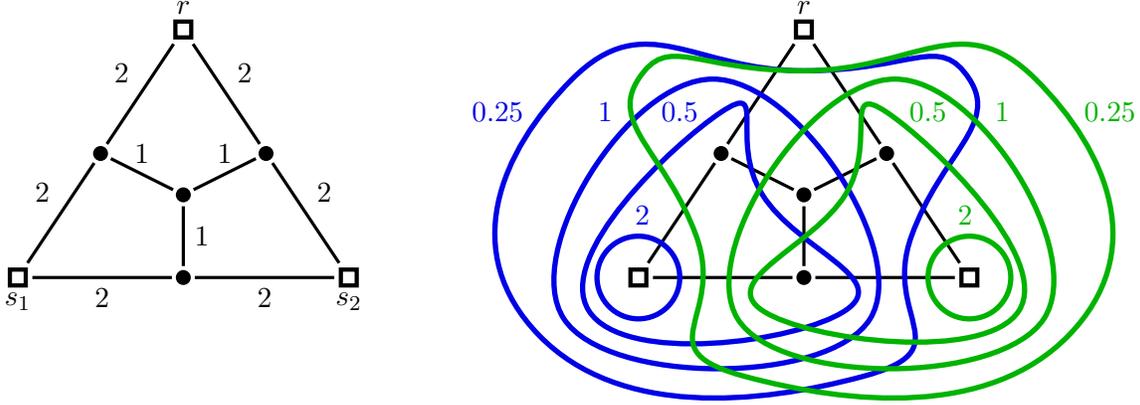
\begin{figure}
\begin{center}
\begin{tikzpicture}[scale=1.1]
\tikzset{terminal/.style={
ultra thick,draw,fill=none,rectangle,minimum size=0pt, inner sep=3pt, outer sep=2.5pt}
}
\tikzset{steiner/.style={
fill=black,circle,inner sep=0em,minimum size=0pt, inner sep=2pt, outer sep=1.5pt}
}

\tikzset{dual/.style={line width=2pt}}

\begin{scope}[every node/.style=terminal]
\node (s1) at (0,0) {};
\node (s2) at (4,0) {};
\node (r) at (2,3) {};
\end{scope}
\node[above=2pt] at (r) {$r$};
\node[below=2pt] at (s1) {$s_1$};
\node[below=2pt] at (s2) {$s_2$};

\begin{scope}[every node/.style=steiner]
\node (v1) at (1,1.5) {};
\node (v2) at (2,0) {};
\node (v3) at (3,1.5) {};
\node (v4) at (2,1) {};
\end{scope}

\begin{scope}[very thick]
\draw (s1) -- node[auto] {$2$} (v1);
\draw (v1) -- node[auto] {$2$} (r);
\draw (r) -- node[auto] {$2$} (v3);
\draw (v3) -- node[auto] {$2$} (s2);
\draw (s2) -- node[auto] {$2$} (v2);
\draw (v2) -- node[auto] {$2$} (s1);
\draw (v4) -- node[above] {$1$} (v1);
\draw (v4) -- node[auto] {$1$} (v2);
\draw (v4) -- node[above] {$1$} (v3);
\end{scope}

\begin{scope}[shift={(7.5,0)}]

\begin{scope}[every node/.style=terminal]
\node (s1) at (0,0) {};
\node (s2) at (4,0) {};
\node (r) at (2,3) {};
\end{scope}
\node[above=2pt] at (r) {$r$};

\begin{scope}[every node/.style=steiner]
\node (v1) at (1,1.5) {};
\node (v2) at (2,0) {};
\node (v3) at (3,1.5) {};
\node (v4) at (2,1) {};
\end{scope}

\begin{scope}[very thick]
\draw (s1) -- (v1) -- (r) -- (v3) -- (s2) -- (v2) -- (s1);
\draw (v4) -- (v1);
\draw (v4) -- (v2);
\draw (v4) -- (v3);
\end{scope}

\begin{scope}[blue!90!black]
\draw[dual] (s1) ellipse (0.5 and 0.5);
\node at (0.05,0.75) {$2$};
\draw [dual] plot [smooth cycle, tension=1] coordinates { (-0.6,-0.4) (1,2) (1.5,1) (2.5,-0.4)};
\node at (0.5,2) {$0.5$};
\draw [dual] plot [smooth cycle, tension=1] coordinates { (-0.9,-0.5) (0.9,2.4) (2.8,-0.5)};
\node at (-0.4,2) {$1$};
\draw [dual] plot [smooth cycle, tension=1] coordinates { (-1.2,-0.7) (-0.8,2.4) (2,2.5) (4,2.4) (3.3,0.5) (2.7,-1.2)};
\node at (-1.7,2) {$0.25$};
\end{scope}

\begin{scope}[green!70!black]
\draw[dual] (s2) ellipse (0.5 and 0.5);
\node at (3.95,0.75) {$2$};
\draw [dual] plot [smooth cycle, tension=0.95] coordinates { (1.5,-0.4) (2.5,1) (3,2)  (4.6,-0.4)};
\node at (3.5,2) {$0.5$};
\draw [dual] plot [smooth cycle, tension=1] coordinates { (4.9,-0.5) (3.1,2.4) (1.2,-0.5)};
\node at (4.4,2) {$1$};
\draw [dual] plot [smooth cycle, tension=1] coordinates { (5.2,-0.7) (4.8,2.4) (2,2.5) (0,2.4) (0.7,0.5) (1.3,-1.2)};
\node at (5.7,2) {$0.25$};
\end{scope}

\end{scope}
\end{tikzpicture}
\vspace*{-2mm}
\caption{\label{fig:cannot_grow_factor_2}
The left part of the figure shows a known instance of the Steiner tree problem where \ref{eq:bcr-tree} is not integral; see e.g.\ \cite{Vicari20}.
The numbers next to the edges show the edge costs and terminals are shown as squares.
A terminal MST (in the metric closure of the instance) has cost $8$ and this is also the cost of an optimum Steiner tree.
On the right we see the dual solution $(1+\delta)z$ that our dual growth procedure yields for $\delta=\frac{7}{8}$, which in this example is an optimum dual solution.
In this example, for every time $t\in [0, t_{\max}) = [0,2)$, the partition $\mathcal{S}^t$ consists of all singleton sets $\{s\}$ with $s\in R$.
The blue sets are sets $U_{\{s_1\}}$  (at different times of the algorithm) and the green ones sets $U_{\{s_2\}}$.
For $\delta > \frac{7}{8}$, the sets $U_{\{s_1\}}$ and $U_{\{s_2\}}$ would for some time $t < t_{\max}$ contain the root vertex $r$ .
}
\end{center}
\end{figure}

Let us now consider general instances of the Steiner tree problem where the cost of an optimum Steiner tree is equal to the cost $\mst(G[R])$ of a terminal MST.
One can still apply the above construction of $z$ to such instances.
However, for $\delta=1$ this might result in growing variables $z_U$ for sets~$U$ containing the root~$r$ and hence does not lead to a feasible solution of \ref{eq:dual-bcr-tree}. 
See Figure~\ref{fig:cannot_grow_factor_2} for an example.
However, for a sufficiently small but positive value $\delta$, we obtain a vector $z$ for which $(1+\delta)z$ is a feasible solution to \ref{eq:dual-bcr-tree} of value $\tfrac{1+\delta}{2}\cdot \mst(G[R])$ (see Lemma~\ref{lem:dual_feasible} and Section~\ref{sec:outline_analysis}). 
Proving this is the main technical challenge in our analysis.

\subsection{Contracting components}
\label{sec:contracting_components}

So far we considered the setting where the cost of an optimum Steiner tree is equal to the cost of a terminal MST.
However, we also need to be able to improve the dual solution $\overline{y}$ if this is only approximately the case, i.e., if the cost of a terminal MST is at most $1+\gamma$ times the cost $\opt$ of an optimum Steiner tree for some small constant $\gamma >0$.
In this case, we might not be able to directly apply the above construction for any $\delta >0$, as illustrated in Figure~\ref{fig:example_local_improvement}.
Another example is given in Figure~\ref{fig:why_prima_dual_does_not_work}, which also illustrates why simply not growing dual variables for sets containing the root (and otherwise proceeding as before) does not yield sufficiently good dual solutions.

To address this issue, we proceed as follows.
We fix a sufficiently small but positive constant $\delta$.
Then we prove that if a terminal MST is not only globally almost optimal, i.e., $\mst(G[R]) < (1+\gamma)\cdot  \opt$, but in a certain sense also ``locally almost optimal'', then our dual growing procedure yields a feasible solution to \ref{eq:dual-bcr-tree} of value  $\tfrac{1+\delta}{2}\cdot \mst(G[R])$.
To explain what we mean by a terminal MST being ``locally almost optimal'', we need the notion of components.
We say that a \emph{component} is any connected subgraph of~$G$ and we say that a component~$K$ \emph{connects} a vertex set~$X$ if all vertices of~$X$ are vertices of~$K$.\footnote{ 
We highlight that our notion of components is more general than the notion of full components used in prior work.
We also do not have any restriction on the size of components. 
In particular, the whole graph and an optimum Steiner tree are examples of components.
}
If a component~$K$ connects a set~$X$ of terminals, then we can obtain a Steiner tree by computing a minimum terminal spanning tree~$T$ in the graph $G[R]/X$ (resulting from $G[R]$ by contracting $X$)  and taking the union of $T$ and the component $K$ (possibly omitting some superfluous edges).
This Steiner tree has cost at most $\mst(G[R]/X) + c(K)$, where $c(K)$ denotes the total cost of the edges of the component $K$.
We write 
\[
\drop(X) \coloneqq \mst(G[R]) - \mst(G[R]/X) 
\]
to denote the decrease of the cost of a terminal MST when contracting~$X$, i.e., the total cost of the edges that we can drop from a terminal MST when contracting $X$.
Now we consider an instance to be ``locally almost optimal'' if for no component $K$ connecting $X$, the cost $c(K)$ of the component is significantly less than $\drop(X)$.
This is captured by the following definition.
\begin{definition}
Let $\gamma > 0$.
An instance of the Steiner tree problem is \emph{locally $\gamma$-MST-optimal} if  for every component $K$ connecting a set $X$ of terminals, we have
$\drop(X) < (1+\gamma)\cdot c(K)$.
\end{definition}
Note that this is a stronger condition than asking for a MST being ``globally almost optimal'', i.e., $\mst(G[R])< (1+\gamma) \cdot \opt$.
For example, the instances shown in Figure~\ref{fig:example_local_improvement} and Figure~\ref{fig:why_prima_dual_does_not_work} are globally almost optimal, but not  locally $\gamma$-MST-optimal.
The main technical statement we prove in this paper is the following theorem.

\begin{figure}
\begin{center}

\begin{tikzpicture}[scale=0.85]
\tikzset{terminal/.style={
ultra thick,draw,fill=none,rectangle,minimum size=0pt, inner sep=3pt, outer sep=2.5pt}
}
\tikzset{steiner/.style={
fill=black,circle,inner sep=0em,minimum size=0pt, inner sep=2pt, outer sep=1.5pt}
}

\tikzset{dual/.style={line width=2pt}}

\def\rad{2}
\def\numk{5}
\def\numn{12}

\node[steiner] (v) at (0.5,0) {};
\node[terminal] (r) at (2.5,0) {};
 \draw[very thick] (r) -- (v);
 
 \node[above left] at (r) {$r$};
 
\begin{scope}[every node/.style={terminal}]
\foreach \i in {1,...,\numk} {
  \pgfmathsetmacro\r{90+(\i-0.5)*360/(2*\numk)}
      \node (s\i) at (\r:\rad) {};
      \draw[very thick] (s\i) -- (v);
}
\end{scope}

 \node[above left] at (s1) {$s_1$};
 \node[below left] at (s\numk) {$s_k$};

\begin{scope}[every node/.style={terminal}, shift={(3,0)}]
\foreach \i in {1,...,\numn} {
  \pgfmathsetmacro\r{-90+(\i-0.5)*360/(2*\numn)}
      \node (t\i) at (\r:\rad) {};
      \draw[very thick] (t\i) -- (r);
}
\end{scope}
 \node[above left] at (t\numn) {$\overline{s}_1$};
 \node[below left] at (t1) {$\overline{s}_q$};

\end{tikzpicture}
\caption{\label{fig:example_local_improvement}
Consider the Steiner tree instance $(G,R)$ that arises as the metric closure of the depicted graph where all shown edges have cost $1$.
The terminal set $R$ is shown by squares.
Then $\opt=k+q+1$ and $\mst(G[R]) = 2k + q$.
For any $\gamma > 0$, we can choose $q$ large enough compared to $k$ to obtain $\mst(G[R]) < (1+\gamma) \cdot \opt$.
However, if we run our dual growth procedure for any value of $\delta > \frac{1}{k}$, we would grow dual variables corresponding to sets $U_S$ containing the root $r$ (for $S=\{s_i\}$ with $i\in \{1,\dots,k\}$).
}
\end{center}
\end{figure}
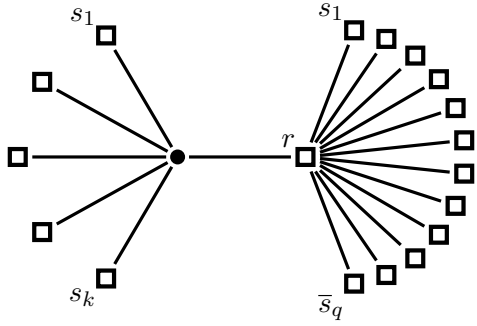

\begin{figure}
\begin{center}

\begin{tikzpicture}[xscale=1.3, yscale=0.9]
\tikzset{terminal/.style={
ultra thick,draw,fill=none,rectangle,minimum size=0pt, inner sep=3pt, outer sep=2.5pt}
}
\tikzset{steiner/.style={
fill=black,circle,inner sep=0em,minimum size=0pt, inner sep=2pt, outer sep=1.5pt}
}

\tikzset{dual/.style={line width=2pt}}

\def\numk{4}
\def\numn{7}
\def\numm{6}

\node[terminal] (s1) at (2,2) {};
\node[steiner] (v1) at (2,1) {};
\draw[very thick] (s1) -- (v1);

\node[terminal] (s\numn) at (6,2) {};
\node[steiner] (v\numn) at (6,1) {};
\draw[very thick] (s\numn) -- (v\numn);

 \node[above=2pt] at (s1) {$s_1$};
 \node[above=2pt] at (s\numn) {$s_q$};
  \node at (4,1.5) {$\cdots$};
 \node[left=2pt] at (v1) {$v_1$};
 \node[right=2pt] at (v\numn) {$v_q$};

\foreach \i in {1,2,\numk } {
  \pgfmathsetmacro\z{ ( \numn -  \numk )  / 2 + \i +  0.5}
      \node[terminal] (t\i) at (\z, -1) {};
}

\node[below=2pt] at (t1) {$\tilde{s}_{1}$};
\node[below=2pt] at (t4) {$\tilde{s}_{k}$};
\node at (5,-1) {$\dots$};

\node[terminal] (t0) at (2.1, -1) {};
\node[below=5pt] at (t0) {$r$};

\foreach \i in {0,1,2,\numk} {
       \draw[very thick] (v1) -- (t\i);
       \draw[very thick] (v\numn) -- (t\i);
}

\begin{scope}[shift={(7,0)}]

\node[steiner] (v1) at (1,1) {};

\foreach \i in {2,...,\numm} {
       \node[terminal] (s\i) at (\i,2) {};
       \node[steiner] (v\i) at (\i,1) {};
       \draw[very thick] (s\i) -- (v\i);
}

 \node[above=2pt] at (s2) {$s_2$};
 \node[above=2pt] at (s\numm) {$s_q$};
 \node[left=2pt] at (v1) {$v_1$};
 \node[right=2pt] at (v\numm) {$v_q$};

\node[terminal] (t0) at (3.5, -1.5) {};
\node[below=5pt] at (t0) {$r$};

\foreach \i in {1,...,\numm} {
       \draw[very thick] (v\i) -- (t0);
}

\end{scope}

\end{tikzpicture}
\caption{\label{fig:why_prima_dual_does_not_work}
Consider the Steiner tree instance $(G,R)$ that arises as the metric closure of the left depicted graph where all shown edges have cost $1$.
The terminal set $R$ is shown by squares.
Then $\opt=k+2q$ and $\mst(G[R]) = 2k + 2q$.
For any $\gamma > 0$, we can choose $q$ large enough compared to $k$ to obtain $\mst(G[R]) < (1+\gamma) \cdot \opt$.
However, if we run our dual growth procedure for any value of $\delta > \frac{1}{k+1}$, we would grow dual variables corresponding to sets $U_S$ containing the root $r$.\\
Note that here, for any time $t \ge \frac{k+2}{(k+1) \cdot (1+\delta)}$, even all dual variables $U_S$ we grow would contain the root $r$. 
Thus, if we stopped growing those dual variables containing the root (or simply set them to zero at the end) to ensure feasibility, then the dual solution we obtain is not sufficiently good to prove an integrality gap below two. (We obtain a lower bound of $\frac{k+2}{k+1}\cdot (k+q)$ on the value of \ref{eq:bcr-tree} and can choose $k$ and $\frac{q}{k}$ arbitrarily large.)\\
We address this issue by first identifying an improving component to contract and only then applying our dual growth procedure.
For example, we might consider a component $K$ with vertex set $\{s_1, v_1, r, \tilde{s}_1, \dots, \tilde{s}_k\}$. Then contracting the set $X$ of terminals connected by $K$, we obtain the (metric closure of) the instance shown on the right.
}
\end{center}
\end{figure}

\begin{theorem}\label{thm:main_dual}
Let $\gamma =\gammaval$ and $\delta=\deltaval$.
Then for any locally $\gamma$-MST-optimal instance $(G,R)$ of the Steiner tree problem, the value of \ref{eq:bcr-tree} is at least $\tfrac{1+\delta}{2} \cdot \mathrm{mst}(G[R])$.
\end{theorem}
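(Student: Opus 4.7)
The plan is to construct an explicit feasible dual solution to \ref{eq:dual-bcr-tree} of value at least $\tfrac{1+\delta}{2} \cdot \mst(G[R])$; by LP duality this implies the theorem. I would take this dual solution to be $(1+\delta) z$, where $z$ is produced by the dual growth procedure of Section~\ref{sec:dual_growing}: starting from $z = 0$ at time $t=0$, for every time $t \in [0, t_{\max})$ and every $S \in \mathcal{S}^t$ with $r \notin S$, continuously grow $z_{U_S(t)}$ at unit rate, where $U_S(t)$ is the set of vertices reachable from $S$ by $\delta$-tight arcs of $\overrightarrow{E}$ at time $t$.

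Feasibility of $(1+\delta) z$ with respect to the arc constraints is essentially automatic from the definition: the moment an arc $(u,v)$ becomes $\delta$-tight, any currently growing set $U_S(t)$ containing $u$ must also contain $v$ by reachability, so $(u,v) \notin \delta^+(U_S(t))$ and the LHS of the corresponding constraint stays at $c(u,v)/(1+\delta)$. For the value of $z$, at every time $t$ exactly $|\mathcal{S}^t| - 1$ variables grow at unit rate (one per non-root component of $\mathcal{S}^t$), so
\[
\sum_U z_U \ =\ \int_0^{t_{\max}} \bigl(|\mathcal{S}^t| - 1\bigr)\, dt \ =\ \tfrac{1}{2} \cdot \mst(G[R]),
\]
and multiplying by $1+\delta$ yields the claimed value, provided that $z_U = 0$ for every $U$ with $r \in U$, so that these variables are legal variables of \ref{eq:dual-bcr-tree}.

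This last condition is the main obstacle and the only place where local $\gamma$-MST-optimality enters. I would argue by contradiction: if at some time $t^* \in [0, t_{\max})$ there is a component $S^* \in \mathcal{S}^{t^*}$ with $r \notin S^*$ but $r \in U_{S^*}(t^*)$, then there is a dipath of $\delta$-tight arcs from some $s \in S^*$ to $r$, and more generally a connected subgraph $K$ of $\delta$-tight arcs joining $r$ to a set $X$ of terminals with $X \supseteq S^* \cup \{r\}$. I would then extract an improving component from $K$: on one hand, unfolding the $\delta$-tightness equalities and charging each $z_U$ to the arcs of $K$ it pays for should yield an upper bound of the form $c(K) \leq (1+\delta) \cdot A(t^*, X)$; on the other hand, since any two terminals in distinct components of $\mathcal{S}^{t^*}$ are separated in the terminal MST by an edge of cost strictly greater than $2 t^*$, one can lower-bound $\drop(X)$ by a quantity $B(t^*, X)$ measuring the savings in the MST from contracting $X$. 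The numerical choices $\gamma = \gammaval$ and $\delta = \deltaval$ should be calibrated so that $B \geq (1+\gamma)(1+\delta) \cdot A$, hence $\drop(X) \geq (1+\gamma) \cdot c(K)$, contradicting local $\gamma$-MST-optimality.

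The hardest part will be the bookkeeping in this charging argument: the $\delta$-tight subgraph at $t^*$ is typically not a single path but a complicated structure that may bring together many components of $\mathcal{S}^{t^*}$ through Steiner nodes, several growing sets $U_S$ may contribute to the dual sum on a single arc, and the directedness of arcs means one has to be careful about which cuts $\delta^+(U)$ actually count. One must therefore choose both $X$ and a connecting subgraph $K$ so that the charging is tight in both directions, which forces a rather delicate case analysis. Once this structural inequality is in place, verifying that the specific constants $\gammaval$ and $\deltaval$ suffice is a numerical check that I would carry out only at the very end.
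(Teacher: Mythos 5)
Your high-level plan matches the paper's: run a $\delta$-tight dual growth procedure for \ref{eq:dual-bcr-tree}, note that $(1+\delta)z$ would certify the bound if only one could show $z_U=0$ whenever $r\in U$, and invoke local $\gamma$-MST-optimality to establish this. The dual growth procedure, the value computation, and the observation that arc-constraint feasibility is automatic are all correct and essentially identical to Section~\ref{sec:construction_dual_solution}.

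The genuine gap is exactly the part you defer to ``bookkeeping,'' and it is not a matter of careful charging: the direct contradiction argument you sketch does not work in the form proposed. Picking the first time $t^*$ at which $r\in U_{S^*}(t^*)$, extracting a $\delta$-tight subgraph $K$ joining $r$ to some terminals, and hoping for a clean inequality $c(K)\le(1+\delta)A$ against $\drop(X)\ge B$ with $B\ge(1+\gamma)(1+\delta)A$ fails for two reasons. First, several growing sets $U_S$ contribute to the same arc, so the factor lost on each arc is not a fixed $(1+\delta)$; the ``extra'' contribution comes from sets $S'$ that do not contain the start terminal, and controlling it requires structural information about which such sets can co-contribute. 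Second, the component one would want to exhibit is not a subgraph of $\delta$-tight arcs at time $t^*$: one needs to splice in short paths from other terminal groups $S'$ to intermediate vertices of the tight path, and the fact that these splices are short is itself something that must be proved. The paper handles both issues by proving a considerably stronger statement (Claim~\ref{claim:main}) by a nested induction over time and over path length: the inductive hypothesis carries along not just a length bound but an explicit component $K$, a drop certificate, and merge-time bounds (the three inequalities in~\eqref{item:strong_existing_component}). This is what makes the splicing and the charging close. Additional machinery is needed to make the induction go through: the contributing sets are shown to decompose into chains $\mathcal{C}^1,\dots,\mathcal{C}^k$ with geometrically separated merge times (Lemmas~\ref{lem:different_merge_times}--\ref{lem:chain_structure}), the path is split so that only the last chain contributes to the tail (Lemmas~\ref{lem:only_last_chain_contributes_to_new_subpath}, \ref{lem:lower_bound_meeting_time}), and the last chain's contribution is bounded via a potential function (Section~\ref{sec:single_chain}). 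None of this is suggested by your ``unfold the tightness equalities and charge'' description, and without some replacement for it the contradiction does not materialize. In short, the setup is right but the proof of Lemma~\ref{lem:dual_feasible} — the actual theorem content — is missing.
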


To prove Theorem~\ref{thm:main_dual}, we will use the dual growing procedure described in Section~\ref{sec:dual_growing}.
We provide a brief overview of how we analyze this procedure in Section~\ref{sec:outline_analysis}.
To see that Theorem~\ref{thm:main_dual} implies the desired bound on the integrality gap of \ref{eq:bcr-tree}, we proceed as follows.
Consider an instance $\Iscr$ of the Steiner tree problem.
If there exists a component $K$ connecting a terminal set $X$ with $\drop(X) \ge (1+\gamma)\cdot c(K)$, we contract $X$ and iterate until we obtain a  locally $\gamma$-MST-optimal instance~$\Iscr'$.\footnote{
This is not a polynomial-time algorithm, but this is not important for proving an upper bound on the integrality gap of \ref{eq:bcr-tree}.
We remark that it is also possible to work with a polynomial-time algorithm instead and we refer to Section~\ref{sec:steiner_tree} for details.
}
If the cost of a terminal MST decreased by a lot through these iterative contractions, then we obtain a Steiner tree for the original instance~$\Iscr$ that is significantly cheaper than the cost of a terminal MST: this immediately implies an integrality gap smaller than $2$. Otherwise the cost of a terminal MST in the locally $\gamma$-MST-optimal instance $\Iscr'$ arising from the iterative contractions is almost the same as the cost of a terminal MST in the original instance~$\Iscr$.
Because the value of \ref{eq:bcr-tree} can only decrease by contracting terminals, the lower bound on the value of \ref{eq:bcr-tree} for~$\Iscr'$ is also a lower bound on the value of  \ref{eq:bcr-tree} for~$\Iscr$, implying that a terminal MST in the instance $\Iscr$ has cost
roughly $\frac{2}{1+\delta}$ times the value of \ref{eq:bcr-tree}, i.e., significantly less than twice the value  of \ref{eq:bcr-tree}.
We refer to the proof of Theorem~\ref{thm:main} in Section~\ref{sec:construction_tree_and_dual_solution} for details.

\subsection{Proving feasibility of our dual solution to BCR}\label{sec:outline_analysis}

We now provide an overview of how we prove our main technical statement, Theorem~\ref{thm:main_dual}.
We construct a solution to \ref{eq:dual-bcr-tree} of value $\frac{1+\delta}{2} \cdot \mst(G[R])$ as follows.
Given a locally $\gamma$-MST-optimal instance $(G,R)$ of the Steiner tree problem, we apply the dual growth procedure explained in Section~\ref{sec:dual_growing}.
This yields a vector $z\in \mathbb{R}_{\ge 0}^{2^V}$ that satisfies the constraints of \ref{eq:dual-bcr-tree} and where $\sum_{U\subseteq V} z_U = \frac{1+\delta}{2} \cdot \mst(G[R])$.
In order to obtain the desired solution to \ref{eq:dual-bcr-tree} of value $\frac{1+\delta}{2} \cdot \mst(G[R])$, the main difficulty is to prove $z_U= 0$ for all sets $U$ containing the root vertex $r$.
In other words, we need to show that whenever at time $t$ we grow a dual variable $z_{U_S}$ for a set $S\in \mathcal{S}^t$, we have $r\notin U_S$.

To this end, we prove that at time $t$ for any set $S\in \mathcal{S}^t$ with $r\notin S$, all vertices in the set $U_S$ have distance at most $(1+\beta\delta)\cdot t$ from $S$, where $\beta \coloneqq \betaval$.
Because the root $r$ is in a different part of the partition $\mathcal{S}^t$ than the terminals in $S$, the distance of $r$ to each terminal in $S$ is at least $2t$, i.e., we have 
\[
c(r,S)\ \coloneqq\ \min_{s\in S} c(r,s)\ \ge\ 2t \ >\ (1+\beta\delta)\cdot t
\]
 and we can thus conclude $r\notin U_S$.

To prove the claimed upper bound of $(1+\beta\delta)\cdot t$ on the distance $c(u,S)$ of vertices $u\in U_S$ from the terminal set $S$, we proceed as follows.
If $u\in U_S$, this means that there exists a path of $\delta$-tight edges from a terminal $s\in S$ to a vertex $u$.
There might be several such paths and we consider one such path $P$ with a particular property (which we call \emph{$S$-tight}; see Section~\ref{sec:S-tight}) and upper bound its length, i.e, the total cost of its edges.

To upper bound the length of $P$, we first analyze the structure of the terminal sets $S$ that ``contributed'' to~$P$.
We say that a terminal set $S$ \emph{contributed} to a $\delta$-tight edge $e$ if some dual variable $z_{U_S}$ contributed to the $\delta$-tightness of the edge $e$, i.e., if we have grown a set $U_S$ corresponding to $S$ with $e\in\delta^+(U_S)$ (at some earlier time).
Then we say that the sets \emph{contributing to the path $P$} are the sets contributing to some edge of~$P$.
\begin{figure}
\begin{center}
\begin{tikzpicture}[very thick, yscale=0.4, xscale=0.9]
\tikzset{terminal/.style={
ultra thick,draw,fill=none,rectangle,minimum size=0pt, inner sep=3pt, outer sep=2.5pt}
}
\tikzset{steiner/.style={
fill=black,circle,inner sep=0em,minimum size=0pt, inner sep=2pt, outer sep=1.5pt}
}

\node at (-2,6) {(a)};
\node[terminal] (s) at (0.3,2.2) {};
\node[below=2pt] at (s) {$s$};

\begin{scope}[densely dashed]
\draw (0.3,2) ellipse (0.5 and 1);
\draw (1.5,2) ellipse (2 and 2.3);
\draw (1.8,2) ellipse (2.6 and 3.1);
\draw (3.5,2) ellipse (5 and 4.1);
\end{scope}

\begin{scope}[blue!90!black]
\draw (2,2.2) ellipse (0.3 and 0.3);
\draw (2,2.2) ellipse (0.5 and 0.6);
\draw (2,2.2) ellipse (0.7 and 0.9);
\node at (2.5,1) {$\mathcal{C}^1$};
\end{scope}
\begin{scope}[red!80!black]
\draw (6,2.2) ellipse (0.5 and 0.6);
\draw (6,2.2) ellipse (0.8 and 1.2);
\draw (6,2.2) ellipse (1.2 and 1.6);
\node at (7,0.6) {$\mathcal{C}^2$};
\end{scope}
\begin{scope}[green!70!black]
\draw (11,2.5) ellipse (0.6 and 0.8);
\draw (11,2.5) ellipse (1.3 and 2);
\draw (11,2.5) ellipse (1.9 and 3);
\node at (12.5,-0.5) {$\mathcal{C}^3$};
\end{scope}

\node at (-2,-3.7) {(b)};

\begin{scope}[shift={(1.5,-1.7)}, xscale=1.2]
\node[terminal] (start) at (-1.5, -4) {};
\node[left=3pt] at (start) {$s$};
\node[inner sep=0pt, outer sep=0pt] (a1) at (0.3,-4) {};
\node[inner sep=0pt, outer sep=0pt] (a2) at (1,-4) {};
\node[inner sep=0pt, outer sep=0pt] (b1) at (3,-4) {};
\node[inner sep=0pt, outer sep=0pt] (b2) at (4,-4) {};
\node[inner sep=0pt, outer sep=0pt] (c1) at (7,-4) {};

\node (c2) at (8.5,-4) {};

\begin{scope}[->, >=latex, ultra thick]
\draw[densely dashed] (start) to (a1);
\draw[blue!90!black] (a1) to node[above=2pt] {$P_1$} (a2);
\draw[densely dashed] (a2) to (b1);
\draw[red!80!black] (b1) to node[above=2pt] {$P_2$} (b2);
\draw[densely dashed] (b2) to (c1);
\draw[green!70!black] (c1) to node[above=2pt] {$P_3$} (c2);
\end{scope}
\end{scope}
\end{tikzpicture}
\caption{\label{fig:example_contributing_sets}
The top part (a) of the figure shows the terminal sets that contributed to an $S$-tight path $P$ that starts at a terminal $s\in S$.
The sets $S$ with $s\in S$ are shown in black and dashed.
All other sets contributing to $P$ can be partitioned into chains $\mathcal{C}^1$, $\mathcal{C}^2$, and $\mathcal{C}^3$, where any two sets belonging two different chains $\mathcal{C}^i$ are disjoint.
The bottom part (b) of the figure illustrates the path $P$ with subpaths $P_1$, $P_2$, and $P_3$.
The sets from a chain $\mathcal{C}^i$ contribute only on edges of $P_i$.
The sets containing the terminal $s$ can contribute to any edge of $P$ (including the edges of $P_1$, $P_2$, and~$P_3$).
}
\end{center}
\end{figure}
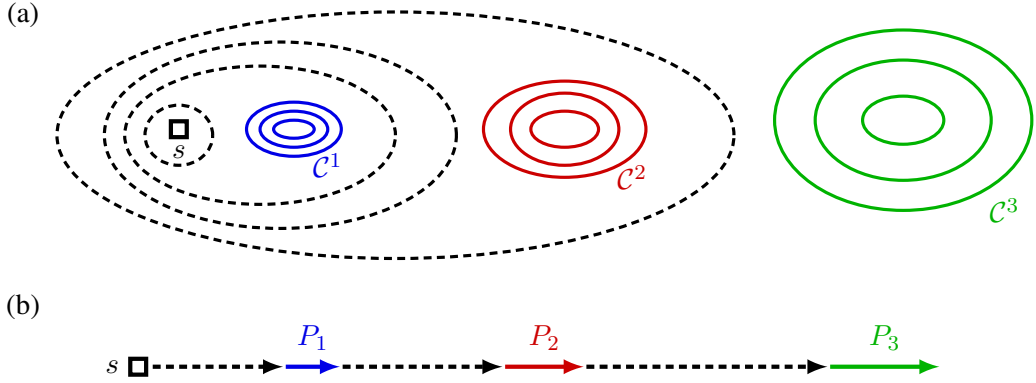

Among these sets that contributed to the path $P$ we distinguish between sets containing the terminal~$s$ and other sets.
If all the sets that contributed to the path $P$ contain the terminal~$s$, it is relatively easy to show that the length of $P$ is at most $(1+\delta)\cdot t$ (see Section~\ref{sec:base_contribution}).
The difficulty is to analyze the contribution of the sets not containing~$s$.
The first step of our analysis is to show that the collection of sets that contributed to~$P$ but do not contain~$s$ is highly structured.
It can be partitioned into chains $\mathcal{C}^1, \dots, \mathcal{C}^k$ such that any two sets from different chains $\mathcal{C}^i$ are disjoint (see Section~\ref{sec:structure_contributing} and the top part of Figure~\ref{fig:example_contributing_sets} for an illustration).
Moreover, we can show that each chain $\mathcal{C}^i$ contributes only on a subpath of the path $P$ and these subpaths for different chains are disjoint\footnote{In our analysis, we will actually allow these subpaths to overlap in one single edge, which by some preprocessing step subdividing edges will have negligible length.} (see Section~\ref{sec:separating} and the bottom part of Figure~\ref{fig:example_contributing_sets} for an illustration).
This  structure allows us to analyze the contribution of every chain separately, which we do using a carefully chosen potential function (see Section~\ref{sec:single_chain}).
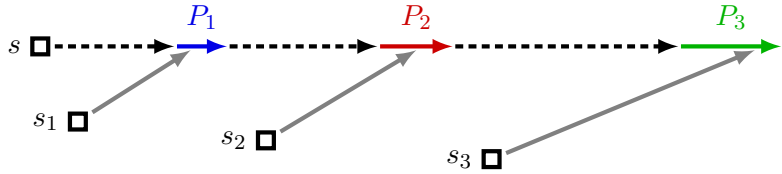
\begin{figure}
\begin{center}
\begin{tikzpicture}[very thick, yscale=0.5, xscale=1]
\tikzset{terminal/.style={
ultra thick,draw,fill=none,rectangle,minimum size=0pt, inner sep=3pt, outer sep=2.5pt}
}
\tikzset{steiner/.style={
fill=black,circle,inner sep=0em,minimum size=0pt, inner sep=2pt, outer sep=1.5pt}
}

\node[terminal] (start) at (-1.5, -4) {};
\node[left=3pt] at (start) {$s$};
\node[terminal] (s1) at (-1, -6) {};
\node[left=3pt] at (s1) {$s_1$};
\node[terminal] (s2) at (1.5, -6.5) {};
\node[left=3pt] at (s2) {$s_2$};
\node[terminal] (s3) at (4.5, -7) {};
\node[left=3pt] at (s3) {$s_3$};

\node[inner sep=0pt, outer sep=0pt] (a1) at (0.3,-4) {};
\node[inner sep=0pt, outer sep=0pt] (a2) at (1,-4) {};
\node[inner sep=0pt, outer sep=0pt] (b1) at (3,-4) {};
\node[inner sep=0pt, outer sep=0pt] (b2) at (4,-4) {};
\node[inner sep=0pt, outer sep=0pt] (c1) at (7,-4) {};
\node (c2) at (8.5,-4) {};

\coordinate (m1) at (0.5,-4.1);
\coordinate (m2) at (3.5,-4.1);
\coordinate (m3) at (8,-4.1);

\begin{scope}[->, >=latex, ultra thick]
\draw[densely dashed] (start) to (a1);
\draw[blue!90!black] (a1) to node[above=2pt] {$P_1$} (a2);
\draw[densely dashed] (a2) to (b1);
\draw[red!80!black] (b1) to node[above=2pt] {$P_2$} (b2);
\draw[densely dashed] (b2) to (c1);
\draw[green!70!black] (c1) to node[above=2pt] {$P_3$} (c2);
\draw[gray] (s1) to (m1);
\draw[gray] (s2) to (m2);
\draw[gray] (s3) to (m3);
\end{scope}

\end{tikzpicture}
\caption{\label{fig:example_building component}
An illustration of the component $K$ that we construct for the path $P$ from Figure~\ref{fig:example_contributing_sets}.
We add the grey paths from terminals $s_1$, $s_2$, and $s_3$ to vertices of $P_1$, $P_2$, and $P_3$, respectively.
Every vertex $s_i$ is contained in a set $S_i$ from the chain $\mathcal{C}^i$.
}
\end{center}
\end{figure}

Finally, let us comment on how we make use of the fact that our Steiner tree instance is locally $\gamma$-MST-optimal in this analysis.
We prove the claimed length bound on $P$ by an induction over time.
For every path~$P$ whose length we analyze, we construct a component $K$ connecting the endpoints of $P$ (one of which is a terminal) and possibly further terminals.
We do this by considering some sets $S'$ contributing to an edge $(v,w)$ of the path~$P$.
We observe that $v$ was contained in $U_{S'}$ at some earlier time (when $S$ contributed to~$e$ by growing $z_{U_{S'}}$) and this allows us to use the induction hypothesis to show that the distance of  $S'$ to~$v$ is small.
We can now add a short path from $S'$ to $v$ to our component. See Figure~\ref{fig:example_building component} for an illustration.

Together with the component~$K$ we also construct a dual certificate that yields a lower bound on $\drop(X)$ for the set~$X$ of terminals connected by $K$ (see the notion of \emph{drop certificates} introduced in  Section~\ref{sec:drop_certificates}).
Intuitively, whenever some set $S'$ not containing $s$ (or more precisely a chain $\mathcal{C}^i$) contributes to the path $P$, this allows us to improve the component $K$ we obtain.
Because our Steiner tree instance is locally $\gamma$-MST-optimal, we know that any component $K$ connecting a terminal set~$X$ we obtain in this way cannot be ``too good'', i.e., we know that $(1+\gamma)\cdot c(K) > \drop(X)$.
This allows us to limit the total contribution of sets $S'$ not containing $s$ to the path $P$.

The details of our analysis are given in Sections  \ref{sec:feasibility_dual} and \ref{sec:distance_bound}.

\section{Constructing the Steiner tree and the dual solution to BCR}
\label{sec:construction_tree_and_dual_solution}

In this section we describe in detail how we construct a Steiner tree and a dual solution for \ref{eq:bcr-tree} whose cost differ by at most a factor $\gapval$.
We first describe the construction of a good Steiner tree and prove how our main result (Theorem~\ref{thm:main}) follows from Theorem~\ref{thm:main_dual} in Section~\ref{sec:steiner_tree}.
Then in Section~\ref{sec:construction_dual_solution} we describe the construction of our dual solution that we will use to prove Theorem~\ref{thm:main_dual}.

\subsection{Constructing a good Steiner tree}\label{sec:steiner_tree}

We now show how the bound on the integrality gap for locally $\gamma$-MST-optimal Steiner tree instances from Theorem~\ref{thm:main_dual} implies the upper bound on the integrality gap of  \ref{eq:bcr-tree} in the general case.

\mainthm*
\begin{proof}We provide an (exponential-time) procedure to construct a Steiner tree of cost at most $\gapval$ times the value of \ref{eq:bcr-tree}.
We are given an instance $(G,R)$ of the Steiner tree problem, where without loss of generality $G$ is a complete graph with metric edge costs (otherwise, take the metric closure).
We start by computing a terminal MST, i.e, a minimum-cost spanning tree in $G[R]$.

While there exists a component $K$ connecting a set $X\subseteq R$ of terminals with $\drop(X) \geq (1+\gamma) \cdot c(K)$, we select one such component $K$, contract $X$, recompute a terminals MST, and iterate the process. Let $(G', R')$ be the Steiner tree instance at the end of the process which, by construction, is locally $\gamma$-MST-optimal.

We can construct a Steiner tree $T$ for the original instance $(G,R)$ by taking a terminal MST\footnote{Technically, we take a set of edges in the input instance corresponding to the edges of such terminal MST.} in the instance $(G',R')$ and adding the components $K$ that we selected in the course of our procedure.
Note that each of the selected components $K$ connecting a terminal set $X$ satisfies $\drop(X) \geq (1+\gamma) \cdot c(K)$, hence the total cost of these components is at most $\frac{1}{1+\gamma} \cdot (\mathrm{mst}(G[R]) - \mathrm{mst}(G'[R']))$. Define $\rho = \frac{\mathrm{mst}(G'[R'])}{\mathrm{mst}(G[R])}$.
We conclude that
\[
c(T)\ \leq\ \mathrm{mst}(G'[R']) + \tfrac{1}{1+\gamma} \cdot (\mathrm{mst}(G[R]) - \mathrm{mst}(G'[R']))=(\rho+\tfrac{1-\rho}{1+\gamma})\cdot  \mathrm{mst}(G[R]).
\]

Let us next lower bound the value of \ref{eq:bcr-tree}. This value is at least $\tfrac{1}{2}\cdot \mst(G[R])$ (see Section~\ref{sec:dual_growing}). By Theorem~\ref{thm:main_dual},  the value of \ref{eq:bcr-tree} for the instance $(G', R')$ is at least $\tfrac{1+\delta}{2} \cdot \mathrm{mst}(G'[R'])=\tfrac{1+\delta}{2}\cdot \rho \cdot \mathrm{mst}(G[R])$.
Because the contraction of vertex sets can only decrease the value of \ref{eq:bcr-tree}, the latter is also a valid lower bound on the value of \ref{eq:bcr-tree} for the original instance $(G, R)$. 
Altogether, the value of \ref{eq:bcr-tree} is at least
\[
\max \Bigl\{ \tfrac{1}{2}\cdot \mst(G[R]), \  \tfrac{1+\delta}{2} \cdot   \rho \cdot\mst(G[R]) \Bigr\}.
\]
Therefore, the integrality gap of \ref{eq:bcr-tree} is at most
\[
\max_{\rho\in [0,1]} \min \Bigl\{ 
2 \Bigl(\rho + \tfrac{1-\rho}{1+\gamma}\Bigr), \  
\tfrac{2}{(1+\delta)\rho} \cdot  \Bigl(\rho + \tfrac{1-\rho}{1+\gamma}\Bigr)
\Bigr\}
\ =\ \highlight{2 \cdot \frac{1+\gamma+\delta}{(1+\gamma)\cdot(1+\delta)}\ <\ 1.9988},
\]
where the maximum is attained for $\rho =\frac{1}{1+\delta}$.
\end{proof}

Note that the above argument only proves a bound on the integrality gap of \ref{eq:bcr-tree}. 
It does not show how to identify the improving components used in the proof of Theorem~\ref{thm:main} in polynomial time.
One can perform this step approximately, by restricting to components connecting a constant number of terminals due to a classical result by Borchers and Du~\cite{BD95}.
While this leads to a polynomial running time, the running time will be very high.
We thus highlight that our proof of Theorem~\ref{thm:main_dual} yields another possibility to turn the proof of Theorem~\ref{thm:main} into an algorithm computing a Steiner tree of cost at most $\gapval$ times the value of \ref{eq:bcr-tree}.

As already indicated in Section~\ref{sec:outline_analysis}, our proof of Theorem~\ref{thm:main_dual} has the form of an algorithm that either produces a good \ref{eq:dual-bcr-tree} solution, or identifies a component that contradicts the assumption of the instance being locally $\gamma$-MST-optimal. 
This leads to a \emph{scale-or-contract} algorithm (which may be seen in an analogy to the round-or-cut method that has been used very successfully for various combinatorial optimization problems), where the procedure in the proof of Theorem~\ref{thm:main} can be implemented as follows. 
Try to construct a good \ref{eq:dual-bcr-tree} solution, and if it fails take the component that manifests the reason of the failure, contract the set of terminals the component connects, and iterate in the contracted graph. 
Then we can define $(G', R')$ to be the instance on which the construction of a good solution to \ref{eq:dual-bcr-tree} succeeded and analyze the algorithm as in the proof of Theorem~\ref{thm:main}.

In the interest of clarity of the argument that \ref{eq:bcr-tree} has integrality gap less than $2$, which is the focus of this work, we omit the details of the argument that improving components may be found efficiently.

\subsection{Dual growth process for BCR}
\label{sec:construction_dual_solution}

In order to prove Theorem~\ref{thm:main_dual}, fix a locally $\gamma$-MST optimal instance $(G,R)$ of the Steiner tree problem, where $G=(V,E,c)$ denotes the given edge-weighted graph and $R$ denotes the set of terminals.
We assume without loss of generality that $G$ is a complete graph with metric edge costs.
Recall that for $t\in \mathbb{R}_{\ge 0}$,  we defined $G^t[R]$ to denote the graph $G[R]$  restricted to the edges of cost at most $2t$.
Then, $\mathcal{S}^t$ is the partition of $R$ into the vertex sets of the connected components of $G^t[R]$ and $t_{\max}$ is the first time $t$ where $G^t[R]$ is connected.

To prove Theorem~\ref{thm:main_dual}, we construct a solution to \ref{eq:dual-bcr-tree} of value $\tfrac{1+\delta}{2} \cdot  \mst(G[R])$.
In order to simplify the analysis of our construction, we consider a graph $G^{\epsilon}$ that arises from $G$ by subdividing edges.
We will choose some sufficiently small $\epsilon > 0$ and subdivide edges of $G$ so that every edge of the resulting graph $G^{\epsilon}$ has cost at most $\epsilon$.
We write $G^{\epsilon}=(V^{\epsilon}, E^{\epsilon},c)$
to denote the resulting weighted graph.

We will construct a solution to \ref{eq:dual-bcr-tree} for $(G^{\epsilon}, R)$, which gives rise to a solution to \ref{eq:dual-bcr-tree} for the original instance $(G,R)$ by undoing the subdivision of edges.
More precisely, if $z$ is a solution to \ref{eq:dual-bcr-tree} for $(G^{\epsilon}, R)$, we construct a solution $\overline{z}$ to \ref{eq:dual-bcr-tree} for $(G,R)$ by setting
\[
   \overline{z}_U \coloneqq \sum_{\substack{W \subseteq V^{\epsilon}: \\ W\cap V = U}} z_W
\]
for every set $U\subseteq V\setminus \{r\}$ with $U\cap R \neq \emptyset$.
If $z$ is a feasible solution to  \ref{eq:dual-bcr-tree} for $(G^{\epsilon}, R)$, then $\overline{z}$ is a feasible solution to \ref{eq:dual-bcr-tree} for $(G,R)$ of the same value.

We choose $\epsilon'>0$ small enough (for example $\epsilon' \coloneqq \epsval$) and let
\begin{equation}\label{eq:definition_eps}
\epsilon\ \coloneqq\ \min \Bigl\{ \tfrac{\epsilon'}{2} \cdot c(\{s,s'\})\ \colon\ s,s'\in R, \ s\neq s' \Bigr\}.
\end{equation}
Note that the fact that $(G,R)$ is locally $\gamma$-MST-optimal (and $\gamma> 0$) implies that the distance $c(\{s,s'\})$ between two terminals is positive because otherwise the component $K$ connecting $s$ and $s'$ by a direct edge satisfies
$ \drop(\{s,s'\}) = c(\{s,s'\}) \ge (1+\gamma) \cdot c(K)$.
Therefore, we indeed have $\epsilon > 0$.
\bigskip

We now describe our construction of a solution to \ref{eq:dual-bcr-tree} for $(G^{\epsilon}, R)$, where we will proceed as outlined in Section~\ref{sec:dual_growing}.
We write $\overrightarrow{E}^{\epsilon}$ to denote the set of edges resulting from $E^{\epsilon}$ by bidirecting every edge, i.e, $\overrightarrow{E}^{\epsilon}$ contains the directed edges $(v,w)$ and $(w,v)$ for every edge $\{v,w\}\in E^{\epsilon}$.
We define the cost of a directed edge $(v,w)$ to be the same as the cost of the corresponding undirected edge $\{v,w\}$.

For every time $t\in [0,t_{\max}]$ we define a feasible dual solution $z^t$ to  \ref{eq:dual-bcr-tree} for $(G^{\epsilon}, R)$, starting with $z^t=0$ at time $t= 0$.
The vector $(1+\delta)\cdot z^{t_{\max}}$ will be the desired dual solution of value $\tfrac{1+\delta}{2} \cdot  \mst(G[R])$.

For convenience of notation, we view $z^t$ as a vector in $\mathbb{R}^{2^V}$ (although \ref{eq:dual-bcr-tree} has variables only for subsets containing at least one terminal, but not the root) and we will ensure $z^t_U=0$ for sets $U$ for which \ref{eq:dual-bcr-tree} has no variable.

\begin{definition}
We say that an edge $e\in \overrightarrow{E}^{\epsilon}$ is \emph{$\delta$-tight} at time $t$ if 
\[
\sum_{U\subseteq V^{\epsilon}: e\in \delta^+(U)} z^t_U = \frac{1}{1+\delta}\cdot c(e).
\]
\end{definition}

For a set $S$ of terminals and a time $t$, we define
\[
U_S^t \coloneqq \Bigl\{ v\in V^{\epsilon} : v \text{ is reachable from }S\text{ by }\delta\text{-tight edges at time } t\Bigr\},
\]
where $v$ being reachable from $S$ by $\delta$-tight edges means that there exists an $s$-$v$ path consisting of $\delta$-tight edges for at least one terminal $s\in S$.

Then we construct the dual solution $z^t$ starting with $z^0 = 0$ by growing  at every time $t$ the dual variables corresponding to sets $U_S^t$ for all $S\in\mathcal{S}^t$ with $r\notin S$.
Formally, we then have for every time $t\in [0, t_{\max}]$, 
\[
    z^t \coloneqq \int_{0}^t  \ \sum_{S\in \mathcal{S}^{t'}: r \notin S} \chi^{U^{t'}_S} \quad dt',
\]
where $\chi^{U^{t'}_S}$ is the vector $y \in \mathbb{R}^{2^V}$ with $y_U = 1$ for $U=U^{t'}_S$ and $y_U = 0$ otherwise.

By the definition of the sets $U^t_S$, we never grow a dual variable corresponding to a set $U$ with an outgoing $\delta$-tight edge. 
Hence, we maintain the invariant that for every edge $e \in \overrightarrow{E} ^{\epsilon}$, we have
\begin{equation}\label{eq:constraint_with_slack}
 \sum_{U\subseteq V^{\epsilon} : e\in \delta^+(U)} z^t_U  \ \leq\ \tfrac{1}{1+\delta} \cdot c(e).
\end{equation}
Because the set $U^t_S$ always contains the terminal set $S\subseteq R$, we only grow dual variables corresponding to sets containing at least one terminal.
The most difficult part of our analysis is to prove that we never grow dual variables corresponding to sets $U$ with $r\in U$.
In Section~\ref{sec:feasibility_dual}, we will prove the following.

\begin{restatable}{lemma}{dualfeasible}\label{lem:dual_feasible}
Let $\gamma =\gammaval$ and $\delta =\deltaval$.
Then for every set $U\subseteq  V^{\epsilon}$ with $r\in U$ we  have $z^{t_{\max}}_U = 0$.
\end{restatable}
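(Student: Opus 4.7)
The plan is to prove Lemma~\ref{lem:dual_feasible} by establishing a quantitative distance bound on how far the grown sets $U_S^t$ can spread, which rules out their ever reaching the root. For $v \in V^{\epsilon}$ and $S \subseteq R$, write $c(v, S) \coloneqq \min_{s \in S} c(v, s)$. The central claim, to be proved by induction over time, is that for every $t \in [0, t_{\max}]$, every $S \in \mathcal{S}^t$ with $r \notin S$, and every $v \in U_S^t$, one has $c(v, S) \le (1+\beta\delta)\cdot t$ for $\beta = \betaval$. This claim implies the lemma: whenever $r \notin S$, the vertex $r$ and $S$ lie in distinct components of $G^t[R]$, so $c(r, s) \ge 2t$ for every $s \in S$; since $\beta\delta < 1$ for the chosen constants, $(1+\beta\delta)\cdot t < 2t \le c(r, S)$ and the key claim forces $r \notin U_S^t$. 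Hence no dual variable corresponding to a set containing $r$ is ever grown, and $z^{t_{\max}}_U = 0$ for all such $U$.

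To prove the key claim I would, as outlined in Section~\ref{sec:outline_analysis}, fix $v \in U_S^t$ and upper bound $c(v, S)$ by the length of an \emph{$S$-tight} $\delta$-tight path $P$ from some $s \in S$ to $v$ (selected according to the rules of Section~\ref{sec:S-tight}). The $\delta$-tightness of each edge of $P$ is paid for by the total contribution of dual variables $z_{U^{t'}_{S'}}$ grown at earlier times $t'$ whose set boundaries crossed the edge. I would partition these \emph{contributing} sets into those containing $s$ and those not. The former are nested around $s$, grow over time, and together contribute at most $(1+\delta)\cdot t$ to the length of $P$ by the straightforward argument of Section~\ref{sec:base_contribution}; the analysis of the sets not containing $s$ is the real work.

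For the sets not containing $s$, the structural results of Sections~\ref{sec:structure_contributing} and~\ref{sec:separating} let me decompose them into chains $\mathcal{C}^1, \dots, \mathcal{C}^k$ with sets from distinct chains pairwise disjoint, where each chain $\mathcal{C}^i$ contributes only on a subpath $P_i \subseteq P$ and the $P_i$ are essentially edge-disjoint. I can then bound each chain's contribution separately. For a chain $\mathcal{C}^i$, whenever some $S' \in \mathcal{C}^i$ contributed at an earlier time $t'$ to an edge $(v', w')$ of $P_i$ we had $v' \in U^{t'}_{S'}$, so the induction hypothesis yields a path from $S'$ to $v'$ of length at most $(1+\beta\delta)\cdot t'$. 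Appending such short paths to $P_i$ produces a component $K_i$ connecting a terminal set $X_i$. A potential-function argument (Section~\ref{sec:single_chain}) relates the total contribution of $\mathcal{C}^i$ along $P_i$ to $c(K_i)$, and a drop certificate (Section~\ref{sec:drop_certificates}) provides a matching lower bound on $\drop(X_i)$. Local $\gamma$-MST-optimality $\drop(X_i) < (1+\gamma)\cdot c(K_i)$ then caps the contribution of $\mathcal{C}^i$ to $P_i$. Summing these chain bounds with the $(1+\delta)t$ base contribution yields the target length $(1+\beta\delta)\cdot t$ of $P$ and closes the induction.

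\textbf{Main obstacle.} The delicate point is closing the induction with constants satisfying $\beta\delta < 1$ simultaneously with the chain accounting. The per-chain bound derived from the potential function depends on $\gamma$, $\delta$, and the inductive factor $(1+\beta\delta)$, so $\beta$ must be large enough to absorb these contributions and small enough that $\beta\delta < 1$. The specific values $\gamma = \gammaval$, $\delta = \deltaval$, $\beta = \betaval$ arise from making this balance tight, and verifying that the potential-function argument of Section~\ref{sec:single_chain} really closes the induction is the technical core of the proof; the remaining subtleties concern the essentially-edge-disjointness of the subpaths $P_i$, which is handled by the preliminary subdivision of $G$ into $G^{\epsilon}$ so that any overlapping edge has negligible length.
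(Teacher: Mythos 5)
Your proposal follows the same overall strategy as the paper: reduce the lemma to a distance bound on $S$-tight paths, prove that bound by induction over time, and close the induction using the chain decomposition, the potential function, drop certificates, and local $\gamma$-MST-optimality. The short deduction from the distance bound to the lemma (namely $\dist(S,r)\ge 2 d^S > 2t > (1+\beta\delta)\cdot t$, so $r\notin U^t_S$) also matches the paper's proof verbatim.

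However, there is a genuine gap in how you formulate the induction. You propose to prove by induction only the bare distance bound $c(v,S)\le (1+\beta\delta)\cdot t$. This statement is too weak to close the inductive step. The paper's Claim~\ref{claim:main} strengthens the induction hypothesis to \emph{additionally} assert the existence of a terminal set $X\ni s$ and a component $K$ connecting $X\cup\{w\}$ satisfying the three quantitative bounds
\[
c(P)\le (1+\delta)t + \lambda\cdot\drop(X,s), \quad
c(K)-\tfrac{1}{1+\gamma}\drop(X,s)\le (1+\delta)t - \mu\cdot\drop(X,s), \quad
\tmerge(s,s')<(1+\alpha\delta)t,
\]
and the distance bound $c(P)\le(1+\beta\delta)\cdot t$ is merely a \emph{corollary} of these. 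The reason the extra structure is indispensable is precisely the step you sketch in your third paragraph: to build the new component $K^*$ for the full path, the paper splits $P^*$ at the edge $(u_1,u_2)$ and applies the induction hypothesis to the prefix $P^*_{[s^*,u_1]}$, and in doing so it needs the inductively-constructed component $K$ together with the $\lambda$- and $\mu$-weighted relations between $c(K)$, $c(P)$, and $\drop(X,s^*)$ — not just a length bound. If you only carry a distance bound through the recursion, you can build \emph{some} short paths from contributing sets $S'$ to vertices of $P^*$, but you lose the quantitative slack ($\mu\cdot\drop$) needed to absorb the last chain's contribution and you cannot propagate a drop certificate of controlled value. Your ``summing per-chain bounds'' description also does not match the paper's actual structure, which peels off only the last chain $\mathcal{C}^k$ (handled by the potential function and Lemma~\ref{lem:lower_bound_meeting_time}) and folds the remaining chains $\mathcal{C}^1,\dots,\mathcal{C}^{k-1}$ into one recursive application of the strengthened hypothesis; while a summation over the geometrically separated merge times (via $\tmerge^{i+1}>M\cdot\tmerge^i$) is plausible, you would still need to carry the component/drop-certificate data for each chain, which again requires the strengthened statement. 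So the missing idea is the precise formulation of Claim~\ref{claim:main}(a) — and it is not a cosmetic detail: it is exactly what lets the ``obstacle'' you name in your last paragraph be resolved.
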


From Lemma~\ref{lem:dual_feasible} we conclude that $z^{t_{\max}}_U = 0$ for every set $U\subseteq V^{\epsilon}$ for which \ref{eq:dual-bcr-tree} for $(G^{\epsilon}, R)$ does not have a variable (i.e., where $U\cap R = \emptyset$ or $r\in U$).
Therefore, by \eqref{eq:constraint_with_slack}, the vector $(1+\delta) \cdot z^{t_{\max}}$ is a feasible solution to \ref{eq:dual-bcr-tree} for $(G^{\epsilon}, R)$.
It has value
\[
(1+\delta) \sum_{U\subseteq V^{\epsilon}} z^{t_{\max}}_U \ =\ (1+\delta) \int_{0}^{t_{\max}} \Bigl(|\mathcal{S}^t| - 1 \Bigr)\ dt =\ \tfrac{1+\delta}{2} \cdot \mst(G[R]),
\]
where we refer to Section~\ref{sec:primal-dual} for an explanation of the last equality.
In order to prove Theorem~\ref{thm:main_dual} and thus Theorem~\ref{thm:main}, it now remains to prove Lemma~\ref{lem:dual_feasible}.

\begin{table}[t]
\centering
\fbox{
\begin{minipage}{0.29\textwidth}
\vspace*{-4mm}
\begin{align*}
\delta\ =&\ \deltaval \\
\gamma\ =&\ \gammaval \\
\eps'\ =&\ \epsval \\
\beta\ =&\ \betaval  \\
\alpha\ =&\ \alphaval \\
\lambda\ =&\ \lambdaval \\
\mu\ =&\ \muval \\
\end{align*}
\vspace*{-12mm}
\end{minipage}
\begin{minipage}{0.49\textwidth}
\centering
\begin{align*}
M\ =&\ \frac{\frac{2}{1+\gamma}-1-\beta\delta}{\frac{(1+\beta\delta)^2}{1-\beta\delta}-(\frac{2}{1+\gamma}-1-\beta\delta)}\ \approx\ \Mval
\end{align*}
\end{minipage}
}
\caption{A list of  the constants appearing in our proofs.}
\label{table:constants}
\end{table}

In our proof of Lemma~\ref{lem:dual_feasible} we will define a few constants (besides the already mentioned $\delta$, $\gamma$, and $\epsilon'$).
For easier reference, we provide an overview of all these constants in Table~\ref{table:constants}.

\section{Feasibility of the dual solution (Proof of Lemma~\ref{lem:dual_feasible})}\label{sec:feasibility_dual}

In this section we describe several key concepts we use to prove Lemma~\ref{lem:dual_feasible}.
We start with a few basic observations about the partitions $\mathcal{S}^t$ of the terminal set $R$ in Section~\ref{sec:terminal_partitions}.
Then we introduce the concept of \emph{drop certificates}, which we will use to derive lower bounds on $\drop(X)$ for terminal sets $X\subseteq R$ (Section~\ref{sec:drop_certificates}).
Finally, in Section~\ref{sec:S-tight} we introduce the concept of \emph{$S$-tight paths} and argue why it suffices to bound the length of such paths in order to prove Lemma~\ref{lem:dual_feasible}.
In Section~\ref{sec:distance_bound} we will then prove such an upper bound on the length of $S$-tight paths.

\subsection{The partitions $\mathcal{S}^t$ of the terminals}
\label{sec:terminal_partitions}

Recall that $(G,R)$ is a locally $\gamma$-MST-optimal instance of the Steiner tree problem, where $G=(V,E,c)$ denotes the given complete weighted graph and $R$ denotes the set of terminals.
For $t\in \mathbb{R}_{\ge 0}$,  we defined $G^t[R]$ to denote the graph $G[R]$  restricted to the edges of cost at most $2t$.
Then, $\mathcal{S}^t$ is the partition of $R$ into the vertex sets of the connected components of $G^t[R]$ and $t_{\max}$ is the first time $t$ where $G^t[R]$ is connected.

We view $\mathcal{S}^t$ as a partition evolving over time $t$.
Then $\mathcal{S}^t$ only changes at finitely many times $t$, because every time it changes the number of its parts decreases by one.
We write $\mathcal{S}$ to denote the union of all set families $\mathcal{S}^t$ for $t\in [0,t_{\max})$.
Whenever $\mathcal{S}^t$ changes, some of its parts get merged into a larger part.
Therefore, the union $\mathcal{S}$ of all collections $\mathcal{S}^t$ is a laminar family. (This means that for every two sets $A,B \in \mathcal{S}$, we have $A\subseteq B$, $B\subseteq A$, or $A\cap B = \emptyset$.)

If $S\in \mathcal{S}^t$ and $r\notin S$, we say that $S$ is \emph{active} at time $t$.
We observe that for every set $S\in \mathcal{S}$, the set of times at which $S$ is active is a (half-open) interval.

\begin{definition}
For a set $S\in \mathcal{S}$ with $r\notin S$, we write $[a^S, d^S)$ to denote the interval of times $t$ with $S\in \mathcal{S}^t$.
We call $a^S$ the  \emph{activation time} of $S$ and $d^S$ the \emph{deactivation time} of $S$.
\end{definition}

If at some time $t$ the terminal set $S$ is active and we have $e\in  \delta^+(U_S^t)$, then we say that $S$ is \emph{contributing} to $e$ at time $t$.

Recall that for two terminals $s_1,s_2$ we write $\tmerge(s_1,s_2)$ to denote the first time $t \in [0,t_{\max}]$ at which $s_1$ and $s_2$ belong to the same part of the partition $\mathcal{S}^t$.
We highlight that these merge times satisfy the following properties.

\begin{observation}\label{obs:deactivation_time_lower_bounds_merge_time}
For a set $S\in \mathcal{S}$ with $r\notin S$ and two terminals $s_1$ and $s_2$ with $s_1 \in S$, we have $\tmerge(s_1,s_2) \ <\ d^S$ if and only if $s_2\in S$.
\end{observation}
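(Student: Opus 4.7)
The plan is to view $\mathcal{S}^t$ as a partition that only coarsens as $t$ grows: whenever $\mathcal{S}^t$ changes, two (or more) parts get merged into one, so terminals that are once together in some $\mathcal{S}^{t'}$ remain together in every $\mathcal{S}^t$ for $t \geq t'$. This monotonicity, combined with the fact that $S \in \mathcal{S}^t$ throughout the interval $[a^S,d^S)$, is the only ingredient I need; there is no real obstacle, so I would write this as two short directions.

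For the ``if'' direction, assume $s_2 \in S$. Then both $s_1$ and $s_2$ lie in the same part $S$ of the partition $\mathcal{S}^{a^S}$ (which exists because $S$ is activated at $a^S$), so by definition $\tmerge(s_1,s_2) \leq a^S < d^S$.

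For the ``only if'' direction, set $t^\ast \coloneqq \tmerge(s_1,s_2)$ and suppose $t^\ast < d^S$. Pick any $t \in [\max(t^\ast, a^S), d^S)$; such a $t$ exists because $[a^S,d^S)$ is nonempty and $t^\ast < d^S$. At this time $t$, the set $S$ is a part of $\mathcal{S}^t$ containing $s_1$, and by monotonicity of the merging process the terminals $s_1$ and $s_2$ still lie in the same part of $\mathcal{S}^t$ (they were merged no later than $t^\ast \leq t$). Since the part containing $s_1$ in $\mathcal{S}^t$ is uniquely determined and equals $S$, we conclude $s_2 \in S$.

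The only subtlety worth double-checking is the case $t^\ast < a^S$, where $S$ is not yet a part at the merge time; but the coarsening property handles this exactly as above, by jumping forward to time $a^S$, where $s_1$ and $s_2$ are still together and $S$ is the part containing~$s_1$.
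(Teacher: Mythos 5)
Your proof is correct. The paper states this as an Observation without giving an explicit argument (treating it as an immediate consequence of the definitions), and the argument you spell out — using that $\mathcal{S}^t$ only coarsens over time together with the fact that $S$ is the part containing $s_1$ throughout $[a^S, d^S)$ — is exactly the natural reasoning the paper is implicitly relying on.
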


\begin{observation}\label{obs:merge_time_max}
For any three terminals $s_1,s_2,s_3$, we have
\[
\tmerge(s_1,s_3) \le  \max \bigl\{ \tmerge(s_1,s_2),\  \tmerge(s_2,s_3)  \bigr\}.
\]
\end{observation}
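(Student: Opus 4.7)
The plan is to exploit two structural properties of the family $(\mathcal{S}^t)_{t \in [0, t_{\max}]}$: first, that each $\mathcal{S}^t$ is a \emph{partition} of $R$, and second, that this family is monotone in $t$ in the sense that parts may only merge, never split. The statement is an ultrametric-style inequality on the merge times, and it will follow directly from transitivity of the ``lying in the same part'' relation.

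The monotonicity holds because $G^t[R]$ only gains edges as $t$ grows, so each connected component of $G^{t'}[R]$ is a union of connected components of $G^t[R]$ whenever $t' \ge t$. In particular, if two terminals $s,s'\in R$ lie in a common part of $\mathcal{S}^t$, then they lie in a common part of $\mathcal{S}^{t'}$ for every $t' \ge t$. This is consistent with the laminarity of $\mathcal{S}$ already noted in Section~\ref{sec:terminal_partitions}.

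Now I would set $t^\star \coloneqq \max\{\tmerge(s_1,s_2),\ \tmerge(s_2,s_3)\}$. By the definition of $\tmerge$ together with the monotonicity just observed, at time $t^\star$ there exist parts $A, B \in \mathcal{S}^{t^\star}$ with $\{s_1,s_2\} \subseteq A$ and $\{s_2,s_3\} \subseteq B$. Since $\mathcal{S}^{t^\star}$ is a partition of $R$ and $s_2 \in A\cap B$, we get $A=B$, so $s_1$ and $s_3$ belong to a common part of $\mathcal{S}^{t^\star}$. By the definition of $\tmerge$ this gives $\tmerge(s_1,s_3) \le t^\star$, which is the claimed inequality.

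There is no real obstacle here; the only minor subtlety is ensuring that $\tmerge(s_i,s_j)$ is actually attained, i.e.\ that $s_i$ and $s_j$ already lie in a common part of $\mathcal{S}^{\tmerge(s_i,s_j)}$. This follows because $\mathcal{S}^t$ changes at only finitely many times and the activity interval $[a^S, d^S)$ of each part is closed on the left, so the infimum of the set of times at which two terminals share a part is itself such a time.
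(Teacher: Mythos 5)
Your proof is correct; since the paper states this as an unproved observation, there is no official proof to compare against, but your argument (via the facts that each $\mathcal{S}^t$ is a partition and that parts only merge as $t$ increases) is exactly the intended one-line ultrametric argument, and your remark about attainment of $\tmerge$ is handled by the definition of $G^t[R]$ using edges of cost \emph{at most} $2t$.
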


We also observe that the minimum deactivation time of a set  $S\in \mathcal{S}$ with $r\notin S$ is half of the minimum distance of two terminals.
Therefore, by the choice of $\epsilon$ in  \eqref{eq:definition_eps}, we have 
\begin{equation}\label{eq:choice_of_epsilons}
\epsilon  \ \le\ \epsilon' \cdot d^S
\end{equation}
for every set  $S\in \mathcal{S}$ with $r\notin S$.

Recall that the graph $G^{\epsilon}=(V^{\epsilon}, E^{\epsilon},c)$ arose from $G$ by the subdivision of edges.
We write $\dist(v,w)$ to denote the shortest-path distance (with respect to the edge costs $c$) between any two vertices $v$ and $w$ in the graph $G^{\epsilon}$.
We highlight that for any two vertices $v$ and $w$ of the original graph $G$ this distance is the same as the cost of the edge $\{v,w\}$ of $G$, where we use that the edge costs of the complete graph $G$ satisfy the triangle inequality.
We observe that by the definition of the partitions $\mathcal{S}^t$ for $t\in [0,t_{\max}]$, we have the following lower bound on the distance of two terminals.

\begin{observation}\label{obs:dist_lb_merge_time}
For any two terminals $s_1,s_2$, we have
\[
\dist(s_1,s_2) \ge 2 \cdot \tmerge(s_1,s_2).
\]
\end{observation}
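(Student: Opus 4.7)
The plan is to unpack both sides of the claimed inequality and reduce it to a one-line argument about the edge $\{s_1,s_2\}$. First I would note that, although $\dist$ is defined in the subdivided graph $G^{\epsilon}$, the subdivision preserves distances between original vertices (each subdivided edge has total weight equal to the original edge weight). Combined with the fact that $G$ was assumed without loss of generality to be a complete graph with metric edge costs, this gives $\dist(s_1,s_2)=c(\{s_1,s_2\})$. So the task reduces to showing $c(\{s_1,s_2\}) \ge 2\cdot \tmerge(s_1,s_2)$.

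Next I would set $t^{\star} \coloneqq \tfrac{1}{2} c(\{s_1,s_2\})$ and observe that at time $t^{\star}$ the edge $\{s_1,s_2\}$ satisfies $c(\{s_1,s_2\}) \le 2t^{\star}$, so by definition it belongs to $G^{t^{\star}}[R]$. Consequently $s_1$ and $s_2$ lie in the same connected component of $G^{t^{\star}}[R]$, hence in the same part of the partition $\mathcal{S}^{t^{\star}}$. Since $\tmerge(s_1,s_2)$ is the first such time, $\tmerge(s_1,s_2) \le t^{\star} = \tfrac{1}{2} c(\{s_1,s_2\}) = \tfrac{1}{2}\dist(s_1,s_2)$, which rearranges to the desired bound.

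There is no real obstacle here: the statement is essentially a tautological consequence of how $G^t[R]$ and $\mathcal{S}^t$ were defined, once one identifies $\dist(s_1,s_2)$ in $G^{\epsilon}$ with the direct edge cost $c(\{s_1,s_2\})$ in the metric closure. The only thing worth being a little careful about is whether the merging at time $t$ is inclusive (i.e.\ whether the edge of cost exactly $2t$ already counts as present in $G^t[R]$); the definition in Section~\ref{sec:terminal_partitions} uses ``cost at most $2t$'', so equality is allowed and the above argument goes through without an $\epsilon$-perturbation.
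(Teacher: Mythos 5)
Your proposal is correct and matches the argument the paper intends (it states this as an unproved observation): identify $\dist(s_1,s_2)$ with $c(\{s_1,s_2\})$ using the preceding remark about $G^\eps$ preserving distances in the metric closure, then note that at time $t^\star = \tfrac12 c(\{s_1,s_2\})$ the edge $\{s_1,s_2\}$ is already present in $G^{t^\star}[R]$, so $\tmerge(s_1,s_2) \le t^\star$. Your side remark about ``cost at most $2t$'' being inclusive is exactly the right thing to check and is consistent with the definition in Section~\ref{sec:primal-dual}.
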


\subsection{Drop certificates for lower bounds on the cost of components}
\label{sec:drop_certificates}

In this section we derive the key property of locally $\gamma$-MST-optimal instances that we will use to prove Lemma~\ref{lem:dual_feasible}.
Recall that $\drop(X)$ denotes the total cost of the terminal MST edges we could drop when adding a component connecting the terminals in~$X$.
We now introduce the notion of a \emph{drop certificate} for~$X$, which provides a lower bound on $\drop(X)$ as we will show in Lemma~\ref{lem:drop_certificates} below.
See Figure~\ref{fig:drop_certificate} for an example.

\begin{definition}
We call $\tilde{\mathcal{S}} \subseteq \mathcal{S}$ a \emph{drop certificate} for a terminal set $X \subseteq R$ if
\begin{itemize}\itemsep0pt
\item $|\tilde{\mathcal{S}}| = |X| - 1$, and
\item for all $s_1,s_2 \in X$ with $s_1 \neq s_2$, there is a set $S \in \tilde{\mathcal{S}}$ with $|S\cap \{s_1,s_2\}| =1$.
\end{itemize}
We also say that $2 \sum_{S\in \tilde{\mathcal{S}}} d^{S}$  is the \emph{value} of the drop certificate $\tilde{\mathcal{S}}$.
\end{definition}

\begin{figure}
\begin{center}
\begin{tikzpicture}[scale=1.1]
\tikzset{terminal/.style={
ultra thick,draw,fill=none,rectangle,minimum size=0pt, inner sep=3pt, outer sep=2.5pt}
}
\tikzset{steiner/.style={
fill=black,circle,inner sep=0em,minimum size=0pt, inner sep=2pt, outer sep=1.5pt}
}
\tikzset{dual/.style={green!75!black, line width=2pt}}

\begin{scope}[every node/.style=terminal]
\node at (0.5, 1) {};
\node at (2, 1) {};
\node at (3.5, 1) {};
\node at (4, 2.5) {};
\node at (6.1, 2.5) {};
\node at (6.1,1) {};
\node at (7.5, 1) {};
\node at (7.5, 2.5) {};
\end{scope}

\begin{scope}[dual]
\draw (2,1) ellipse (2.5 and 1.1);
\draw (0.5, 1) ellipse (0.6 and 0.6);
\draw (2, 1) ellipse (0.6 and 0.6);
\draw (6.1, 2.5) ellipse (0.55 and 0.55);
\draw (7.5, 1) ellipse (0.6 and 0.6);
\draw (7, 1) ellipse (1.5 and 0.9);
\draw (6.8, 1.5) ellipse (2 and 2);
\end{scope}
\end{tikzpicture}
\vspace*{-2mm}
\caption{\label{fig:drop_certificate}
An example of a drop certificate $\tilde{\mathcal{S}} $ (green) for a set $X$ of terminals (black squares).
}
\end{center}
\end{figure}
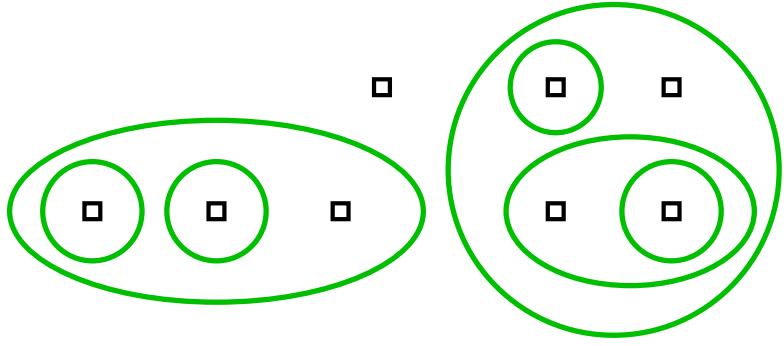

We next show that the value of any drop certificate for $X$ is a lower bound on $\drop(X)$, which justifies the name ``drop certificate''.
Even more, we show that the maximum value of a drop certificate for $X$ is equal to $\drop(X)$.
We remark that for the remainder of our analysis it would be sufficient to only prove that the $\drop(X)$ is at least the maximum value of a drop certificate, but we prove equality for completeness and to provide more intuition.

\begin{lemma}\label{lem:drop_certificates}
Let $X \subseteq R$ and $K$ be a component connecting $X$. Then $\drop(X)$ is the maximum value of a drop certificate for $X$, i.e., 
\[ \textstyle
   \drop(X) \coloneqq \max \Bigl\{  2 \sum_{S\in \tilde{\mathcal{S}}} d^{S} : \tilde{\mathcal{S}}\text{ is a drop certificate for }X \Bigr\}.
\]
\end{lemma}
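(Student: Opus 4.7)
My plan is to prove both $\drop(X) \geq 2\sum_{S \in \tilde{\mathcal{S}}} d^S$ for every drop certificate $\tilde{\mathcal{S}}$ and the existence of one attaining equality, exploiting the laminar structure of $\mathcal{S} \cup \{R\}$. Let $\mathcal{S}_X \coloneqq \{S \in \mathcal{S} \cup \{R\} : S \cap X \neq \emptyset\}$, which is an ancestor-closed subtree of the laminar tree. Call $S^* \in \mathcal{S}_X$ \emph{branching} if at least two of its direct children in $\mathcal{S} \cup \{R\}$ lie in $\mathcal{S}_X$, and let $k^{S^*}$ be the number of such children. As a structural preliminary I would first establish the formula
\[
\drop(X) \;=\; 2 \sum_{\text{branching } S^* \in \mathcal{S}_X} (k^{S^*} - 1)\, a^{S^*},
\]
by running Kruskal's algorithm simultaneously on $G[R]$ and on $G[R]/X$ and comparing event by event: the $G[R]/X$-partition is always obtained from the $G[R]$-partition by merging all currently existing clusters that meet $X$, so the $G[R]$-run uses precisely $k^{S^*} - 1$ extra edges of cost $2 a^{S^*}$ at each branching $S^*$, and these are exactly the edges saved when contracting $X$.

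A drop certificate $\tilde{\mathcal{S}}^*$ of value $\drop(X)$ is then immediate: at each branching $S^* \in \mathcal{S}_X$ with $\mathcal{S}_X$-children $S^*_1, \ldots, S^*_{k^{S^*}}$, include any $k^{S^*} - 1$ of them in $\tilde{\mathcal{S}}^*$. The tree identity $\sum_{S^*} (k^{S^*} - 1) = |X| - 1$ gives the correct size; any pair $(s_1,s_2) \in X$ has a branching lowest common ancestor $S^* \in \mathcal{S}_X$ whose distinct children contain $s_1$ and $s_2$, and since $\tilde{\mathcal{S}}^*$ excludes at most one of the $k^{S^*}$ children, one of these two must lie in $\tilde{\mathcal{S}}^*$ and separate the pair; finally each included $S^*_i$ satisfies $d^{S^*_i} = a^{S^*}$, yielding value $\drop(X)$.

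The upper bound goes by induction on $|X|$. For $|X| \geq 2$, let $S^*$ be the smallest set in $\mathcal{S} \cup \{R\}$ containing $X$; minimality forces $S^*$ to be branching, with $\mathcal{S}_X$-children $S^*_1, \ldots, S^*_k$ ($k \geq 2$); set $X_i = X \cap S^*_i$. Given a drop certificate $\tilde{\mathcal{S}}$, laminarity of $\mathcal{S}$ implies every $S \in \tilde{\mathcal{S}}$ lies in some $S^*_i$ (otherwise $S \supseteq S^*$ would contain $X$ entirely and separate no pair), giving a partition $\tilde{\mathcal{S}} = \bigsqcup_i \tilde{\mathcal{S}}_i$. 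A short laminar-forest counting argument shows that $|\tilde{\mathcal{S}}_i| \geq |X_i|$ unless some $s \in X_i$ has empty signature $\{S \in \tilde{\mathcal{S}}_i : s \in S\}$, in which case $|\tilde{\mathcal{S}}_i| \geq |X_i| - 1$; combined with the inter-block separation condition (no two different blocks can both contain an empty-signature element) and the total $|\tilde{\mathcal{S}}| = |X|-1$, this pins down exactly one index $i_0$ with $|\tilde{\mathcal{S}}_{i_0}| = |X_{i_0}| - 1$ (so $\tilde{\mathcal{S}}_{i_0}$ is itself a drop certificate for $X_{i_0}$) and $|\tilde{\mathcal{S}}_i| = |X_i|$ for $i \neq i_0$.

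The main obstacle is handling $i \neq i_0$, via the key sub-claim that any maximal $L^* \in \tilde{\mathcal{S}}_i$ (a root of the laminar forest on $\tilde{\mathcal{S}}_i$) can be removed so that $\tilde{\mathcal{S}}_i \setminus \{L^*\}$ still separates every pair of $X_i$. I would prove this by contradiction: if $(y_1, y_2) \in X_i$ with $y_1 \in L^*$, $y_2 \notin L^*$ were separated only by $L^*$, the nonempty signature of $y_2$ would contain some $L'' \in \tilde{\mathcal{S}}_i \setminus \{L^*\}$ with $y_2 \in L''$; since $L^*$ is a root, no set in $\tilde{\mathcal{S}}_i$ strictly contains $L^*$, so laminarity forces $L'' \cap L^* = \emptyset$, whence $y_1 \notin L''$ and $L''$ already separates $(y_1,y_2)$, a contradiction. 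Since $L^* \subseteq S^*_i$ gives $d^{L^*} \leq d^{S^*_i} = a^{S^*}$, the induction hypothesis applied to the drop certificate $\tilde{\mathcal{S}}_i \setminus \{L^*\}$ for $X_i$ yields $2\sum_{S \in \tilde{\mathcal{S}}_i} d^S \leq \drop(X_i) + 2 a^{S^*}$. Summing over all $i$ together with the induction bound on $\tilde{\mathcal{S}}_{i_0}$ and invoking $\drop(X) = 2(k-1) a^{S^*} + \sum_i \drop(X_i)$ from the first-step formula closes the induction.
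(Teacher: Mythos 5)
Your argument is correct and takes a genuinely different route from the paper's. The paper writes $\drop(X) = 2\int_0^{t_{\max}}\bigl(|\mathcal{S}^t|-|\mathcal{S}_X^t|\bigr)\,dt$ and then reasons \emph{time-slice by time-slice}: for each fixed $t$ it shows, inside every part $S_i\in\mathcal{S}^t$ meeting $X$, that at least $|S_i\cap X|-1$ certificate sets are proper subsets of $S_i$ and hence already deactivated; integrating gives the upper bound, and the tight certificate is produced by recording which parts merge as $t$ increases. You instead work with the laminar tree of the sets meeting $X$: you derive the closed-form identity $\drop(X)=2\sum(k^{S^*}-1)\,a^{S^*}$ over branching nodes, read the optimal certificate off the branching nodes directly, and prove the upper bound by \emph{induction on $|X|$}, decomposing an arbitrary certificate over the children of the minimal containing set, isolating the unique child block that is allotted $|X_i|-1$ rather than $|X_i|$ sets, and peeling a maximal set off each of the other blocks (your removal-of-a-root sub-claim). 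Both arguments ultimately hinge on the same laminar region-counting fact, but yours makes the recursive structure of $\drop$ explicit and produces a clean closed-form expression, at the cost of noticeably more bookkeeping; the paper's pointwise-in-$t$ counting avoids recursion entirely.

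One small gap in the write-up: you assert that laminarity forces every $S\in\tilde{\mathcal{S}}$ into some $\tilde{\mathcal{S}}_i$, ``otherwise $S\supseteq S^*$ would contain $X$ entirely and separate no pair.'' But the definition of a drop certificate does not a priori exclude a useless member, and you also omit the second useless case $S\cap X=\emptyset$. The fix is easy and uses the counting you already run one step later: allow a count $u\ge 0$ of useless sets in the decomposition $\tilde{\mathcal{S}}$; your per-block bounds give $\sum_i|\tilde{\mathcal{S}}_i|\ge |X|-1$ (since at most one block can harbour an empty-signature terminal), so $|X|-1 = u + \sum_i|\tilde{\mathcal{S}}_i| \ge u + (|X|-1)$ forces $u=0$. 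With that noted, the rest of the induction goes through as you describe.
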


\begin{proof}
Recall that the graph $G^t[R]$  for $t\in \mathbb{R}_{\ge 0}$ is the graph $G[R]$  restricted to the edges of cost at most $2t$.
Moreover, for $t\in \mathbb{R}_{\ge 0}$ we write $\mathcal{S}^t$ to denote the partition of $R$ into the vertex sets of the connected components of $G^t[R]$.
Any terminal MST has cost $\mst(G[R]) = 2  \int_{0}^{t_{\max}}  \bigl(|\mathcal{S}^t| - 1 \bigr)\ dt$.

For any  $t\in \mathbb{R}_{\ge 0}$ we write $\mathcal{S}_X^t$ to denote the partition of $R/X$ into the vertex sets of the connected components of $G^t[R]/X$.
Then any minimum-cost spanning tree of $G[R]/X$ has cost $\mst(G[R]/X) = 2  \int_{0}^{t_{\max}}  \bigl( |\mathcal{S}_X^t| - 1 \bigr)\  dt$.
Therefore, we have
\[
\drop(X)\ =\ \mst(G[R]) - \mst(G[R]/X)  \ =\ 2  \int_{0}^{t_{\max}}  \Bigl(|\mathcal{S}^t| -   |\mathcal{S}_X^t|\Bigr) \ dt.
\]
We will show that for any drop certificate $\tilde{\mathcal{S}}$ for $X$ and any time $t\in [0,t_{\max}]$  we have
\begin{equation}\label{eq:inequality_proof_drop_certificate}
 |\mathcal{S}^t| -   |\mathcal{S}_X^t| \ \ge\ \bigl|  \bigl\{ S \in \tilde{\mathcal{S}} : d^S \ge t \bigr\} \bigr|
\end{equation}
and that for every $X\subseteq R$ there exists a drop certificate that satisfies for all $t\in [0,t_{\max}]$ the inequality \eqref{eq:inequality_proof_drop_certificate} with equality.
Because the value of a drop certificate $\tilde{\mathcal{S}}$ is
\[
2 \sum_{S\in \tilde{\mathcal{S}}} d^{S} \ =\ 2  \int_{0}^{t_{\max}} \bigl|  \bigl\{ S \in \tilde{\mathcal{S}} : d^S \ge t \bigr\} \bigr|\ dt,
\]
this will complete the proof.
\bigskip

We now consider a drop certificate $\tilde{\mathcal{S}}$ for $X$ and a time $t\in [0,t_{\max}]$ and prove \eqref{eq:inequality_proof_drop_certificate}.
Note that the partition $\mathcal{S}_X^t$ of $R/X$ arises from the partition $\mathcal{S}^t$ of $R$ by merging the parts of  $\mathcal{S}^t$ that have a nonempty intersection with $X$.
Let $S_1, \dots, S_k \in \mathcal{S}^t$ be the parts of $\mathcal{S}^t$ that have a nonempty intersection with $X$ and write  $X_i \coloneqq S_i \cap X$ for each $i\in\{1,\dots,k\}$.
Because $\mathcal{S}$ is a laminar family, every set $S\in  \tilde{\mathcal{S}}\subseteq \mathcal{S}$  that separates two terminals $s,s' \in X_i \subseteq S_i$, i.e., where $|S\cap \{s,s'\}| =1$, must be a proper subset of $S_i$.
Because the drop certificate $\tilde{\mathcal{S}}$ contains a set separating  $s$ and $s'$ for all terminal pairs $s,s' \in X_i$, it must contain at least $|X_i|-1$ proper subsets of $S_i$.
Each such proper subset $S\in \mathcal{S}$ of $S_i \in \mathcal{S}^t$ satisfies $d^S < t$.
We conclude that the number of sets $S$ in the drop certificate $\tilde{\mathcal{S}}$  with $d^S < t$ is at least $\sum_{i=1}^k \bigl( |X_i| -1\bigr) = |X| - k$ and hence
\[
\bigl|  \bigl\{ S \in \tilde{\mathcal{S}} : d^S \ge t \bigr\} \bigr| \ \le\ |\tilde{\mathcal{S}}| - \bigl( |X| -k\bigr) \ =\ |X| -1 - \bigl( |X| -k\bigr) \ =\  k-1\ =\  |\mathcal{S}^t| -   |\mathcal{S}_X^t| ,
\]
which shows \eqref{eq:inequality_proof_drop_certificate}.

\bigskip
Finally, we consider a set $X\subseteq R$ and construct a drop certificate $\tilde{\mathcal{S}}$ for $X$ satisfying \eqref{eq:inequality_proof_drop_certificate} with equality for all $t\in [0,t_{\max}]$.
We again view the partition $\mathcal{S}^t$ as a partition changing over time, starting with $t=0$ and ending with $t=t_{\max}$.
At time $t=0$, every element of $X$ is in a different part of $\mathcal{S}^t$.
(Here, we use that because our Steiner tree instance is locally $\gamma$-MST-optimal, the distance of every two terminals is positive.)
At time $t=t_{\max}$ the partition $\mathcal{S}^t$ has only one part containing all elements of $X$.
Going from time $t=0$ to time $t=t_{\max}$, the partition $\mathcal{S}^t$ changes at discrete time steps by merging several of its parts to a single new part.
Whenever several parts $S_1,\dots, S_k$ with a nonempty intersection with $X$ merge into a single part at time $t$, we include the parts $S_1,\dots, S_{k-1}$ into $\tilde{\mathcal{S}}$.
Note that the deactivation time of these sets $S_1,\dots, S_{k-1}$ is equal to the current time $t$ 
and hence the set family $\tilde{\mathcal{S}}$ we obtain in this way satisfies \eqref{eq:inequality_proof_drop_certificate} with equality for all $t\in [0,t_{\max}]$.
Because we add $k-1$ sets whenever the number of parts of $\mathcal{S}^t$ with a nonempty intersection with $X$ decreases by $k-1$, the total number of sets included in $\tilde{\mathcal{S}}$ is indeed $|X|-1$. 
Moreover, for any two distinct terminals $s,s'\in X$, we consider the time $t=\tmerge(s,s')$, i.e., the first time $t$ where $s$ and $s'$ belong to the same part of $\mathcal{S}^t$.
At this time $t$, the two parts $S_i$ and $S_j$ containing $s$ and $s'$, respectively, got merged and hence we added at least one of them, say $S_i$ to $\tilde{\mathcal{S}}$.
Then $S_i$ separates $s$ and $s'$, i.e, $|S\cap \{s,s'\}| =1$, which proves  that $\tilde{\mathcal{S}}$ is indeed a drop certificate for $X$.
\end{proof}

The following is the key property of locally $\gamma$-MST-optimal instances that we will use to prove Lemma~\ref{lem:dual_feasible}.

\begin{corollary}\label{cor:lower_bound_components}
Let $X \subseteq R$ and $\tilde{\mathcal{S}} \subseteq \mathcal{S}$ be a drop certificate for $X$.
Then every component $K$ connecting the terminals in $X$ has cost 
$c(K)> \frac{2}{1+\gamma} \sum_{S\in \tilde{\mathcal{S}}} d^{S}$.
\end{corollary}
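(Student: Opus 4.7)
The corollary is an immediate combination of two ingredients already in place, so the plan is essentially a one-line chain of inequalities. First, I would apply Lemma~\ref{lem:drop_certificates}, which identifies $\drop(X)$ with the \emph{maximum} value of a drop certificate for $X$. In particular, the specific drop certificate $\tilde{\mathcal{S}}$ given in the hypothesis yields the lower bound
\[
\drop(X) \ \ge\ 2 \sum_{S\in \tilde{\mathcal{S}}} d^{S}.
\]

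Second, I would invoke the assumption that the instance $(G,R)$ is locally $\gamma$-MST-optimal. By the definition given before Theorem~\ref{thm:main_dual}, for every component $K$ connecting a terminal set $X$ we have the strict inequality $\drop(X) < (1+\gamma)\cdot c(K)$. Combining this with the previous bound and dividing by $1+\gamma > 0$ gives
\[
c(K) \ >\ \frac{\drop(X)}{1+\gamma}\ \ge\ \frac{2}{1+\gamma}\sum_{S\in \tilde{\mathcal{S}}} d^{S},
\]
which is exactly the claim. There is no real obstacle here: the content of the corollary is entirely packaged in Lemma~\ref{lem:drop_certificates} and the definition of local $\gamma$-MST-optimality, and the corollary just records the convenient form in which these two facts will be chained together throughout the subsequent analysis (in particular, whenever we construct a component $K$ along with an associated drop certificate to derive a length bound on an $S$-tight path).
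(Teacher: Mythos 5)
Your proof is correct and follows exactly the paper's argument: invoke Lemma~\ref{lem:drop_certificates} to lower-bound $\drop(X)$ by the value of the given drop certificate, then apply the definition of local $\gamma$-MST-optimality to relate $\drop(X)$ to $c(K)$, and chain the two inequalities. Nothing to add.
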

\begin{proof}
Because our instance $(G,R)$ of the Steiner tree problem is locally $\gamma$-MST-optimal, we have $c(K)> \frac{1}{1+\gamma} \cdot \drop(X)$ for every component $K$ connecting $X$.
By Lemma~\ref{lem:drop_certificates}, the value $2 \sum_{S\in \tilde{\mathcal{S}}} d^{S}$ of any drop certificate $\tilde{\mathcal{S}}$ for the terminal set $X$ is a lower bound on $\drop(X)$.
\end{proof}

\subsection{\boldmath $S$-tight paths}
\label{sec:S-tight}

We now analyze the dual growth procedure described in Section~\ref{sec:construction_dual_solution} and prove Lemma~\ref{lem:dual_feasible}.
From now on, we will only work with the bidirected graph $(V^{\epsilon}, \overrightarrow{E}^{\epsilon})$, but not with the original graph $G$ anymore.

We need to show that for any time $t$ and for every active set $S\in \mathcal{S}^t$, the set $U^t_S$ does not contain the root~$r$.
Recall that $U^t_S$ is the set of vertices $v$ that are reachable from the set $S$ by $\delta$-tight edges at time $t$.
For a vertex $v$, we denote by $t^{Sv}$ the first time by which $v$ is reachable from $S$ by $\delta$-tight edges.

By the definition of $U^t_S$, for every vertex $v\in U^t_S$ there is an $S$-$v$ path of $\delta$-tight edges at time $t$.
(An $S$-$v$ path is an $s$-$v$ path for some $s\in S$.)
There might be many such paths and we will consider one such path $P$ with particular properties.
Informally speaking, we want the path $P$ to be a ``first path by which we reached $v$ from $S$''.
Moreover, also for every other vertex $u$ of $P$, we want the $s$-$u$ subpath of $P$ to be a ``first path by which we reached $u$ from $S$''.
More precisely, for a directed path $P$ and vertices $a$ and $b$ of $P$ with $a$ lying before $b$ on $P$, we denote by $P_{[a,b]}$, the $a$-$b$ subpath of $P$. We want the $s$-$v$ path $P$ (with $s\in S$) to have the following property.
For every vertex~$u$ of~$P$, all edges of the $s$-$u$ subpath $P_{[s,u]}$ of $P$ should be $\delta$-tight already at time $t^{Su}$.

An equivalent way of expressing the two above mentioned properties of $P$ is to require all edges of~$P$ to be \emph{$S$-tight}, which we define as follows.

\begin{definition}
An edge $e=(u,v) \in \overrightarrow{E}^{\epsilon}$  is $S$-tight if $e$ is $\delta$-tight at time $t^{Sv}$ and $t^{Su} \leq t^{Sv}$.
\end{definition}

\begin{definition}
For a set $S\in \mathcal{S}$ and a vertex $v\in V^{\epsilon}$, we say that an $S$-$v$ path is $S$-tight if all its edges are $S$-tight.
\end{definition}

We highlight that for every vertex $u$ of an $S$-tight path $P$ starting at $s\in S$, the subpath $P_{[s,u]}$ is also $S$-tight.
A simple induction on the length of the path $P$ implies that indeed every $S$-tight path $P$ has the property that for every vertex $u$ of $P$ all edges of $P_{[s,u]}$ are $\delta$-tight at time $t^{Su}$.

Moreover, we observe that whenever there exists some $S$-$v$ path consisting of $\delta$-tight edges, then there also exists an $S$-tight $S$-$v$ path.

\begin{lemma}\label{lem:reachability_implies_S-tight_path}
If for a set $S\subseteq R$, a vertex $w$ is reachable from $S$ by $\delta$-tight edges, then there exists an $S$-tight path (from $S$) to $w$:
\end{lemma}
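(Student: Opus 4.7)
The plan is to construct the required $S$-tight path by a well-founded induction that simultaneously produces an $S$-tight $S$-$v$ path for every reachable vertex $v$. The subtlety is that several vertices may share the same first-reached time $t^{Sv}$ if a group of edges simultaneously becomes $\delta$-tight; a plain induction on $t^{Sv}$ alone then risks a cyclic dependency among the in-neighbours of different vertices within such a group, so a tie-breaking rank is needed.

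Fix $t^* := t^{Sw}$ and split the reachable set $U^{t^*}_S$ into $B := \{v : t^{Sv} < t^*\}$ (which contains $S$) and $A := \{v : t^{Sv} = t^*\}$ (which contains $w$). In the subgraph of edges that are $\delta$-tight at time $t^*$, the plan is to run a BFS from $B$ along outgoing edges and define $\ell(v)$ to be the BFS distance from $B$ to $v$ for $v \in A$, together with $\ell(v) := 0$ for $v \in B$. To verify that $\ell$ is finite on all of $A$: for $v \in A$ pick any $S$-$v$ path of edges $\delta$-tight at $t^*$ (which exists since $v \in U^{t^*}_S$) and consider its suffix starting at the last vertex lying in $B$; after shortcutting cycles one obtains a simple path from some $u_0 \in B$ to $v$ whose remaining vertices all lie in $A$, which certifies that the BFS reaches $v$ in finitely many steps.

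Next I would prove by strong induction on the lexicographic order of $(t^{Sv}, \ell(v))$ that every reachable vertex $v$ admits an $S$-tight $S$-$v$ path $\pi(v)$; applying this to $v = w$ would then yield the lemma. The base case $v \in S$ uses the empty path, which is vacuously $S$-tight. For the inductive step, pick a BFS predecessor $u$ of $v$: by construction the edge $(u,v)$ is $\delta$-tight at time $t^* = t^{Sv}$, and either $t^{Su} < t^*$ (so $(t^{Su}, \ell(u))$ is lex-smaller automatically) or $t^{Su} = t^*$ with $\ell(u) = \ell(v) - 1 < \ell(v)$. The induction hypothesis supplies an $S$-tight path $\pi(u)$, and appending $(u,v)$ yields an $S$-tight $S$-$v$ path, because $t^{Su} \le t^{Sv}$ and $(u,v)$ is $\delta$-tight at $t^{Sv}$, which is exactly the definition of an $S$-tight edge.

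The main obstacle is the small combinatorial fact underpinning the tie-breaking: from any $\delta$-tight witness path for $v \in A$ one must be able to extract a sub-path whose interior, after the last visit to $B$, lies entirely in $A$, so that the BFS actually assigns a finite rank to every vertex of $A$. Once this is in hand the remaining induction is entirely routine, the edge produced in each step trivially meets the $S$-tight edge definition, and no further use of the particular structure of the dual growth process is needed.
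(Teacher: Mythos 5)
Your approach is a tie-broken variant of the paper's proof and can be made to work, but the inductive step as stated has a gap. You define $\ell$ relative to the single time $t^*=t^{Sw}$ and set $\ell(v):=0$ for all $v\in B$, so for $v\in B\setminus S$ there is no BFS predecessor to pick and the inductive step (``pick a BFS predecessor $u$ of $v$, then $(u,v)$ is $\delta$-tight at time $t^*=t^{Sv}$'') simply does not apply — in particular $t^*\neq t^{Sv}$ there. Since your IH for the case $t^{Su}<t^*$ relies on the claim holding for vertices in $B$, this is circular as written. Two standard repairs: make $\ell$ intrinsic to each vertex (define $\ell(v)$ as the BFS distance from $\{u : t^{Su}<t^{Sv}\}$ in the graph of edges that are $\delta$-tight at time $t^{Sv}$), or only prove the claim for $w$ itself and obtain $\pi(u)$ for $u\in B$ by a recursive application of the lemma with $t^{Su}<t^{Sw}$. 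You should also guard against a non-simple concatenation: if $\pi(u)$ already visits $v$, you should take the prefix of $\pi(u)$ up to $v$ (which is $S$-tight, being a prefix of an $S$-tight path) rather than appending $(u,v)$.

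The paper's proof avoids tie-breaking altogether, which is the observation your argument is missing. Given any witness path $Q$ of $\delta$-tight edges from $s\in S$ to $w$, it considers the \emph{last} vertex $v$ on $Q$ with $t^{Sv}<t^{Sw}$. Every vertex of $Q$ after $v$ then has reaching time exactly $t^{Sw}$, and every edge of the suffix $Q_{[v,w]}$ is $\delta$-tight at time $t^{Sw}$; hence the whole suffix is already $S$-tight with no edge-by-edge BFS needed. Replacing $Q_{[s,v]}$ by an $S$-tight $S$-$v$ path $Q'$ from the strict-time induction hypothesis, and noting that every vertex of $Q'$ has reaching time at most $t^{Sv}<t^{Sw}$ so that $Q'$ and $Q_{[v,w]}$ meet only in $v$, gives both $S$-tightness and simplicity in one step. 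This is strictly simpler than the lexicographic argument, although your route is morally the same induction on $t^{Sv}$.
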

\begin{proof}
We prove the statement for every vertex $w$ in an order of non-decreasing $t^{Sw}$.
Thus, when proving the statement for a vertex $w$, we may assume that it holds for every vertex $v$ with $t^{Sv} < t^{Sw}$.

We consider any $S$-$w$ path $Q$ consisting of $\delta$-tight edges at time $t^{Sw}$, which exists by definition of the time $t^{Sw}$.
Then we have $t^{Su} \le t^{Sw}$ for every vertex $u$ of $Q$.
If $t^{Sw} =0$, the path $Q$ is $S$-tight and hence we may assume that this is not the case.
We consider the last vertex $v$ of $Q$ with $t^{Sv} < t^{Sw}$.
Then there exists an $S$-tight $S$-$v$ path $Q'$.
Because $t^{Su} \le t^{Sv} < t^{Sw}$ for every vertex $u$ of $Q'$, the only vertex of $Q_{[v,w]}$ that is contained in $Q'$ is the vertex $v$.
Appending the path $Q_{[v,w]}$ to the $S$-tight path $Q'$ yields the desired $S$-tight $S$-$w$ path $P$.
\end{proof}

In particular, Lemma~\ref{lem:reachability_implies_S-tight_path} implies that for every vertex $v$ in the set $U^t_S$ for $S\in \mathcal{S}^t$, there exists some $S$-tight path ending at $v$.

The key step of our proof of Lemma~\ref{lem:dual_feasible} will  be to bound the maximum length of $S$-tight paths.
For a path $P$, we write $c(P)$ to denote its length, i.e., the total cost of its edges.
Because our goal is to show that the root $r$ is not contained in $U_S^t$ for active sets $S$ at time $t$, we will only consider $S$-tight paths existing before the deactivation time $d^S$ of the set $S$. 

\begin{lemma}\label{lem:distance_bound}
Let $\beta \coloneqq \betaval$.
Let $t\in [0,t_{\max})$ and $S\in \mathcal{S}$ with $t < d^S$ and $r\notin S$.
Then every $S$-tight path $P$ has length
\[
c(P) \le (1+\beta\delta)\cdot t.
\]
\end{lemma}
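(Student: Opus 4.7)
The plan is to prove Lemma~\ref{lem:distance_bound} by induction on the time $t$. The base case $t=0$ is vacuous because no $\delta$-tight edges exist and only trivial $S$-tight paths of length zero occur. For the inductive step, fix $S\in\mathcal{S}$ with $r\notin S$, a time $t\in[0,t_{\max})$ with $t<d^S$, and an $S$-tight path $P=(s=v_0,e_1,v_1,\ldots,e_k,v_k=w)$ starting at some $s\in S$ with $t^{Sw}\le t$. Using $S$-tightness of each $e_i$ at time $t^{Sv_i}$ together with the definition of $z$, we have
\[
c(P)\;=\;\sum_{i=1}^{k}c(e_i)\;=\;(1+\delta)\sum_{i=1}^{k}\sum_{U:\,e_i\in\delta^+(U)} z_U^{t^{Sv_i}}\;=\;(1+\delta)\int_0^t\Phi(t')\,dt',
\]
where $\Phi(t')$ counts pairs $(i,S')$ with $t'<t^{Sv_i}$, $S'\in\mathcal{S}^{t'}$, $r\notin S'$, and $e_i\in\delta^+(U_{S'}^{t'})$. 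The goal thus reduces to showing $\int_0^t\Phi(t')\,dt'\le\tfrac{1+\beta\delta}{1+\delta}\cdot t$.

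I would split the contributing sets $S'$ by whether $s\in S'$. By laminarity of $\mathcal{S}$, the sets containing $s$ form a chain in which exactly one element is active at each time $t'\in[0,d^S)$; since $P$ is $S$-tight we have $v_i\in U_{S'}^{t'}$ for all $i$ with $t^{Sv_i}\le t'$, so any $s$-containing active $S'$ contributes to at most one edge of $P$, namely the first one leaving $U_{S'}^{t'}$. Integrating gives the baseline estimate that family~(a) contributes at most $t$, i.e.\ $c(P)\le(1+\delta)\cdot t$ whenever family~(b) of sets $S'\not\ni s$ is empty.

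The main work is controlling family~(b). By laminarity, these sets partition into maximal chains $\mathcal{C}^1,\ldots,\mathcal{C}^m$ of pairwise nested sets, with sets from different chains disjoint. A structural step first proves that each chain $\mathcal{C}^j$ contributes only to a contiguous subpath $P_j$ of $P$ and that $P_1,\ldots,P_m$ are edge-disjoint apart from single-edge overlaps whose aggregate cost is at most $m\cdot\epsilon$, negligible by the choice of $\epsilon$ in~\eqref{eq:definition_eps}. For each chain $\mathcal{C}^j$, I apply the induction hypothesis to every contributing set: if $S'\in\mathcal{C}^j$ contributes to an edge $e_i$ of $P_j$ at time $t'<t$, then $v_{i-1}\in U_{S'}^{t'}$, so by Lemma~\ref{lem:reachability_implies_S-tight_path} an $S'$-tight $S'$-$v_{i-1}$ path exists at time $t'<d^{S'}$, and by induction it has length at most $(1+\beta\delta)\,t'$. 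Gluing such attaching paths (for representative terminals chosen from each set of $\mathcal{C}^j$) to $P_j$ yields a component $K_j$ connecting a terminal set $X_j$, and the family $\mathcal{C}^j$ together with suitably chosen sets from the $s$-chain forms a drop certificate for $X_j$. Corollary~\ref{cor:lower_bound_components} then gives a lower bound of the form $c(K_j)>\tfrac{2}{1+\gamma}\sum_{S'\in\mathcal{C}^j} d^{S'}+(\text{contribution of the $s$-chain to }P_j)$, while a direct count bounds $c(K_j)$ from above in terms of $c(P_j)$ and the attaching lengths $(1+\beta\delta)\,t'$.

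To turn these two inequalities into a chain-local bound on the contribution of $\mathcal{C}^j$ to $P_j$, I would introduce a potential function that, as one proceeds along $\mathcal{C}^j$ from its smallest to its largest element, tracks the excess contribution each nested set can afford under the locally $\gamma$-MST-optimal hypothesis. Summing the chain-local bounds over $j=1,\ldots,m$ and combining with the family~(a) estimate yields the desired $c(P)\le(1+\beta\delta)\cdot t$. The main obstacle is the simultaneous tuning of this potential with the constants $\delta,\gamma,\beta$: the attaching lengths $(1+\beta\delta)\,t'$ depend on the induction constant we are trying to establish, so closing the recursion forces $\beta$ above a fixed point determined by $\tfrac{2}{1+\gamma}-1-\beta\delta$ and $\tfrac{(1+\beta\delta)^2}{1-\beta\delta}$, which is precisely the quantity appearing in the definition of $M$ in Table~\ref{table:constants}. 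The specific value $\beta=\betaval$ is then chosen to meet this fixed-point constraint (with a little slack absorbed by the auxiliary constants $\alpha,\lambda,\mu,M$).
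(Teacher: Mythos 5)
Your proposal reassembles several of the paper's ingredients — inducting over time, splitting contributing sets by whether they contain $s$, partitioning the remainder into chains, gluing attaching paths onto subpaths to build components, invoking Corollary~\ref{cor:lower_bound_components}, and a potential function for the single-chain analysis — but it is missing the mechanism that makes the induction close. The paper does not prove $c(P)\le(1+\beta\delta)t$ by inducting on that inequality alone: it proves the stronger Claim~\ref{claim:main}, whose part~\eqref{item:strong_existing_component} carries through the induction a component $K$ connecting $X\ni s$ and $w$, a drop certificate avoiding $s$, and the coupled bounds $c(P)\le(1+\delta)t+\lambda\,\drop(X,s)$, $c(K)-\tfrac{1}{1+\gamma}\drop(X,s)\le(1+\delta)t-\mu\,\drop(X,s)$, and $\tmerge(s,s')<(1+\alpha\delta)t$ for all $s'\in X$. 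Without this strengthened hypothesis you cannot prove Lemma~\ref{lem:lower_bound_meeting_time}: its proof applies~\eqref{item:strong_existing_component} to \emph{two} paths (a prefix of $P$ and an $\tilde S$-tight path from the last chain), unions the two certified components and drop certificates, and derives a contradiction with Corollary~\ref{cor:lower_bound_components}. It is that lemma which gives $t^{\tilde S v}>\tmerge^k/(1+\alpha\delta)$, so that combined with $d^{\tilde S}\le\tmerge^k$ the per-set contribution of the chain is of order $\alpha\delta\cdot\tmerge^k$ rather than order $\tmerge^k$. The per-chain use of Corollary~\ref{cor:lower_bound_components} that you describe only yields a \emph{lower} bound on $c(P_j)$ (the component cost sits on the small side of the inequality), not the needed upper bound on how much $\mathcal{C}^j$ can contribute; the strengthened hypothesis, with its explicit $-\mu\,\drop$ slack term, is what converts the local-optimality inequality into a usable budget.

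A second structural mismatch: you treat all chains symmetrically and propose to sum chain-local bounds, whereas the paper handles them asymmetrically. The chains $\mathcal{C}^1,\dots,\mathcal{C}^{k-1}$ are absorbed entirely by applying~\eqref{item:strong_existing_component} to the prefix $P_{[s^*,u_1]}$, and only the last chain $\mathcal{C}^k$ is analyzed via the potential function of Section~\ref{sec:single_chain}; Section~\ref{sec:completing_induction} then extends the prefix's component by a single attaching path from $\mathcal{C}^k$. (Also, your proposed drop certificate "together with suitably chosen sets from the $s$-chain" conflicts with the definition of $\drop(X,s)$, which explicitly excludes sets containing $s$.) Flattening this into a sum over chains has no propagated budget to draw on, and the constant-chasing you anticipate cannot be done without the nested accounting.
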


We prove Lemma~\ref{lem:distance_bound} in Section~\ref{sec:distance_bound}.
First, we show that it implies Lemma~\ref{lem:dual_feasible}.

\dualfeasible*
\begin{proof}Consider a time $t\in [0,t_{\max})$ and an active set $S\in \mathcal{S}^t$ with $r\notin S$.
For every vertex $v\in U^t_S$, there exists an $S$-tight $S$-$v$ path by Lemma~\ref{lem:reachability_implies_S-tight_path}.
Then Lemma~\ref{lem:distance_bound} implies that this path has length at most $(1+\beta\delta)\cdot t$ and hence
\begin{equation}\label{eq:dist_root}
\dist(S,v)\ \coloneqq\ \min_{s\in S} \dist(s,v)\ \le\ \highlight{(1+\beta\delta)\cdot t \ <\ 2 \cdot t}.
\end{equation}
Because $r\notin S$, we have $d^S \le \tmerge(s,r)$ for all $s\in S$ by Observation~\ref{obs:deactivation_time_lower_bounds_merge_time}.
Thus, by Observation~\ref{obs:dist_lb_merge_time}, we have
$\dist(S,r) \ \ge\ 2 \cdot d^S > 2 \cdot t$.
This implies $r\notin U^t_S$. 
\end{proof}
We  remark (although we will not need this fact) that the same argument implies that for any active set $S$ at time $t$, no terminal $\overline{s}\in R\setminus S$ is contained in $U^t_S$, i.e., the terminals contained in $U^t_S$ are exactly those in~$S$.
The special case $\overline{s}=r$ is the statement on Lemma~\ref{lem:dual_feasible}.

The reason for choosing the constant $\delta=\deltaval$ smaller than required for \eqref{eq:dist_root} is that we will need this small value of $\delta$ to guarantee various structural properties in the proof of Lemma~\ref{lem:distance_bound}.

It now remains to prove Lemma~\ref{lem:distance_bound}.

\section{Proof of Lemma~\ref{lem:distance_bound}}
\label{sec:distance_bound}

We continue to work with the graph  $(V^{\epsilon}, \overrightarrow{E}^{\epsilon})$  throughout this section, where we prove Lemma~\ref{lem:distance_bound}.
We observe that the set of $\delta$-tight edges changes only a finite number of times at discrete time steps.
In particular, the set of $S$-tight paths for sets $S\in \mathcal{S}$ changes only a finite number of times at discrete time steps.
This will allow us to prove Lemma~\ref{lem:distance_bound} by induction over time.
In fact, we will prove a stronger statement than Lemma~\ref{lem:distance_bound} by induction.
This will be useful because we then have a stronger induction hypothesis that we can use in our induction step.
We start by explaining the statement we prove by induction.

\subsection{Inductive argument to construct components}
\label{sec:induction}

To explain the statement we prove by induction, we need the following notation.
For a set $X$ of terminals and a terminal $s\in X$, we write
\[ \textstyle
   \drop(X,s) \coloneqq \max \Bigl\{  2 \sum_{S\in \tilde{\mathcal{S}}} d^{S} : \tilde{\mathcal{S}}\text{ is a drop certificate for }X\text{ and }\tilde{\mathcal{S}} \subseteq 2^{V\setminus \{s\}}  \Bigr\}
\]
to denote the maximum value of a drop certificate consisting only of sets not containing the vertex $s$.
Considering such drop certificates with sets not containing a particular vertex $s$ will be useful when extending a component to a larger component and constructing a drop certificate for the larger component from the drop certificate of the original component.
We remark that by Corollary~\ref{cor:lower_bound_components} we have $c(K) - \tfrac{1}{1+\gamma} \cdot \drop(X, s) > 0$ for every component $K$ connecting a terminal set $X$ with $s\in X$.

Let us fix constants $\alpha \coloneqq \alphaval$, $\mu\coloneqq \muval$, and $\lambda\coloneqq \lambdaval$.
We are now ready to state the claim that we will prove by induction.
For an illustration of~\eqref{item:strong_existing_component}, see Figure~\ref{fig:claim_induction}.

\begin{claim}\label{claim:main}
Consider a time $t \in [0, t_{\max})$, a set $S \in \mathcal{S}$ with $t <  d^{S}$ and $r\notin S$, and an $S$-tight path $P$ (at time $t$).
Let $s$ be the start vertex of $P$, which is a terminal, and let $w$ be the end vertex of $P$, which can be an arbitrary vertex.
Then we have the following:
\begin{enumerate}[(a)]\itemsep6pt
\item \label{item:strong_existing_component}
There exists a terminal set $X\subseteq R$ with $s\in X$  and a component $K$ connecting $X\cup \{w\}$ such that
\begin{align*}
c(P) \ \le&\ (1+\delta) \cdot t + \lambda \cdot \drop(X, s) \\
c(K) - \tfrac{1}{1+\gamma} \cdot \drop(X, s) \ \le&\  (1+\delta) \cdot t - \mu \cdot \drop(X,s)  \\
\tmerge(s,s') \ <&\  (1+\alpha\delta)\cdot t &\text{ for all } s' \in X.
\end{align*}

\item \label{item:actual_distance_bound}
 We have $c(P) \le (1+\beta\delta) \cdot t$.
\end{enumerate}
\end{claim}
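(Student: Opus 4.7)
The plan is to prove Claim~\ref{claim:main} by induction on time, treating \ref{item:strong_existing_component} and \ref{item:actual_distance_bound} simultaneously. Since the set of $\delta$-tight edges, and hence the collection of all $S$-tight paths, changes only at finitely many discrete times, the induction is well defined. I will first observe that \ref{item:actual_distance_bound} follows from \ref{item:strong_existing_component}: since $s\in X$ and $K$ connects $X$, Corollary~\ref{cor:lower_bound_components} gives $c(K)>\tfrac{1}{1+\gamma}\drop(X,s)$, so the second inequality of \ref{item:strong_existing_component} forces $\drop(X,s)<(1+\delta)t/\mu$. Plugging into the first inequality yields $c(P)<(1+\delta)(1+\lambda/\mu)\,t$, and the numerical choice of $\delta,\lambda,\mu,\beta$ in Table~\ref{table:constants} is tuned so that $(1+\delta)(1+\lambda/\mu)\le 1+\beta\delta$.

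The main content is thus \ref{item:strong_existing_component}. For the base case, when $P$ consists only of the vertex $s=w$, I take $X=\{s\}$ and $K=(\{s\},\emptyset)$; then $\drop(X,s)=0$ and each inequality reduces to $0\le(1+\delta)t$. For the induction step, I examine which active sets $S'\in\mathcal{S}$ \emph{contributed} to some edge $e\in E(P)$, i.e., which sets $S'$ appear (via $z_{U_{S'}}$) in the $\delta$-tightness equation of~$e$. These contributing sets split naturally into two groups: those containing $s$ and those not. The first group forms a chain in the laminar family $\mathcal{S}$, and its total contribution along $P$ accounts cleanly for the $(1+\delta)\,t$ additive term in the first inequality; this is the ``base contribution'' of Section~\ref{sec:outline_analysis}.

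The second group is the crux. Following the outline, I will show that these sets partition into chains $\mathcal{C}^1,\dots,\mathcal{C}^k$ with two structural properties: (i)~sets from distinct chains are disjoint, and (ii)~each chain $\mathcal{C}^i$ contributes only on a subpath $P_i\subseteq P$, with the $P_i$ pairwise essentially disjoint (at most one shared edge of cost $\le\epsilon$, which is negligible by~\eqref{eq:choice_of_epsilons}). For each chain $\mathcal{C}^i$ I pick a terminal $s_i$ lying in the smallest set of $\mathcal{C}^i$; since some $S'\in\mathcal{C}^i$ contributes to the first edge of $P_i$ at some earlier time $t_i<t$, and $s_i\in U^{t_i}_{S'}$, the induction hypothesis applies to an $S'$-tight $s_i$-path ending on $P_i$. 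This yields, for each $i$, an attaching path $Q_i$, a small component $K_i$, and a local drop certificate $\tilde{\mathcal{S}}^i\subseteq 2^{V\setminus\{s_i\}}$. I then assemble the global component $K$ as the union of $P$, the $Q_i$, and the $K_i$; define $X$ to be $\{s\}$ together with the terminals connected by the $K_i$; and take $\tilde{\mathcal{S}}\coloneqq\bigcup_i(\mathcal{C}^i\cup\tilde{\mathcal{S}}^i)$, verifying that this is a valid drop certificate for $X$ avoiding $s$ (this uses that sets of distinct chains are disjoint, that each $\mathcal{C}^i$ separates $s_i$ from $s$, and that $\tilde{\mathcal{S}}^i$ already separates the terminals of $K_i$ from $s_i$).

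The remaining work, which I expect to be the main obstacle, is a delicate per-chain accounting that combines the inductive bounds through a potential function, of the type suggested in Section~\ref{sec:single_chain}. Roughly, for each chain $\mathcal{C}^i$ one shows that its contribution to $c(P)$ is bounded by $2\sum_{S'\in\mathcal{C}^i}d^{S'}$ plus the inductive length $(1+\delta)t_i$ of the attached subpath, while its contribution to $c(K)$ is controlled against $\tfrac{1}{1+\gamma}\cdot 2\sum_{S'\in\mathcal{C}^i}d^{S'}$ via local $\gamma$-MST-optimality (Corollary~\ref{cor:lower_bound_components}); the merge-time bound $\tmerge(s,s')<(1+\alpha\delta)t$ for each new terminal follows by Observation~\ref{obs:merge_time_max}, as every $s'\in X$ is linked to $s$ through a sequence of sets whose deactivation times are all below $(1+\alpha\delta)t$. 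The constants $\lambda,\mu,\alpha,M$ must be chosen so that a linear combination, summed over chains, of length contributions and certificate values $2\sum_{S'\in\mathcal{C}^i}d^{S'}$ remains nonnegative throughout the induction — this is precisely the tuning carried out implicitly by the numerical values in Table~\ref{table:constants}.
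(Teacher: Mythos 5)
Your high-level architecture is right in several respects: the double role of the induction, the observation that~\eqref{item:actual_distance_bound} follows from~\eqref{item:strong_existing_component} via Corollary~\ref{cor:lower_bound_components}, the split of contributing sets into those containing $s$ and the rest, and the chain decomposition $\mathcal{C}^1,\dots,\mathcal{C}^k$ with pairwise-disjoint contributing subpaths. However, the step where you assemble $K$, $X$, and the drop certificate is where the argument breaks, and it diverges fundamentally from what the paper does. You propose to attach, \emph{for every chain} $\mathcal{C}^i$, a path $Q_i$ and local component $K_i$ from the induction hypothesis, and to form $\tilde{\mathcal{S}}\coloneqq\bigcup_i(\mathcal{C}^i\cup\tilde{\mathcal{S}}^i)$. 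This cannot be a drop certificate: the definition forces $|\tilde{\mathcal{S}}|=|X|-1$, but each $\tilde{\mathcal{S}}^i$ already has cardinality $|X_i|-1$, so adding the entire chain $\mathcal{C}^i$ (which may have many nested sets) overfills the certificate whenever some $|\mathcal{C}^i|>1$. Relatedly, your per-chain bound ``contribution to $c(P)$ is at most $2\sum_{S'\in\mathcal{C}^i}d^{S'}$'' mismatches the structure: the sets of a chain are nested, so their active intervals are disjoint and their contributions along $P$ \emph{telescope} to roughly $d^{S_{\max}}-a^{S_{\min}}$, not to the sum $\sum_{S'} d^{S'}$. That telescoping is exactly what the potential $\pi(v)=\sum_{S\in\mathcal{C}^k}\pi(S,v)$ in Section~\ref{sec:single_chain} encodes, and it is the reason the paper can get away with adding only \emph{one} terminal $\overline{s}$ and \emph{one} set $\overline{S}\in\mathcal{C}^k$ to the drop certificate per recursion level.

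The paper also does not handle all chains at once. It peels off only the last chain $\mathcal{C}^k$: it chooses a cut vertex $u_1$ with $t^{S^*u_1}\le\tmerge^k/(1+\alpha\delta)$, applies the induction hypothesis to the prefix $P^*_{[s^*,u_1]}$ (this is an induction on edge count at the same or an earlier time, not solely on time, which you omit), and only the single chain $\mathcal{C}^k$ is analyzed directly via the potential function on the suffix. The multiplicative scale separation $\tmerge^{j+1}>M\cdot\tmerge^{j}$ from Lemma~\ref{lem:different_merge_times} is what makes this peel-off recursion converge; you mention $M$ but never invoke it as the mechanism that separates the chains across scales and keeps the cumulative error small. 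As currently written your sketch has no mechanism to prevent the number of sets in the proposed certificate from growing without bound, nor to keep the $(1+\delta)t$, $\lambda\drop$, and $\mu\drop$ terms balanced across chains; the ``tune the constants'' closing remark does not repair a structural mismatch. To fix this you would need to replace the all-chains-at-once assembly by the paper's prefix-recursion plus single-chain-potential argument, or find a genuinely new way to certify a drop with only $|X|-1$ sets while still absorbing the length contribution of every nested set in every chain.
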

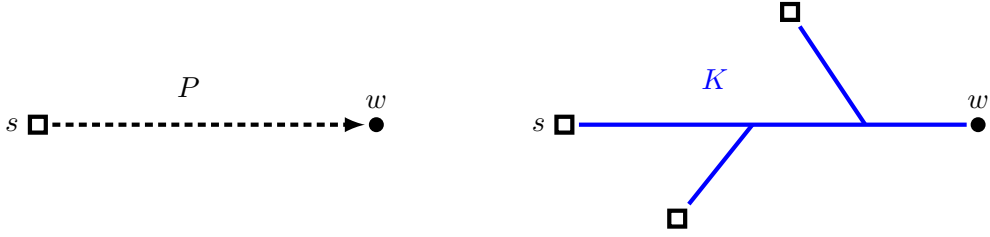
\begin{figure}
\begin{center}
\begin{tikzpicture}[very thick, yscale=0.5, xscale=1]
\tikzset{terminal/.style={
ultra thick,draw,fill=none,rectangle,minimum size=0pt, inner sep=3pt, outer sep=2.5pt}
}
\tikzset{steiner/.style={
fill=black,circle,inner sep=0em,minimum size=0pt, inner sep=2pt, outer sep=1.5pt}
}

\begin{scope}[shift={(-7,0)}]
\node[terminal] (start) at (-1.5, -4) {};
\node[left=3pt] at (start) {$s$};
\node[steiner] (u1) at (3,-4) {};
\node[above=2pt] at (u1) {$w$};
\begin{scope}[ultra thick, black, densely dashed, ->, >=latex]
\draw (start) to (u1);
\end{scope}
\node at (0.5,-3) {$P$};
\end{scope}

\node[terminal] (start) at (-1.5, -4) {};
\node[left=3pt] at (start) {$s$};
\node[terminal] (s1) at (0, -6.5) {};
\node[terminal] (s2) at (1.5, -1) {};
\node[steiner] (u1) at (4,-4) {};
\node[above=2pt] at (u1) {$w$};

\coordinate (m1) at (1,-4);
\coordinate (m2) at (2.5,-4);

\begin{scope}[ultra thick, blue]
\draw (start) -- (m1) -- (m2) -- (u1);
\draw (s1) -- (m1);
\draw (s2) -- (m2);
\end{scope}

\node[blue] at (0.5,-2.8) {$K$};

\end{tikzpicture}
\caption{\label{fig:claim_induction}
Illustration of \eqref{item:strong_existing_component} of Claim~\ref{claim:main}.
On the left, we see the path $P$ from $s$ to $w$.
On the right, we see the component $K$ connecting $w$ and a set $X$ of terminals (shown as squares).
}
\end{center}
\end{figure}

Let us provide a few observations about these technical statements.
First, we observe that \eqref{item:actual_distance_bound} is exactly what we claimed in Lemma~\ref{lem:distance_bound} and thus proving \eqref{item:strong_existing_component} and \eqref{item:actual_distance_bound} will complete the proof of Lemma~\ref{lem:distance_bound}.
Moreover, we observe that \eqref{item:actual_distance_bound}  is a direct consequence of \eqref{item:strong_existing_component} because by Corollary~\ref{cor:lower_bound_components} we have 
\[
0 \le c(K) - \tfrac{1}{1+\gamma} \cdot \drop(X, s)\ \le\  (1+\delta) \cdot t - \mu \cdot \drop(X,s),
\]
implying $\drop (X,s) \le \frac{1+\delta}{\mu} \cdot t$ and hence
\[
c(P) \le (1+\delta) \cdot t + \lambda \cdot \drop(X, s)\ \le\ \highlight{(1+\delta) \cdot (1+\tfrac{\lambda}{\mu}) \cdot t \ \le\ (1+\beta\delta) \cdot t},
\]
where we refer to Table~\ref{table:constants} for the values of the constants used in the last inequality.

Let us also briefly comment on the condition $\tmerge(s,s') \ <\  (1+\alpha\delta)\cdot t$ for all $s' \in X$.
It will be used in our proof when combining two components $K_1$ and $K_2$, connecting terminal sets $X_1$ and $X_2$, respectively to a new component connecting $X_1 \cup X_2$.
(The two components $K_1$ and $K_2$ arise by applying the induction hypothesis to different paths $P$.)
Then we will use this condition to argue that the terminal sets $X_1$ and $X_2$ are disjoint and to show that we can use drop certificates for $X_1$ and $X_2$ to construct a new drop certificate for $X_1 \cup X_2$.
See the proof of Lemma~\ref{lem:lower_bound_meeting_time} for details.

\bigskip

Recall  that the set of $\delta$-tight edges , and thus the set of $S$-tight paths for sets $S\in \mathcal{S}$, changes only a finite number of times at discrete time steps.
This allows us to prove Claim~\ref{claim:main} by induction over time.
At time~$0$ only edges of cost $0$ are $\delta$-tight and hence the claim trivially holds (with $X = \{s\}$).
We now consider a time $t^*> 0$ and may assume by induction that Claim~\ref{claim:main} holds for all times $t < t^*$ (for every set $S\in \mathcal{S}$ with $d^S \ge t$ and $r\notin S$ and for every $S$-tight path existing at time $t$).

To prove  Claim~\ref{claim:main} for the current time $t^*$ and a fixed set $S^*\in \mathcal{S}$ with $d^{S^*} > t^*$ and $r\notin S^*$, we apply induction over the number of edges of the path.
We fix an $S^*$-tight path  $P^*$ existing at time $t^*$ and may assume by the induction hypothesis that Claim~\ref{claim:main} holds for all $S^*$-tight paths with strictly fewer edges than $P^*$ (existing at the current time $t^*$).
We have already observed above that \eqref{item:actual_distance_bound} follows from \eqref{item:strong_existing_component} and Corollary~\ref{cor:lower_bound_components}.
Hence, it suffices to prove~\eqref{item:strong_existing_component}.

\bigskip
We start by bounding the total contribution of sets containing the start vertex $s^*$ of $P^*$ to edges of $P^*$ (Section~\ref{sec:base_contribution}).
Next, in Section~\ref{sec:structure_contributing}, we analyze the structure of other sets contributing to $P^*$ and prove that these can be partitioned into chains $\Cscr^1,\dots,\Cscr^k$ as outlined in Section~\ref{sec:outline_analysis}.
In Section~\ref{sec:separating}, we then prove that we can subdivide the path $P^*$ into two subpaths (overlapping in only one edge which has length at most $\epsilon$ and is hence negligible) such that on the first subpath only sets from the chains $\Cscr^1,\dots, \Cscr^{k-1}$ (and sets containing $s^*$) contributed and on the second subpath only sets from the chain $\Cscr^k$ (and sets containing~$s^*$) contributed.
To the first subpath we will apply the induction hypothesis and the length of the second subpath will be analyzed in Section~\ref{sec:single_chain}.
Finally, Section~\ref{sec:completing_induction} describes how to put these ingredients together to prove~\eqref{item:strong_existing_component}. 

\subsection{Bounding the contribution of sets containing \boldmath $s^*$}
\label{sec:base_contribution}

Let $s^*$ be the start vertex of the path $P^*$ and let $w^*$ be the end vertex of $P^*$.
In this section we analyze the total contribution of sets containing $s^*$ to the path $P^*$ (until the current time $t^*$).

The following simple observation will be useful.

\begin{observation}\label{obs:monotonicity_reaching_times}
For a vertex $v$ and two sets $A, B \in \mathcal{S}$ with $A\subseteq B$, we have $t^{Bv} \leq t^{Av}$.
\end{observation}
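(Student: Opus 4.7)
The plan is to observe that the observation follows immediately from the definitions, by noting that whether or not an edge is $\delta$-tight at a given time depends only on the current dual solution $z^t$ and not on the set $S$ under consideration, and then exploiting the set inclusion $A \subseteq B$.

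More precisely, recall that $t^{Sv}$ is defined to be the first time $t$ at which $v$ is reachable from $S$ by $\delta$-tight edges, i.e., the first time $t$ at which there exists some terminal $s\in S$ and some $s$-$v$ path all of whose edges are $\delta$-tight at time $t$. Fix the time $t \coloneqq t^{Av}$. By definition of $t^{Av}$, there exists a vertex $s\in A$ and an $s$-$v$ path $Q$ all of whose edges are $\delta$-tight at time $t$. Since $A \subseteq B$, we have $s\in B$, so $Q$ is also a $B$-$v$ path of $\delta$-tight edges at time $t$. Hence $v$ is reachable from $B$ by $\delta$-tight edges at time $t$, which by definition of $t^{Bv}$ means $t^{Bv} \le t = t^{Av}$.

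There is no real obstacle here, since the statement is essentially tautological: the only thing that matters is that $\delta$-tightness is a global property of edges (depending only on $z^t$), so enlarging the starting set $A$ to a superset $B$ can only make reachability easier, which means it can only happen earlier in time. No induction, no use of local $\gamma$-MST-optimality, and no use of the particular structure of the dual growth process is needed.
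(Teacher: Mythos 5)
Your proof is correct, and since the paper states this as an observation without giving a proof, the argument you spell out (a witness $s$-$v$ path of $\delta$-tight edges for $s\in A\subseteq B$ is equally a witness for $B$, and $\delta$-tightness is a global property independent of $S$) is exactly the implicit justification the paper relies on.
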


\begin{lemma}\label{lem:contribution_basic}
For every edge $(v,w)$ of the path $P^*$, the total contribution of terminal sets $S\in \mathcal{S}$ with $s^*\in S$ is at most $t^{S^*w}-t^{S^*v}$.
\end{lemma}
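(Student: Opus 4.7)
The plan is to exploit two structural features: the laminarity of $\mathcal{S}$ and the $S^*$-tightness of $P^*$. By laminarity, the sets $S \in \mathcal{S}$ containing $s^*$ form a chain $\{s^*\} = S_0 \subsetneq S_1 \subsetneq \dots \subsetneq S_k = S^* \subsetneq S_{k+1} \subsetneq \dots$ Any set $S$ strictly above $S^*$ in this chain satisfies $a^S \ge d^{S^*} > t^*$; moreover, by that time the entire prefix $P^*_{[s^*,w]}$ has all edges $\delta$-tight (since tight edges stay tight), so $w \in U_S^{a^S}$ and such an $S$ never contributes to $(v,w)$. The remaining sets $S_0,\dots,S_k$ have active intervals that tile $[0, d^{S^*})$, so at every time $\tau$ in this interval there is a unique active set $S(\tau)$ containing $s^*$, and the total contribution we must bound equals
\[
\int_0^{d^{S^*}} \bigl[\, v \in U_{S(\tau)}^\tau \text{ and } w \notin U_{S(\tau)}^\tau \,\bigr]\, d\tau.
\]

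Next I would establish a monotonicity property: the indicator $h_v(\tau) := [\, v \in U_{S(\tau)}^\tau\,]$ is non-decreasing in $\tau$. Indeed, as $\tau$ grows both $S(\tau)$ and the set of $\delta$-tight edges grow, so once a $\delta$-tight path from $S(\tau_0)$ to $v$ exists at time $\tau_0$, the same path still witnesses reachability from $S(\tau) \supseteq S(\tau_0)$ at every $\tau \ge \tau_0$. Setting $T_v := \min\{\tau : v \in U_{S(\tau)}^\tau\}$ and defining $T_w$ analogously, the integral above simplifies to $\max(0,\, T_w - T_v)$, so it suffices to prove $T_w - T_v \le t^{S^*w} - t^{S^*v}$. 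The lower bound $T_v \ge t^{S^*v}$ is immediate from Observation~\ref{obs:monotonicity_reaching_times}: for $\tau < t^{S^*v}$ the vertex $v$ is not reachable from $S^*$, hence not from $S(\tau) \subseteq S^*$, at time $\tau$.

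The matching upper bound $T_v \le t^{S^*v}$ (and symmetrically $T_w \le t^{S^*w}$) is where the $S^*$-tightness of $P^*$ is used. By the inductive characterization of $S^*$-tight paths recalled after Definition of $S$-tight in Section~\ref{sec:S-tight}, the prefix $P^*_{[s^*,v]}$ consists entirely of edges that are $\delta$-tight already at time $t^{S^*v}$; thus it is a $\delta$-tight $s^*$-to-$v$ walk at that time. Since $t^{S^*v} \le t^{S^*w} \le t^* < d^{S^*}$, the set $S(t^{S^*v})$ is well-defined and contains $s^*$, so this walk witnesses $v \in U_{S(t^{S^*v})}^{t^{S^*v}}$, yielding $T_v \le t^{S^*v}$. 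The same argument applied to the prefix $P^*_{[s^*,w]}$ yields $T_w \le t^{S^*w}$, and substituting into the bound on the integral completes the proof. The main obstacle I anticipate is being careful that the prefix is $\delta$-tight at the earlier time $t^{S^*v}$ rather than only at the current time $t^*$; this is exactly what the definition of $S^*$-tight was designed to guarantee, so everything else reduces to bookkeeping.
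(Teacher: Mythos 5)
Your proof is correct and takes essentially the same approach as the paper: both hinge on (i) laminarity forcing all relevant contributing sets $S$ with $s^*\in S$ to lie between $\{s^*\}$ and $S^*$, (ii) Observation~\ref{obs:monotonicity_reaching_times} to lower-bound reaching times, and (iii) $S^*$-tightness of $P^*$ to obtain the matching upper bound (the paper phrases this as $t^{\{s^*\}v}=t^{Sv}=t^{S^*v}$, while you phrase it as $T_v=t^{S^*v}$ via the monotone indicator $h_v$). Your reformulation as an integral with first-hitting times $T_v,T_w$ is a slightly more explicit bookkeeping of the same observation the paper uses, namely that the active set containing $s^*$ can contribute to $(v,w)$ only during the interval $[t^{S^*v},t^{S^*w})$.
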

\begin{proof}
Consider a set $S\in \mathcal{S}$ with $s^*\in S$ that contributed to an edge of $P^*$ (before the current time $t^*$), implying $a^S < t^* < d^{S^*}$ (where the last inequality holds by the assumption of Claim~\ref{claim:main}).
We conclude that $S^*$ cannot be a proper subset of $S$.
Hence, because $s^*\in S \cap S^*$ and $\mathcal{S}$ is a laminar family, we have $S\subseteq S^*$.
Then by Observation~\ref{obs:monotonicity_reaching_times}, we have for every vertex $v$,
\[
  t^{\{s^*\}v} \ \ge\ t^{Sv} \ \ge\ t^{S^*v}.
\]
Because $P^*$ is an $S^*$-tight path starting at the terminal $s^*\in S^*$, we have for every vertex $v$ of $P^*$ that $t^{\{s^*\}v} = t^{S^*v}$ and hence $t^{Sv} = t^{S^*v}$.
Because $P^*$ is $S^*$-tight, we have  $t^{S^*v}  \le t^{S^*w}$ for every edge $(v,w)$ of $P^*$.
By the definition of the set $U^t_S$ for time $t$, an edge $(v,w)$ is only outgoing of $U^t_S$ if $t^{S^*v} = t^{Sv} \le t < t^{Sw}=t^{S^*w} $.
Thus, at most one outgoing edge of $U^t_S$ is contained in $P^*$.
Because at every time $t$ there is at most one active set $S \in \mathcal{S}^t$ containing $s^*$, this completes the proof.
\end{proof}

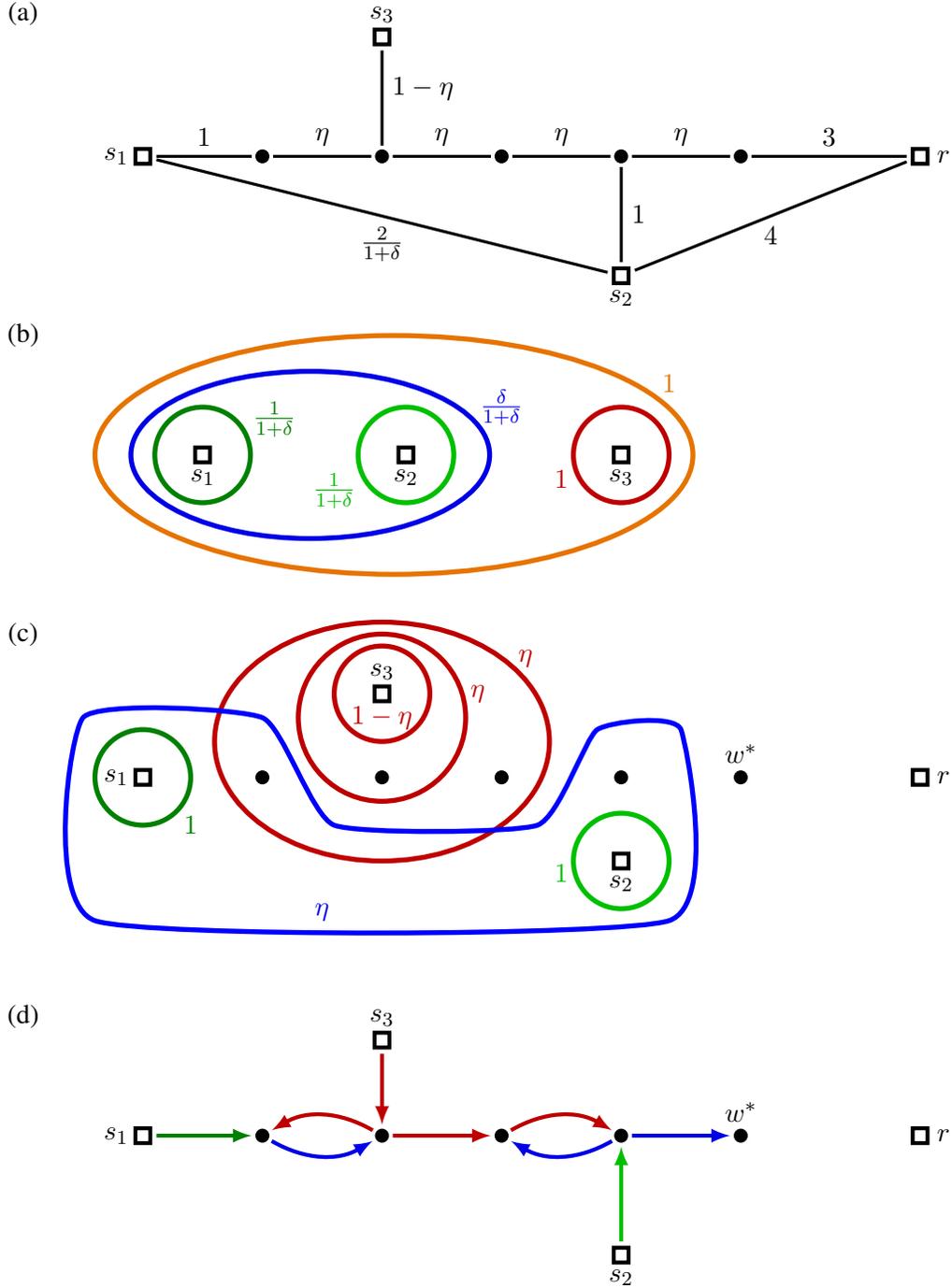
\begin{figure}
\begin{center}
\begin{tikzpicture}[scale=1.7]
\tikzset{terminal/.style={
ultra thick,draw,fill=none,rectangle,minimum size=0pt, inner sep=3pt, outer sep=2.5pt}
}
\tikzset{steiner/.style={
fill=black,circle,inner sep=0em,minimum size=0pt, inner sep=2pt, outer sep=1.5pt}
}
\tikzset{dual/.style={line width=2pt}}

\begin{scope}[shift={(0,15)}]
\node at (-1,2.2) {(a)};

\begin{scope}[every node/.style=terminal]
\node (s1) at (0,1) {};
\node (s2) at (4,0) {};
\node (s3) at (2,2) {};
\node (r) at (6.5,1) {};
\end{scope}
\node[left=2pt] at (s1) {$s_1$};
\node[below=2pt] at (s2) {$s_2$};
\node[above=2pt] at (s3) {$s_3$};
\node[right=3pt] at (r) {$r$};

\begin{scope}[every node/.style=steiner]
\node (v1) at (1,1) {};
\node (v2) at (2,1) {};
\node (v3) at (3,1) {};
\node (v4) at (4,1) {};
\node (w) at (5,1) {};
\end{scope}

\begin{scope}[very thick]
\draw (s1) -- node[above] {$1$} (v1) -- node[above] {$\eta$}  (v2) -- node[above] {$\eta$}  (v3) --   node[above] {$\eta$} (v4) --  node[above] {$\eta$} (w) -- node[above] {$3$} (r);
\draw (s3) -- node [right, pos=0.4]{$1-\eta$} (v2);
\draw (s2) -- node [right]{$1$} (v4);
\draw (s1) -- node[below] {$\frac{2}{1+\delta}$} (s2); 
\draw (s2) -- node[below] {$4$} (r);
\end{scope}

\end{scope}

\begin{scope}[shift={(0,12.5)}]
\node at (-1,2) {(b)};

\begin{scope}[every node/.style=terminal]
\node (s1) at (0.5,1) {};
\node (s2) at (2.2,1) {};
\node (s3) at (4,1) {};
\end{scope}
\node[below=2pt] at (s1) {$s_1$};
\node[below=2pt] at (s2) {$s_2$};
\node[below=2pt] at (s3) {$s_3$};

\begin{scope}[dual]
\draw[green!50!black] (s1) ellipse (0.4 and 0.4);
\draw[green!75!black] (s2) ellipse (0.4 and 0.4);
\draw[red!75!black] (s3) ellipse (0.4 and 0.4);
\draw[blue!90!black] (1.4,1) ellipse (1.5 and 0.7);
\draw[orange!90!black] (2.1,1) ellipse (2.5 and 1);
\end{scope}
\node[green!50!black] at (1.1,1.3) {$\frac{1}{1+\delta}$};
\node[green!75!black] at (1.6,0.7) {$\frac{1}{1+\delta}$};
\node[blue!90!black] at (3,1.4) {$\frac{\delta}{1+\delta}$};
\node[red!75!black] at (3.5,0.8) {$1$};
\node[orange!90!black] at (4.4,1.6) {$1$};

\end{scope}

\begin{scope}[shift={(0,9.8)}]
\node at (-1,2.2) {(c)};

\begin{scope}[every node/.style=terminal]
\node (s1) at (0,1) {};
\node (s2) at (4,0.3) {};
\node (s3) at (2,1.7) {};
\node (r) at (6.5,1) {};
\end{scope}
\node[left=2pt] at (s1) {$s_1$};
\node[below=2pt] at (s2) {$s_2$};
\node[above=2pt] at (s3) {$s_3$};
\node[right=3pt] at (r) {$r$};

\begin{scope}[every node/.style=steiner]
\node (v1) at (1,1) {};
\node (v2) at (2,1) {};
\node (v3) at (3,1) {};
\node (v4) at (4,1) {};
\node (w) at (5,1) {};
\end{scope}
\node[above=2pt] at (w) {$w^*$};

\begin{scope}[dual]
\draw[green!50!black] (s1) ellipse (0.4 and 0.4);
\draw[green!75!black] (s2) ellipse (0.4 and 0.4);
\draw[red!75!black] (s3) ellipse (0.4 and 0.4);
\draw[red!75!black] (2,1.5) ellipse (0.7 and 0.7);
\draw[red!75!black] (2,1.3) ellipse (1.4 and 1);
\draw [blue] plot [smooth cycle, tension=0.3] coordinates {(-0.4,-0.2)(-0.5,1.5) (1,1.5) (1.6,0.6) (3.3,0.6) (3.8, 1.4) (4.5, 1.4) (4.4,-0.2) };
\end{scope}

\node[red!75!black] at (2,1.5) {$1-\eta$};
\node[red!75!black] at (2.8,1.7) {$\eta$};
\node[red!75!black] at (3.2,2) {$\eta$};
\node[green!50!black] at (0.4,0.6) {$1$};
\node[green!75!black] at (3.5,0.2) {$1$};
\node[blue!90!black] at (1.5,-0.13) {$\eta$};
\end{scope}

\begin{scope}[shift={(0,6.8)}]
\node at (-1,2) {(d)};

\begin{scope}[every node/.style=terminal]
\node (s1) at (0,1) {};
\node (s2) at (4,0) {};
\node (s3) at (2,1.8) {};
\node (r) at (6.5,1) {};
\end{scope}
\node[left=2pt] at (s1) {$s_1$};
\node[below=2pt] at (s2) {$s_2$};
\node[above=2pt] at (s3) {$s_3$};
\node[right=3pt] at (r) {$r$};

\begin{scope}[every node/.style=steiner]
\node (v1) at (1,1) {};
\node (v2) at (2,1) {};
\node (v3) at (3,1) {};
\node (v4) at (4,1) {};
\node (w) at (5,1) {};
\end{scope}
\node[above=2pt] at (w) {$w^*$};

\begin{scope}[ultra thick, ->, >=latex]
\draw[green!50!black] (s1) to (v1);
\draw[green!75!black] (s2) to (v4);
\draw[red!75!black] (s3) to (v2);
\draw[red!75!black, bend right] (v2) to (v1);
\draw[red!75!black] (v2) to (v3);
\draw[red!75!black, bend left] (v3) to (v4);
\draw[blue!90!black, bend right] (v1) to (v2);
\draw[blue!90!black, bend left] (v4) to (v3);
\draw[blue!90!black] (v4) to (w);
\end{scope}

\end{scope}

\end{tikzpicture}
\caption{\label{fig:tight_path_needed}
An example illustrating that the conditions $d^{S^*} > t^*$ and $P^*$ being $S^*$-tight are necessary. 
We choose $\eta$ such that $0 < \eta \le \delta$.
The instance is given by the (metric closure of) the graph shown in (a).
The sets in $\Sscr$ not containing $r$ are shown in (b). 
The number next to a set $S$ shows the length $d^S-a^S$ of the time interval where $S$ is active.
(c) shows the (scaled) dual solution $(1+\delta)z^{t^*}$ at time $t^*= \frac{1+\eta}{1+\delta}$. 
The sets are shown in the same color as their corresponding set in (b).
(d) shows the $\delta$-tight edges at time $t^*$, where edges are colored by the set contributing to it. (In this example, this is only one set for each edge.)
There is only one path from $s_1$ to $w^*$ consisting of $\delta$-tight edges. This path is $\{s_1\}$-tight (but the deactivation time of $\{s_1\}$ is less than $t^*$) but it is not $\{s_1,s_2\}$-tight.
The contribution of sets containing $s_1$ is $\frac{1+2\eta}{1+\delta} > t^*$.
}
\end{center}
\end{figure}

Lemma~\ref{lem:contribution_basic} implies that the total contribution of the terminal sets $S\in \mathcal{S}$ with $s^*\in S$ to the path $P^*$ is at most 
\[
\sum_{(v,w)\in P^*} (t^{S^*w}-t^{S^*v}) = t^{S^*w^*}-t^{S^*s^*} \le t^* - 0 = t^*.
\]
Note that we crucially used both $d^{S^*} \ge t^*$ and the fact that $P^*$ is $S^*$-tight. See Figure~\ref{fig:tight_path_needed} for an example that these conditions are necessary.
\bigskip

We conclude that if every set that contributed to $P^*$ contains $s^*$, we have $t^* \ge \frac{1}{1+\delta} \cdot c(P^*)$ because all edges of $P^*$ are $\delta$-tight.
Then $c(P^*) \le (1+\delta) \cdot t^*$, which implies Claim~\ref{claim:main}  for $K=P^*$ and $X=\{s^*\}$.

The main difficulty when proving Claim~\ref{claim:main} is to bound the contribution of sets not containing the terminal~$s^*$.
Next, we will investigate the structure of these sets.

\subsection{Structure of the sets contributing to the path \boldmath $P^*$}
\label{sec:structure_contributing}

In this section we investigate the structure of the sets that contributed to edges of the path~$P^*$ (before the current time $t^*$).
We denote the collection of sets that contributed to $P^*$ but do not contain the vertex $s^*$ by $\mathcal{C} \subseteq \mathcal{S}$.
In Section~\ref{sec:base_contribution}, we have shown Claim~\ref{claim:main} in the special case where $\mathcal{C}= \emptyset$.
From now on, we assume that $\mathcal{C}$ is nonempty.

We will prove that we can partition $\mathcal{C}$ into chains $\mathcal{C}^1, \dots, \mathcal{C}^k$ such that any two sets from different chains are disjoint; see Figure~\ref{fig:example_contributing_sets}.
Moreover, we will prove that every set from $\mathcal{C}^1\cup \dots \cup\mathcal{C}^{k-1}$ is a subset of $S^*$.

\bigskip

Consider a set $S\in \mathcal{C}$. 
Then, because $S\in \mathcal{S}$ and $s^*\notin S$, the merge time $\tmerge(s,s^*)$ is the same for all terminals $s \in S$.
We denote this time by $\tmerge(S, s^*)$.
Because for every set $S\in \mathcal{C}$ we have $s^* \notin S$, Observation~\ref{obs:deactivation_time_lower_bounds_merge_time} implies the following.

\begin{observation}\label{obs:lower_bound_merge_time_sets}
For every set $S\in \mathcal{C}$, we have $\tmerge(S, s^*) \ge d^S$.
\end{observation}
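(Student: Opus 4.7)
The plan is to reduce the claim directly to Observation~\ref{obs:deactivation_time_lower_bounds_merge_time} by just unpacking the definitions. First I would note that for any $S\in \mathcal{C}$, the set $S$ was active at some time $t < t^*$ (since it contributed to an edge of $P^*$), so $r\notin S$, and in particular $S$ has a well-defined deactivation time $d^S$. By the definition of $\mathcal{C}$ we also have $s^*\notin S$.

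Next, I would fix an arbitrary terminal $s\in S$ and apply Observation~\ref{obs:deactivation_time_lower_bounds_merge_time} with $s_1 = s$ and $s_2 = s^*$: the biconditional ``$\tmerge(s,s^*) < d^S$ iff $s^*\in S$'' combined with $s^*\notin S$ yields $\tmerge(s,s^*) \ge d^S$. The common value $\tmerge(S,s^*)$ was defined in the paragraph immediately preceding the observation as $\tmerge(s,s^*)$ for any (equivalently, every) $s\in S$; this is well defined because $S\in \mathcal{S}$ is a part of some partition $\mathcal{S}^t$, so all terminals of $S$ merged among themselves strictly before any of them merged with the outside terminal $s^*$. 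The claim follows immediately.

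There is essentially no obstacle here — the observation is a bookkeeping restatement of the earlier biconditional. The only subtleties worth flagging are the well-definedness of $\tmerge(S,s^*)$ and the fact that membership in $\mathcal{C}$ implicitly forces $r\notin S$ (so that $d^S$ makes sense). Both are immediate from the definitions.
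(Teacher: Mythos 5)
Your proof is correct and takes the same route as the paper: the paper's one-line argument is exactly to apply Observation~\ref{obs:deactivation_time_lower_bounds_merge_time} together with $s^*\notin S$, which holds by the definition of $\mathcal{C}$. Your extra remarks on well-definedness of $\tmerge(S,s^*)$ and on $r\notin S$ are fine bookkeeping but add nothing beyond what the paper already assumes.
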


We now prove that a set $S\in\mathcal{C}$ cannot contribute to edges of the path $P^*$ that are too far away from the start vertex $s^*$ of $P^*$.

\begin{lemma}\label{lem:simple_bound_path_length_to_contribution}
If a set $S\in \mathcal{C}$ contributed to an edge $(u,v)$ of the path $P^*$, then
 \[
c(P^*_{[s^*,u]})\ \le\ (1+\beta\delta) \cdot \tmerge(S,s^*).
\]
\end{lemma}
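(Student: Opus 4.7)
My approach is to apply the induction hypothesis of Claim~\ref{claim:main} to the subpath $P^*_{[s^*,u]}$, which is itself $S^*$-tight (being a prefix of the $S^*$-tight path $P^*$) and has strictly fewer edges than $P^*$ (since $(u,v)$ is an edge of $P^*$, so $u\neq w^*$). Concretely, I would prove the sharper bound
\[
c(P^*_{[s^*,u]})\ \le\ (1+\beta\delta)\cdot \tau,\qquad \text{where } \tau\ \coloneqq\ \min\{t^*,\ \tmerge(S,s^*)\},
\]
which immediately implies the lemma since $\tau\le \tmerge(S,s^*)$.

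To set things up, fix a time $t_1\in [a^S,d^S)$ at which $S$ contributes to $(u,v)$, so $t^{Su}\le t_1<d^S$, and recall from Observation~\ref{obs:lower_bound_merge_time_sets} that $d^S\le \tmerge(S,s^*)$. The degenerate case $u=s^*$ is trivial since $c(P^*_{[s^*,u]})=0$; assume $u\neq s^*$ henceforth. The argument then splits on whether $\tmerge(S,s^*)\ge t^*$ or $\tmerge(S,s^*)<t^*$, corresponding to invoking the inner induction (on path length) or the outer induction (on time) of Claim~\ref{claim:main}.

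If $\tmerge(S,s^*)\ge t^*$, then $\tau=t^*$ and I simply apply the \emph{inner} induction hypothesis to $P^*_{[s^*,u]}$ at time $t^*$: its hypotheses are satisfied because $P^*_{[s^*,u]}$ is $S^*$-tight at time $t^*$, $t^*<d^{S^*}$, $r\notin S^*$, and it has strictly fewer edges than $P^*$. This gives $c(P^*_{[s^*,u]})\le (1+\beta\delta)\cdot t^*$, as desired.

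The interesting case is $\tmerge(S,s^*)<t^*$. Here I let $T\in\mathcal{S}$ be the smallest set containing $S\cup\{s^*\}$, so $a^T=\tmerge(S,s^*)$. The key sub-step, and the only non-routine one, is to show $T\subseteq S^*$: since both $T$ and $S^*$ contain $s^*$, laminarity of $\mathcal{S}$ forces $T\subseteq S^*$ or $S^*\subsetneq T$, and the latter would imply $d^{S^*}\le a^T=\tmerge(S,s^*)<t^*$, contradicting the standing assumption $t^*<d^{S^*}$. With $S\subseteq T\subseteq S^*$ in hand, monotonicity of reaching times (Observation~\ref{obs:monotonicity_reaching_times}) yields
\[
t^{S^*u}\ \le\ t^{Tu}\ \le\ t^{Su}\ \le\ t_1\ <\ d^S\ \le\ a^T\ =\ \tmerge(S,s^*),
\]
so every edge of $P^*_{[s^*,u]}$ is $\delta$-tight at time $\tmerge(S,s^*)$. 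Hence $P^*_{[s^*,u]}$ is an $S^*$-tight path existing at time $\tmerge(S,s^*)<d^{S^*}$, and since $\tmerge(S,s^*)<t^*$ I may now invoke the \emph{outer} induction hypothesis at that earlier time to conclude $c(P^*_{[s^*,u]})\le (1+\beta\delta)\cdot \tmerge(S,s^*)$. Apart from the short laminarity argument giving $T\subseteq S^*$, every step is a direct application of the induction hypothesis together with the basic monotonicity of reaching times.
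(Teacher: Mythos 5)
Your proof is correct and takes essentially the same route as the paper's: split on whether $\tmerge(S,s^*)$ is at least $t^*$, show in the harder case that $S\subseteq S^*$ so that $t^{S^*u}\le t^{Su}< d^S\le\tmerge(S,s^*)$, and then invoke the induction hypothesis (part~\eqref{item:actual_distance_bound} of Claim~\ref{claim:main}). The only cosmetic difference is that you establish $S\subseteq S^*$ by introducing the auxiliary set $T$ and using laminarity, whereas the paper gets $S\subsetneq S^*$ directly from Observation~\ref{obs:deactivation_time_lower_bounds_merge_time}; also, having already cited Observation~\ref{obs:lower_bound_merge_time_sets} for $d^S\le\tmerge(S,s^*)$, the further step $d^S\le a^T=\tmerge(S,s^*)$ is redundant (and mildly abuses notation, since $a^T$ is only formally defined when $r\notin T$, though your argument does not actually depend on this).
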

\begin{proof}
First, we show $t^{S^*u} \le \tmerge(S,s^*)$.
If $ \tmerge(S,s^*) \ge t^*$, this follows from $t^{S^*u} \le t^* \le \tmerge(S, s^*)$, where we used that all edges of the path $P^*_{[s^*,u]}$ are $\delta$-tight at the current time $t^*$.

If $\tmerge(S,s^*) < t^*$, the fact that $t^* < d^{S^*}$ (by the assumptions of Claim~\ref{claim:main}) implies $S \subsetneq S^*$ by Observation~\ref{obs:deactivation_time_lower_bounds_merge_time}.
Hence, using Observation~\ref{obs:monotonicity_reaching_times} and Observation~\ref{obs:lower_bound_merge_time_sets} we then get $t^{S^*u} \le t^{Su} \le d^{S} \le \tmerge(S,s^*) $, where for the second last inequality we used that $S$ contributed to the edge $(u,v)$ before time $d^{S}$.
We conclude $t^{S^*u} \le \tmerge(S,s^*)$ in both cases.

Because all edges of the path $P^*_{[s^*,u]}$ are $S^*$-tight (using that $P^*$ is $S^*$-tight), the path $P^*_{[s,u]}$ is an $S^*$-tight path and it has strictly fewer edges than $P^*$ (using that $(u,v)$ is an edge of $P^*$).
Therefore, we can apply \eqref{item:actual_distance_bound} from the induction hypothesis to obtain 
$c(P^*_{[s^*, u]}) \le (1+\beta\delta)\cdot t^{S^*u} \le (1+\beta\delta)\cdot \tmerge(S,s^*)$.
\end{proof}

We now prove that for every set $S\in \mathcal{C}$, its merge time $\tmerge(S, s^*)$ is approximately equal to its deactivation time $d^{S}$.
Recall that by Observation~\ref{obs:lower_bound_merge_time_sets} we have $\tmerge(S,s^*) \ge d^S$.
The following lemma shows that the deactivation time $d^S$ cannot be much smaller than the merge time $\tmerge(S, s^*)$.

\begin{lemma}\label{lem:merge_time_approx_deactivation_time}
For every set $S\in \mathcal{C}$, we have
\[
\min\{t^*,d^S\}\ >\ \tfrac{1-\beta\delta}{1+\beta\delta}\cdot  \tmerge(S,s^*).
\]
\end{lemma}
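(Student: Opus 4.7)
My plan is to exploit the fact that $S$ contributing to an edge of $P^*$ forces a vertex on $P^*$ to be reachable from $S$ by $\delta$-tight edges at some earlier time. Then I would combine two length bounds via the triangle inequality with the merge-time distance lower bound from Observation~\ref{obs:dist_lb_merge_time}.

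First I would use that $S\in \Cscr$ means $S$ is active and contributes to some edge $(u,v)$ of $P^*$ at some time $t$ with $a^S \le t < d^S$ and $t$ strictly before the current time $t^*$. This forces $u\in U_S^t$, hence $t^{Su} \le t$, so
\[
t^{Su}\ <\ \min\{t^*,\ d^S\}.
\]
Next, by Lemma~\ref{lem:reachability_implies_S-tight_path} there is an $S$-tight $s$-$u$ path $P$ for some $s\in S$, and applying the time-induction hypothesis in Claim~\ref{claim:main}\eqref{item:actual_distance_bound} to $S$, the time $t^{Su}$, and this path $P$ (noting $r\notin S$ since $S$ was active, and $t^{Su} < d^S$) gives $c(P) \le (1+\beta\delta)\cdot t^{Su}$.

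The second length bound is $c(P^*_{[s^*,u]}) \le (1+\beta\delta)\cdot \tmerge(S, s^*)$ from Lemma~\ref{lem:simple_bound_path_length_to_contribution}. Combining via the triangle inequality and the lower bound $\dist(s^*,s) \ge 2\,\tmerge(s^*,s) = 2\,\tmerge(S,s^*)$ from Observation~\ref{obs:dist_lb_merge_time} (using that $\tmerge(s,s^*)$ is the same for all $s\in S$), I obtain
\[
2\,\tmerge(S,s^*)\ \le\ \dist(s^*,s)\ \le\ c(P^*_{[s^*,u]}) + c(P)\ \le\ (1+\beta\delta)\bigl(\tmerge(S,s^*) + t^{Su}\bigr).
\]
Rearranging yields $t^{Su} \ge \tfrac{1-\beta\delta}{1+\beta\delta}\cdot \tmerge(S,s^*)$, and combining with $t^{Su} < \min\{t^*,d^S\}$ delivers the desired strict inequality.

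The only technicality I need to be careful about is that the time-induction hypothesis is only available at times strictly less than $t^*$. Thus I must have $t^{Su} < t^*$ strictly, not merely $t^{Su}\le t^*$. This follows from interpreting ``$S$ contributed to $(u,v)$ before time $t^*$'' as positive-measure contribution, whose interval $[\max(a^S,t^{Su}),\min(d^S,t^{Sv},t^*))$ is nonempty precisely when $t^{Su}$ is strictly less than both $t^*$ and $d^S$; the same strictness is what makes the final inequality strict as required.
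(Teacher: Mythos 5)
Your proof is correct and follows essentially the same route as the paper: pick an edge $(u,v)$ of $P^*$ to which $S$ contributed, note $t^{Su}<\min\{t^*,d^S\}$, get an $S$-tight path to $u$ via Lemma~\ref{lem:reachability_implies_S-tight_path}, bound its length by the induction hypothesis \eqref{item:actual_distance_bound}, bound $c(P^*_{[s^*,u]})$ by Lemma~\ref{lem:simple_bound_path_length_to_contribution}, and combine with $\dist(S,s^*)\ge 2\tmerge(S,s^*)$ via the triangle inequality. Your extra care about the strictness of $t^{Su}<t^*$ (so that the induction hypothesis applies) makes explicit a point the paper states without elaboration, but the substance is identical.
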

\begin{proof}
By the definition of $\tmerge(S,s^*)$ and Observation~\ref{obs:dist_lb_merge_time}, we have $\dist(S,s^*) \ge 2 \cdot \tmerge(S, s^*)$.
Let $(u,v)$ be an edge of $P^*$ to which $S$ contributed.
Then $t^{Su} < \min \{t^*,d^{S}\}$.
By Lemma~\ref{lem:reachability_implies_S-tight_path} there existed an $S$-tight path $P$ from $S$ to $u$ at time $t^{Su} < \min \{t^*,d^{S}\}$.
Using \eqref{item:actual_distance_bound} of the induction hypothesis applied to $P$ at time $t^{Su}$ and Lemma~\ref{lem:simple_bound_path_length_to_contribution} we conclude
\begin{align*}
2 \cdot \tmerge(S,s^*)\ \le&\ \dist(S, s^*)\\
 \le&\ \dist(S,u)  + c(P^*_{[s^*,u]}) \\
 <&\  (1+\beta\delta) \cdot \min\{t^*,d^{S}\} +(1+\beta\delta)\cdot t_{\mathrm{merge}}(S, s^*),
\end{align*}
implying $\min\{t^*, d^{S} \} > \tfrac{1-\beta\delta}{1+\beta\delta}\cdot  \tmerge(S,s^*)$.
\end{proof}

Next, we prove that any two disjoint sets from $\mathcal{C}$ have a very different merge time with $s^*$.
Let us fix a constant 
\[
M\ \coloneqq\  \frac{\frac{2}{1+\gamma}-1-\beta\delta}{\frac{(1+\beta\delta)^2}{1-\beta\delta}-(\frac{2}{1+\gamma}-1-\beta\delta)}\approx \Mval.
\]
For intuition, we advise the reader to think about $M$ as any sufficiently large constant.
(In fact it would be possible to choose $M$ arbitrarily large by choosing $\delta$ and $\gamma$ small enough, which comes at the cost of a worse but still better-than-$2$ integrality gap upper bound for \ref{eq:bcr-tree}.)

\begin{lemma}\label{lem:different_merge_times}
Let $S_1,S_2 \in \mathcal{C}$ with $S_1 \cap S_2 =\emptyset$, where without loss of generality  we have $\tmerge(S_1,s^*) \le \tmerge(S_2,s^*)$.
Then we have $\tmerge(S_2,s^*) > M \cdot \tmerge(S_1, s^*)$.
\end{lemma}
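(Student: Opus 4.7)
The plan is a proof by contradiction. Write $t_i := \tmerge(S_i, s^*)$ and $d_i := d^{S_i}$, and assume $t_2 \le M t_1$. I will exhibit a component $K$ connecting a terminal set $X \ni s^*$ whose cost is too small to be consistent with Corollary~\ref{cor:lower_bound_components}. Pick any $s_i \in S_i$ and any edge $(u_i, v_i)$ of $P^*$ to which $S_i$ contributed; by Lemma~\ref{lem:reachability_implies_S-tight_path} there is an $S_i$-tight path $Q_i$ from $s_i$ to $u_i$ existing at time $t^{S_i u_i} < \min\{t^*, d_i\}$. Let $u$ be whichever of $u_1, u_2$ is farther from $s^*$ along $P^*$, so that both $u_1, u_2$ lie on $P^*_{[s^*, u]}$, and set $K := P^*_{[s^*, u]} \cup Q_1 \cup Q_2$; this component is connected, and its terminal set $X$ contains $s^*, s_1, s_2$.

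For the upper bound on $c(K)$, apply induction hypothesis (b) of Claim~\ref{claim:main} to each of the three pieces. The path $P^*_{[s^*, u]}$ is $S^*$-tight and has strictly fewer edges than $P^*$ (the tails of edges of $P^*$ are never $w^*$), so as in Lemma~\ref{lem:simple_bound_path_length_to_contribution} we get $c(P^*_{[s^*, u]}) \le (1+\beta\delta) \max\{t_1, t_2\} = (1+\beta\delta) t_2$; each $Q_i$ is at the strictly earlier time $t^{S_i u_i} < d_i$, so $c(Q_i) \le (1+\beta\delta) t^{S_i u_i} \le (1+\beta\delta) d_i$. Summing yields
\[
c(K) \ \le\ (1+\beta\delta)\bigl(t_2 + d_1 + d_2\bigr).
\]
Keeping $d_i$ in this bound rather than weakening via $d_i \le t_i$ is crucial for the algebra to close.

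For the matching lower bound, I claim $\tilde{\mathcal{S}} := \{S_1, S_2\} \cup \{\{t\} : t \in X \setminus \{s^*, s_1, s_2\}\}$ is a drop certificate for $X$ avoiding $s^*$. Each singleton $\{t\}$ belongs to $\mathcal{S}^0 \subseteq \mathcal{S}$, none of the sets contain $s^*$, and $|\tilde{\mathcal{S}}| = |X| - 1$. Pairs inside $\{s^*, s_1, s_2\}$ are separated by one of $S_1, S_2$ (using $S_1 \cap S_2 = \emptyset$ and $s^* \notin S_1 \cup S_2$), while every pair involving some $t \in X \setminus \{s^*, s_1, s_2\}$ is separated by $\{t\}$. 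Hence $\drop(X, s^*) \ge 2(d_1 + d_2)$.

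Corollary~\ref{cor:lower_bound_components} combines the two bounds into $(1+\beta\delta)(t_2 + d_1 + d_2) > \tfrac{2(d_1+d_2)}{1+\gamma}$ (strictly). Since $(1+\gamma)(1+\beta\delta) < 2$, rearranging gives $(2 - (1+\gamma)(1+\beta\delta))(d_1+d_2) < (1+\gamma)(1+\beta\delta)\,t_2$; plugging in the lower bound $d_1 + d_2 > \tfrac{1-\beta\delta}{1+\beta\delta}(t_1+t_2)$ from Lemma~\ref{lem:merge_time_approx_deactivation_time} and isolating $t_2/t_1$ yields, after elementary algebra, $t_2 > M t_1$, contradicting the assumption; the specific formula for $M$ is precisely what makes this rearrangement an identity. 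The main conceptual obstacle is resisting the tempting-but-lossy simplification $c(Q_i) \le (1+\beta\delta) t_i$: that weakening causes the factor $\tfrac{1-\beta\delta}{1+\beta\delta}$ from Lemma~\ref{lem:merge_time_approx_deactivation_time} to effectively enter the final bound squared rather than once, and the resulting inequality only rules out ratios $r := t_2/t_1$ far below $M$.
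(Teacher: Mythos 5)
Your proof is correct and takes essentially the same route as the paper's: the paper likewise builds the component from $P^*_{[s^*,\cdot]}$ and the two $S_i$-tight paths, bounds its cost by $(1+\beta\delta)\bigl(\tmerge(S_2,s^*)+d^{S_1}+d^{S_2}\bigr)$ (crucially retaining $d^{S_i}$ rather than weakening to $\tmerge(S_i,s^*)$), uses the drop certificate $\{S_1,S_2\}$ together with Corollary~\ref{cor:lower_bound_components}, and closes via Lemma~\ref{lem:merge_time_approx_deactivation_time}. The only cosmetic differences are that you frame it as a contradiction (the algebra gives $\tmerge(S_2,s^*) > M\cdot\tmerge(S_1,s^*)$ directly without using the contrary assumption) and pad the drop certificate with singletons for extra terminals lying on $K$, which is unnecessary since the paper's notion of a component ``connecting'' a set only requires containment, so one may take $X=\{s^*,s_1,s_2\}$ outright.
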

\begin{proof}

Let $e_1=(v_1,w_1),e_2=(v_2,w_2)$ be the first edges of $P^*$ where the sets $S_1,S_2$ contributed, respectively.
Let $j\in \{1,2\}$ such that $v_j$ is no earlier on $P^*$ than the other vertex from $v_1,v_2$.
By Lemma~\ref{lem:simple_bound_path_length_to_contribution}, we have
\begin{equation}\label{lem:different_merge_times:eq1}
c(P^*_{[s^*,v_j]}) \ \le\ (1+\beta\delta)\cdot \tmerge(S_j,s^*)\ \le\ (1+\beta\delta)\cdot \tmerge(S_2,s^*).
\end{equation}
For $i\in\{1,2\}$, the set $S_i$ contributed to the edge $(v_i,w_i)$ and hence $t^{S_iv_i} < \min\{t^*, d^{S_i} \}$.
By Lemma~\ref{lem:reachability_implies_S-tight_path}, there exists an $S_i$-tight path from $S_i$ to $v_i$ at time $t^{S_iv_i} < \min\{ t^*, d^{S_i}\}$.
Applying \eqref{item:actual_distance_bound} of the induction hypothesis to this $S_i$-tight path yields
\begin{equation}\label{lem:different_merge_times:eq2}
\dist(S_i,v_i) \ \leq\ (1+\beta\delta)t^{S_iv_i}\ <\ (1+\beta\delta) \cdot d^{S_i}.
\end{equation}
By combing $P_{[s^*,v_j]}$ with a shortest $S_1$-$v_1$ path and a shortest $S_2$-$v_2$ path we obtain a component $K$ connecting terminals $s^*,s_1,s_2$ with $s_1\in S_1$ and $s_2\in S_2$.
By  \eqref{lem:different_merge_times:eq1} and \eqref{lem:different_merge_times:eq2} the component $K$ has cost
\[
c(K) \ <\ (1+\beta\delta)\cdot (d^{S_1} + d^{S_2}) + (1+\beta\delta)\cdot \tmerge(S_2,s^*).
\]
Because $S_1$ and $S_2$ are disjoint and both do not contain $s^*$,  we conclude that $\{S_1, S_2\}$ is a drop certificate for $\{ s^*,s_1,s_2\}$.
By Corollary~\ref{cor:lower_bound_components}, the cost of the component $K$ must be more than $\tfrac{2}{1+\gamma} \cdot (d^{S_1} + d^{S_2})$, implying
\begin{align*}
   (1+\beta\delta)\cdot \tmerge(S_2,s^*)\ >&\ \bigl(\tfrac{2}{1+\gamma} - 1 -\beta\delta\bigr) \cdot (d^{S_1} + d^{S_2}) \\
   >&\ \tfrac{1-\beta\delta}{1+\beta\delta}\bigl(\tfrac{2}{1+\gamma} - 1 -\beta\delta\bigr) \cdot \bigl(\tmerge(S_1,s^*) + \tmerge(S_2,s^*)\bigr),
\end{align*}
where we used Lemma~\ref{lem:merge_time_approx_deactivation_time} in the second inequality.
We conclude
\[
\tmerge(S_2,s^*) \ >\ \frac{\frac{2}{1+\gamma}-1-\beta\delta}{\highlight{\frac{(1+\beta\delta)^2}{1-\beta\delta}-(\frac{2}{1+\gamma}-1-\beta\delta)}} \cdot \tmerge(S_1,s^*)= M \cdot \tmerge(S_1,s^*). 
\]
\end{proof}

We are now ready to prove that we can indeed partition $\mathcal{C}$ into chains as described above.

\begin{lemma}\label{lem:chain_structure}
We can partition $\mathcal{C}$ into chains $\mathcal{C}^1, \dots, \mathcal{C}^k$ such that for any two sets $S_i \in \mathcal{C}^i$ and $S_j \in \mathcal{C}^j$ with $i < j$ we have that
\begin{itemize}\itemsep0pt
\item $S_i$ is disjoint from $S_j$, and
\item $\tmerge(S_j,s^*) > M \cdot \tmerge(S_i,s^*)$.
\end{itemize}
\end{lemma}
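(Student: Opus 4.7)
My plan is to partition $\mathcal{C}$ by the common value of $\tmerge(\cdot, s^*)$, observe each resulting class is automatically a chain under inclusion, and then read off both required properties directly from Lemma~\ref{lem:different_merge_times}.

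The one non-trivial preliminary observation I will need is that for any two sets $S, S' \in \mathcal{C}$ with $S \subseteq S'$, we have $\tmerge(S, s^*) = \tmerge(S', s^*)$. This follows because for any terminal $s \in S \subseteq S'$, both merge times coincide with $\tmerge(s, s^*)$ by the very definition of $\tmerge(S, s^*)$ and $\tmerge(S', s^*)$ from the start of Section~\ref{sec:structure_contributing} (using that $s^* \notin S$ and $s^* \notin S'$, so on each side the merge time is well-defined as a common value over the terminals of the set).

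With this in hand, I would let $\mathcal{C}^1, \dots, \mathcal{C}^k$ be the equivalence classes of $\mathcal{C}$ under equal merge time with $s^*$, indexed in order of strictly increasing merge time. For each $\mathcal{C}^i$ to be a chain, I take any two members $S, S' \in \mathcal{C}^i$: they have identical merge times, so Lemma~\ref{lem:different_merge_times} (whose strict factor-$M$ gap with $M > 1$ would force them apart) rules out their being disjoint, and the laminarity of $\mathcal{S}$ (inherited by $\mathcal{C}$) then forces them to be comparable under inclusion, so $\mathcal{C}^i$ is totally ordered. For the between-chain property, any $S_i \in \mathcal{C}^i$ and $S_j \in \mathcal{C}^j$ with $i < j$ have distinct merge times, so the preliminary observation rules out $S_i \subseteq S_j$ and $S_j \subseteq S_i$, and laminarity then forces $S_i \cap S_j = \emptyset$; Lemma~\ref{lem:different_merge_times} applied to this disjoint pair (with the larger merge time being $\tmerge(S_j, s^*)$ by our ordering) immediately gives $\tmerge(S_j, s^*) > M \cdot \tmerge(S_i, s^*)$.

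There is no real obstacle in this argument: everything hinges on the identity $\tmerge(S, s^*) = \tmerge(S', s^*)$ for $S \subseteq S'$ in $\mathcal{C}$, after which laminarity and Lemma~\ref{lem:different_merge_times} do all the work.
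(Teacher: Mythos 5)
Your proposal is correct and follows essentially the same route as the paper: partition $\mathcal{C}$ by the common merge time with $s^*$, use the observation that nested sets in $\mathcal{C}$ share a merge time to derive that cross-class pairs are disjoint via laminarity, and invoke Lemma~\ref{lem:different_merge_times} both for the factor-$M$ gap between classes and to rule out disjoint pairs within a class (so each class is a chain).
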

\begin{proof}
We partition $\mathcal{C}$ into set families $\mathcal{C}^1, \dots, \mathcal{C}^k$ by there merge time with $s^*$, meaning that 
\begin{itemize}\itemsep0pt
\item any two sets $S$ and $S'$ in the same set family $\mathcal{C}^i$ have the same merge time, i.e., we have $\tmerge(S,s^*) = \tmerge(S',s^*)$, and
\item any two sets $S$ and $S'$ in the different set families $\mathcal{C}^i$ have different merge times, i.e., $\tmerge(S,s^*) \neq \tmerge(S',s^*)$.
\end{itemize}
We choose the numbering of $\mathcal{C}^1, \dots, \mathcal{C}^k$ such that for any two sets $S_i \in \mathcal{C}^i$ and $S_j \in \mathcal{C}^j$ with $i < j$, we have $\tmerge(S_i, s^*) < \tmerge(S_j, s^*)$.

Now consider two sets $S,S' \in \mathcal{C}\subseteq \mathcal{S}$ with $S\subseteq S'$.
Then because $s^* \notin S'$, we have $\tmerge(S,s^*) = \tmerge(S', s^*)$.
Because $\mathcal{S}$ is a laminar family, this show that for any two sets $S_,S' \in \mathcal{C}$ with $\tmerge(S,s^*) \neq \tmerge(S', s^*)$, we have $S\cap S' = \emptyset$.
In particular, any two sets $S_i \in \mathcal{C}^i$ and $S_j \in \mathcal{C}^j$ with $i < j$ are disjoint.
By the choice of the numbering of $\mathcal{C}^1, \dots, \mathcal{C}^k$ and Lemma~\ref{lem:different_merge_times}, we also have $\tmerge(S_j,s^*) > M \cdot \tmerge(S_i,s^*)$.

It remains to prove that each set family $\mathcal{C}^i$ is a chain.
Suppose this is not the case.
Then  because $\mathcal{C}^i \subseteq \mathcal{S}$ is a laminar family, $\mathcal{C}^i$ contains two disjoint sets, which by  Lemma~\ref{lem:different_merge_times} have a different merge time with $s^*$.
This contradicts our choice of $\mathcal{C}^1, \dots, \mathcal{C}^k$.
\end{proof}

Finally, we show that every set from each one of the first $k-1$ chains is a (proper) subset of $S^*$. 

\begin{lemma}\label{lem:fist_chains_contained}
For every set $S\in \mathcal{C}^1 \cup \dots \cup \mathcal{C}^{k-1}$, we have $S\subsetneq S^*$.
\end{lemma}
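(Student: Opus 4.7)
My plan is a short proof by contradiction that exploits the chain merge-time gap from Lemma~\ref{lem:chain_structure} together with the approximation of deactivation times by merge times in Lemma~\ref{lem:merge_time_approx_deactivation_time}. The starting observation is that for any $S\in\Cscr$, the laminarity of $\Sscr$, together with the facts $s^*\in S^*$ and $s^*\notin S$, rules out $S^*\subseteq S$ and $S=S^*$. Hence we only need to rule out the possibility $S\cap S^*=\emptyset$; the only remaining case is $S\subsetneq S^*$.

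So I would fix $S\in\Cscr^i$ with $i<k$ and assume toward contradiction that $S\cap S^*=\emptyset$. Then for any $s\in S$ we have $s\notin S^*$, and applying Observation~\ref{obs:deactivation_time_lower_bounds_merge_time} to the set $S^*$ (with $s^*\in S^*$) yields $\tmerge(s,s^*)\geq d^{S^*}$. Combined with the standing hypothesis $t^*<d^{S^*}$ from Claim~\ref{claim:main}, this gives
\[
\tmerge(S,s^*)\ \geq\ d^{S^*}\ >\ t^*.
\]

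The decisive step is to amplify this lower bound by using the existence of at least one further chain $\Cscr^k$, which is where the hypothesis $i<k$ enters. Pick any $S_k\in\Cscr^k$. By Lemma~\ref{lem:chain_structure},
\[
\tmerge(S_k,s^*)\ >\ M\cdot \tmerge(S,s^*)\ >\ M\cdot t^*,
\]
and Lemma~\ref{lem:merge_time_approx_deactivation_time} applied to $S_k$ yields
\[
\min\{t^*,d^{S_k}\}\ >\ \tfrac{1-\beta\delta}{1+\beta\delta}\cdot \tmerge(S_k,s^*)\ >\ \tfrac{(1-\beta\delta)M}{1+\beta\delta}\cdot t^*.
\]
Using the numerical values in Table~\ref{table:constants} one checks that $\tfrac{(1-\beta\delta)M}{1+\beta\delta}>1$ (with $\beta\delta\approx 0.062$ and $M\approx 1.942$, this ratio is roughly $1.71$). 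This contradicts the trivial bound $\min\{t^*,d^{S_k}\}\leq t^*$, so $S\cap S^*\neq\emptyset$ and hence $S\subsetneq S^*$.

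I do not foresee a real obstacle; the only point of care is verifying that the constants in Table~\ref{table:constants} satisfy $(1-\beta\delta)M>1+\beta\delta$, which is precisely the kind of slack that motivates the particular choice of $M$. The structural ingredients —\,laminarity ruling out all cases except $S\subsetneq S^*$ or $S$ disjoint from $S^*$, and the multiplicative gap between consecutive chains\,— have already been established before the statement.
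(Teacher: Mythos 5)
Your proof is correct and uses exactly the same ingredients as the paper's: the gap $\tmerge(S_k,s^*) > M \cdot \tmerge(S,s^*)$ from Lemma~\ref{lem:chain_structure}, the bound $t^* > \tfrac{1-\beta\delta}{1+\beta\delta}\tmerge(S_k,s^*)$ from Lemma~\ref{lem:merge_time_approx_deactivation_time}, the constant check $\tfrac{(1-\beta\delta)M}{1+\beta\delta} \ge 1$, and Observation~\ref{obs:deactivation_time_lower_bounds_merge_time}. The only difference is cosmetic: you run the argument as a contradiction (assume $S\cap S^*=\emptyset$ and derive $\tmerge(S,s^*) > t^*$), whereas the paper directly shows $\tmerge(S,s^*) < t^* < d^{S^*}$ and concludes $S\subsetneq S^*$.
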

\begin{proof}
Let $S \in  \mathcal{C}^1 \cup \dots \cup \mathcal{C}^{k-1}$ and $S_k \in \mathcal{C}^k$.
Then
\[
t^* \overset{Lem. \ref{lem:merge_time_approx_deactivation_time}}{>}\ \tfrac{1-\beta\delta}{1+\beta\delta}\cdot\tmerge(S_k,s^*) \ \overset{Lem. \ref{lem:chain_structure}}{\geq}\  \highlight{\tfrac{1-\beta\delta}{1+\beta\delta}\cdot M \cdot \tmerge(S,s^*) \ \geq\ \tmerge(S,s^*)},
\]
where we used the values of the constants $\beta$, $\delta$, and $M$ in the last inequality (see Table~\ref{table:constants}).
Because $s^* \in S^*$ and $d^{S^*} > t^*$ (by assumption of Claim~\ref{claim:main}), this implies $S \subsetneq S^*$ by Observation~\ref{obs:deactivation_time_lower_bounds_merge_time}.
\end{proof}

\subsection{Separating the contribution of the last chain}
\label{sec:separating}

In this section we show that we can subdivide the path $P^*$ (from the terminal $s^*$ to the vertex $w^*$) into two subpaths where on the first subpath the chain $\mathcal{C}^k$ did not contribute and on the second subpath the chains $\mathcal{C}^1, \dots , \mathcal{C}^{k-1}$ did not contribute.
Later, we will apply the induction hypothesis to the first subpath (Section~\ref{sec:completing_induction}). 
For the second subpath only the chain $\mathcal{C}^k$ (and of course sets containing the vertex $s^*$) contributed, which will help us to analyze the length of this subpath (see Section~\ref{sec:single_chain}).

To subdivide the path $P^*$ into the above-mentioned subpaths, we consider the last edge $(u_1,u_2)$ of $P^*$ with $t^{S^*u_1} \le \tfrac{\tmerge^k}{1+\alpha\delta}$.
We will prove that the chain $\mathcal{C}^k$ did not contribute to the subpath $P^*_{[s^*,u_2]}$ and the chains $\mathcal{C}^1 \cup \dots \cup \mathcal{C}^{k-1}$ did not contribute to the subpath $P^*_{[u_2,w^*]}$.
For an illustration and overview of our analysis, see Figure~\ref{fig:analysis_division_subpath}.

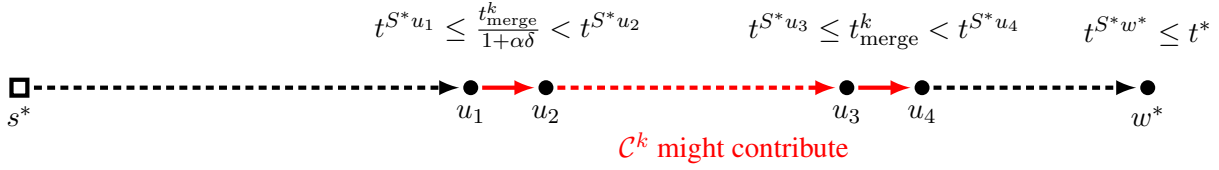
\begin{figure}
\begin{center}
\begin{tikzpicture}
\tikzset{terminal/.style={
ultra thick,draw,fill=none,rectangle,minimum size=0pt, inner sep=3pt, outer sep=2.5pt}
}
\tikzset{steiner/.style={
fill=black,circle,inner sep=0em,minimum size=0pt, inner sep=2pt, outer sep=1.5pt}
}

\node[terminal] (start) at (-1,0) {};
\node[steiner] (u1) at (5,0) {};
\node[steiner] (u2) at (6,0) {};
\node[steiner] (u3) at (10,0) {};
\node[steiner] (u4) at (11,0) {};
\node[steiner] (w) at (14,0) {};

\node[below=3pt] at (start) {$s^*$};
\node[below=3pt] at (u1) {$u_1$};
\node[below=3pt] at (u2) {$u_2$};
\node[below=3pt] at (u3) {$u_3$};
\node[below=3pt] at (u4) {$u_4$};
\node[below=3pt] at (w) {$w^*$};

\begin{scope}[->, >=latex, ultra thick]
\draw[densely dashed] (start) to (u1);
\draw[red] (u1) to node[above=10pt, black] {$t^{S^*u_1} \le \tfrac{\tmerge^k}{1+\alpha\delta} < t^{S^*u_2}$} (u2);
\draw[densely dashed, red] (u2) to (u3);
\draw[red] (u3) to  node[above=10pt, black] {$t^{S^*u_3} \le \tmerge^k < t^{S^*u_4}$}  (u4);
\draw[densely dashed] (u4) to (w);
\end{scope}
\node [above=12pt] at (w) {$t^{S^*w^*} \le t^*$};
\node[red] at (8.5,-0.8){$\mathcal{C}^k$ might contribute};
\end{tikzpicture}
\caption{ \label{fig:analysis_division_subpath}
Illustration of the path $P^*$ in the case where $\tmerge^k < t^{S^*w^*}$. 
(Without this assumption, there might be some degeneracy, where several of the depicted vertices will be defined to be identical.)
Single edges are shown as solid lines, whereas paths with potentially several edges are dashed.\newline
Recall that by the definition of an $S^*$-tight path, the times $t^{S^*v}$ for vertices $v$ of $P^*$ increase monotonically along the path $P^*$.
The edge $(u_1, u_2)$ is defined so that $t^{S^*u_1} \le \tfrac{\tmerge^k}{1+\alpha\delta} < t^{S^*u_2}$.
We will later also consider the edge $(u_3,u_4)$ chosen so that $t^{S^*u_3} \le \tmerge^k < t^{S^*u_4}$.
We will show that $\mathcal{C}^k$ only contributed to the subpath $P^*_{[u_1,u_4]}$ (red) .
We will also show that the chains $\mathcal{C}^1 \cup \dots \cup \mathcal{C}^{k-1}$ did not contribute to the subpath $P^*_{[u_2,w^*]}$ (see Lemma~\ref{lem:only_last_chain_contributes_to_new_subpath}).\newline
To bound the length of $P^*_{[s^*, u_1]}$ we will use the induction hypothesis. 
The length of  $P^*_{[u_2,u_4]}$ will be bounded by $3\cdot (1+\delta) \cdot (\tmerge^k - \tfrac{\tmerge^k}{1+\alpha\delta}) + 2\epsilon$ (see Lemma~\ref{lem:bound_subpath_potential}).
The length of $P^*_{[u_4, w^*]}$, where no set from $\mathcal{C}$ contributes, will be bounded by $(1+\delta) \cdot (t^* - \tmerge^k)$ (see Lemma~\ref{lem:bound_length_of_trivial_subpath}).
}
\end{center}
\end{figure}

\bigskip

First, we prove that indeed no set from the first $k-1$ chains contributed to the path $P^*_{[u_2, w^*]}$.
For every two sets $S,S'$ in the same chain $\mathcal{C}^j$, we have $\tmerge(S,s^*) = \tmerge(S',s^*)$ because $s^*$ is contained neither in $S$ nor in $S'$.
We denote this time by  $\tmerge^j$ and we then have $\tmerge^j = \tmerge(S,s^*) $ for every $S\in \mathcal{C}^j$. 

\begin{lemma}\label{lem:only_last_chain_contributes_to_new_subpath}
No set $S\in \mathcal{C}^1 \cup \dots \cup \mathcal{C}^{k-1}$ contributed to any edge of $P^*_{[u_2, w^*]}$.
\end{lemma}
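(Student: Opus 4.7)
The plan is to proceed by contradiction: assume some $S \in \mathcal{C}^j$ with $j \le k-1$ contributes to an edge $(v,w)$ of $P^*_{[u_2, w^*]}$, and obtain incompatible lower and upper bounds on $t^{S^*v}$.

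For the lower bound, I would use the choice of $(u_1, u_2)$. Because $P^*$ is $S^*$-tight, the values $t^{S^*x}$ are non-decreasing along $P^*$. Since $(u_1, u_2)$ is the \emph{last} edge of $P^*$ with $t^{S^*u_1} \le \tfrac{\tmerge^k}{1+\alpha\delta}$, every source vertex $x$ of an edge on or after $(u_2, \cdot)$ — including $u_2$ itself, for otherwise the edge leaving $u_2$ would contradict maximality — satisfies $t^{S^*x} > \tfrac{\tmerge^k}{1+\alpha\delta}$. In particular, $t^{S^*v} > \tfrac{\tmerge^k}{1+\alpha\delta}$.

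For the upper bound, I would chain the structural lemmas about $\mathcal{C}$. Since $S$ contributes to $(v,w)$, we have $t^{Sv} < d^S$. Because $s^* \notin S$, Observation~\ref{obs:lower_bound_merge_time_sets} gives $d^S \le \tmerge(S, s^*) = \tmerge^j$. Lemma~\ref{lem:fist_chains_contained} gives $S \subsetneq S^*$, and combined with Observation~\ref{obs:monotonicity_reaching_times} this yields
\[
t^{S^*v}\ \le\ t^{Sv}\ <\ d^S\ \le\ \tmerge^j.
\]
Finally, Lemma~\ref{lem:chain_structure} gives $\tmerge^k > M \cdot \tmerge^j$, i.e., $\tmerge^j < \tmerge^k / M$, so $t^{S^*v} < \tmerge^k / M$.

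To close the argument I would verify the numerical inequality $M \ge 1 + \alpha\delta$. From Table~\ref{table:constants}, $M \approx \Mval$ while $1 + \alpha\delta \approx 1 + \alphaval \cdot \deltaval \approx 1.018$, so the inequality holds with plenty of slack, giving $t^{S^*v} < \tfrac{\tmerge^k}{1+\alpha\delta}$ and contradicting the lower bound. I do not expect any real obstacle: this lemma is essentially the place where the quantitative gap between the chains ``pays off,'' and all the ingredients have already been prepared in Section~\ref{sec:structure_contributing}; the only check outside of citing prior lemmas is the one-line numerical comparison $M \ge 1 + \alpha\delta$.
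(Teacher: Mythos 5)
Your proof is correct and follows essentially the same route as the paper: it invokes the same ingredients (the maximality of $(u_1,u_2)$, Lemma~\ref{lem:fist_chains_contained} for $S\subsetneq S^*$, Observation~\ref{obs:monotonicity_reaching_times} for $t^{S^*v}\le t^{Sv}$, Observation~\ref{obs:lower_bound_merge_time_sets} for $d^S\le\tmerge^j$, Lemma~\ref{lem:chain_structure} for $\tmerge^k>M\tmerge^j$) and rests on the same numerical fact $M\ge 1+\alpha\delta$; the only difference is that you phrase it as a contradiction while the paper chains the inequalities directly to conclude $t^{Sv}>d^S$.
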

\begin{proof}
The statement is trivial if $P^*_{[u_2, w^*]}$ has no edges.
Otherwise, consider an edge $(v,w)$ of $P^*_{[u_2, w^*]}$ and a set $S\in \mathcal{C}^j$ with $j < k$.
By the definition of the edge $(u_1,u_2)$, we have $t^{S^*v} > \tfrac{\tmerge^k}{1+\alpha\delta}$, which by Lemma~\ref{lem:chain_structure} implies
\[
t^{S^*v}\ >\ \highlight{\tfrac{M}{1+\alpha\delta} \cdot \tmerge^j \ \ge\ \tmerge^j}\ =\  \tmerge(S,s^*) \ \ge\  d^S,
\]
where we used Observation~\ref{obs:lower_bound_merge_time_sets} in the last inequality.
By Lemma~\ref{lem:fist_chains_contained}, we have $S\subsetneq S^*$ and hence by Observation~\ref{obs:monotonicity_reaching_times} we get $t^{Sv} \ge t^{S^*v} > d^S$, implying that $S$ did not contributed to the edge $(v,w)$.
\end{proof}

We next show that $\mathcal{C}^k$ does not contribute to the edges of $P^*_{[s^*,u_1]}$. 
The key technical  statement we will use to prove this is the following.

\begin{lemma}\label{lem:lower_bound_meeting_time}
Let $\tilde{S}\in \mathcal{C}^k$ and $v\neq w^*$ a vertex of the path $P^*$.
If $t^{\tilde{S}v} < t^*$, then $\max\{ t^{\tilde{S}v}, t^{S^*v} \} > \tfrac{\tmerge^k}{1+\alpha\delta}$.
\end{lemma}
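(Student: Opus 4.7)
My plan is to argue by contradiction: assume both $t^{\tilde{S}v}\le\tfrac{\tmerge^k}{1+\alpha\delta}$ and $t^{S^*v}\le\tfrac{\tmerge^k}{1+\alpha\delta}$, and derive a contradiction by building a short component $K$ connecting $s^*$ to some $\tilde{s}\in\tilde{S}$ together with a large drop certificate, so that Corollary~\ref{cor:lower_bound_components} is violated. Since $v\neq w^*$, the $S^*$-tight subpath $P^*_{[s^*,v]}$ has strictly fewer edges than $P^*$, so I apply the induction hypothesis (Claim~\ref{claim:main}~(a)) to it at time $t^{S^*v}$ to get a terminal set $X_2\ni s^*$ and a component $K_2$ connecting $\{v\}\cup X_2$. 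Symmetrically, by Lemma~\ref{lem:reachability_implies_S-tight_path} there is an $\tilde{S}$-tight path from some $\tilde{s}\in\tilde{S}$ to $v$ at time $t^{\tilde{S}v}$; assuming $t^{\tilde{S}v}<d^{\tilde{S}}$ I apply Claim~\ref{claim:main}~(a) to this path to get $X_1\ni\tilde{s}$ and a component $K_1$ connecting $\{v\}\cup X_1$.

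Next, I combine $K_1$ and $K_2$, which share the vertex $v$, into $K:=K_1\cup K_2$ connecting the terminal set $X:=X_1\cup X_2$. The merge-time bounds from part~(a) give $\tmerge(s^*,s')<(1+\alpha\delta)t^{S^*v}\le\tmerge^k$ for every $s'\in X_2$ and analogously $\tmerge(\tilde{s},s')<\tmerge^k$ for every $s'\in X_1$, so Observation~\ref{obs:merge_time_max} (applied to $\tilde{s},s',s^*$) forces $X_1\cap X_2=\emptyset$. A maximum-value drop certificate for $X_1$ avoiding $\tilde{s}$, a maximum-value drop certificate for $X_2$ avoiding $s^*$, and the set $S\in\mathcal{S}$ that is the connected component of $\tilde{s}$ in $G^{t}[R]$ at $t$ just below $\tmerge^k$ (which has $d^S=\tmerge^k$, contains $X_1\cup\{\tilde{s}\}$, and avoids $X_2\cup\{s^*\}$) together form a drop certificate for $X$ avoiding $s^*$, of size $|X|-1$ and value $\drop(X_1,\tilde{s})+\drop(X_2,s^*)+2\tmerge^k$. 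Substituting the bounds of Claim~\ref{claim:main}~(a) for $c(K_1)+c(K_2)$ on one side and Corollary~\ref{cor:lower_bound_components} applied to this certificate on the other, and using $t^{S^*v}+t^{\tilde{S}v}\le\tfrac{2\tmerge^k}{1+\alpha\delta}$, cancels the $\tfrac{\drop(\cdot)}{1+\gamma}$ contributions and leaves a purely numerical inequality in $\alpha,\beta,\gamma,\delta,\lambda,\mu$ that the constants of Table~\ref{table:constants} are tuned to violate.

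The main obstacle will be extracting enough slack to make that final numerical inequality strict for the relatively small $\alpha=\alphaval$: a naive substitution that simply discards the $-\mu(\drop(X_1,\tilde{s})+\drop(X_2,s^*))$ term leaves an inequality that is still consistent with the hypothesis, so I expect to keep this negative slack and combine it with the auxiliary bound $\drop(X_i,\cdot)\le\tfrac{1+\delta}{\mu}\cdot t^{\cdot}$ that follows from applying Corollary~\ref{cor:lower_bound_components} to $K_i$ alone. The secondary obstacle is the degenerate case $t^{\tilde{S}v}\ge d^{\tilde{S}}$, where Claim~\ref{claim:main} is not directly applicable to the $\tilde{S}$-tight path at time $t^{\tilde{S}v}$; here Lemma~\ref{lem:merge_time_approx_deactivation_time} already yields $t^{\tilde{S}v}\ge d^{\tilde{S}}>\tfrac{1-\beta\delta}{1+\beta\delta}\tmerge^k$, which I would compare separately against $\tfrac{\tmerge^k}{1+\alpha\delta}$ using the relations among the constants (or alternatively apply Claim~\ref{claim:main}~(a) at a time just below $d^{\tilde{S}}$ to a vertex in $U^{d^{\tilde{S}}-\epsilon}_{\tilde{S}}$ lying on the $\tilde{S}$-tight path to $v$).
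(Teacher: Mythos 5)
Your overall architecture matches the paper's proof exactly: assume both reaching times are at most $\tfrac{\tmerge^k}{1+\alpha\delta}$, apply the induction hypothesis (Claim~\ref{claim:main}\eqref{item:strong_existing_component}) once to $P^*_{[s^*,v]}$ and once (via Lemma~\ref{lem:reachability_implies_S-tight_path}) to an $\tilde{S}$-tight path ending at $v$, glue the two components at $v$, combine the two drop certificates with one extra set, and derive a contradiction with Corollary~\ref{cor:lower_bound_components}. Your variant of the extra drop-certificate set --- the part of $\mathcal{S}^t$ containing $\tilde{s}$ just below $\tmerge^k$, with $d^S=\tmerge^k$ --- is a legitimate (and marginally tighter) replacement for the paper's choice of $\tilde{S}$, whose deactivation time is only $\ge\tfrac{1-\beta\delta}{1+\beta\delta}\tmerge^k$ by Lemma~\ref{lem:merge_time_approx_deactivation_time}.

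The genuine gap is in the final numerical step. You correctly observe that discarding the $-\mu\bigl(\drop(X_1,\tilde{s})+\drop(X_2,s^*)\bigr)$ term leaves an inequality that does not close, so this slack must be exploited. But to exploit a term of the form $-\mu\cdot\drop(\cdot)$ in an upper bound you need a \emph{lower} bound on $\drop(X_1,\tilde{s})+\drop(X_2,s^*)$, whereas the auxiliary bound you propose, $\drop(X_i,\cdot)\le\tfrac{1+\delta}{\mu}t^{\cdot}$, is an \emph{upper} bound and pushes in the wrong direction. The ingredient you are missing is the other half of Claim~\ref{claim:main}\eqref{item:strong_existing_component}: the path-length bounds $c(P^*_{[s^*,v]})\le(1+\delta)t^{S^*v}+\lambda\,\drop(X_2,s^*)$ and $c(\tilde{P})\le(1+\delta)t^{\tilde{S}v}+\lambda\,\drop(X_1,\tilde{s})$, combined with $\dist(\tilde{s},s^*)\ge 2\tmerge^k$ (Observation~\ref{obs:dist_lb_merge_time}). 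Chaining these gives
\[
2\tmerge^k\ \le\ c(\tilde{P})+c\bigl(P^*_{[s^*,v]}\bigr)\ \le\ 2(1+\delta)\tfrac{\tmerge^k}{1+\alpha\delta}+\lambda\bigl(\drop(X_1,\tilde{s})+\drop(X_2,s^*)\bigr),
\]
i.e.\ $\drop(X_1,\tilde{s})+\drop(X_2,s^*)\ge\tfrac{2}{\lambda}\cdot\tfrac{(\alpha-1)\delta}{1+\alpha\delta}\tmerge^k$, and it is this lower bound (with the $\tfrac{\mu}{\lambda}$ ratio that results, not $\tfrac{1}{\mu}$) that makes the final numeric inequality go through for the constants of Table~\ref{table:constants}. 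On the secondary point, your concern about $t^{\tilde{S}v}\ge d^{\tilde{S}}$ is reasonable, but note that comparing $\tfrac{1-\beta\delta}{1+\beta\delta}$ against $\tfrac{1}{1+\alpha\delta}$ does \emph{not} resolve it for the given constants (the former is the smaller one); in the paper's actual invocations of this lemma the hypothesis $t^{\tilde{S}v}<d^{\tilde{S}}$ is always available from the context (the set in question contributed to a relevant edge, or has positive potential at $u_1$), so the safe fix is to add that hypothesis rather than try to argue it away via Lemma~\ref{lem:merge_time_approx_deactivation_time}.
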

\begin{proof}
Suppose for the sake of deriving a contradiction that we have $ t^{\tilde{S}v} \le \tfrac{\tmerge^k}{1+\alpha\delta}$ and $t^{S^*v} \le \tfrac{\tmerge^k}{1+\alpha\delta}$.
Because $v \neq w^*$, the path $P^*_{[s^*,v]}$ is an $S^*$-tight path with strictly fewer edges than $P^*$ at time $t^{S^*v} \le t^*$ and hence we can apply \eqref{item:strong_existing_component} of the induction hypothesis to $P^*_{[s^*,v]}$.
We obtain a terminal set $X^*\subseteq R$ with $s^*\in X^*$  and a component $K^*$ connecting $X^*\cup \{v\}$ such that
\begin{equation}\label{eq:properties_K_star}
\begin{aligned}
c(P^*_{[s^*,v]}) \ \le&\ (1+\delta) \cdot t^{S^*v} + \lambda \cdot \drop(X^*, s^*) \\
c(K^*) - \tfrac{1}{1+\gamma} \cdot \drop(X^*, s^*) \ \le&\  (1+\delta) \cdot t^{S^*v} - \mu \cdot \drop(X^*,s^*) \\
\tmerge(s^*,s) \ <&\  (1+\alpha\delta) \cdot t^{S^*v}& \text{ for all }s \in X^*
\end{aligned}
\end{equation}
By Lemma~\ref{lem:reachability_implies_S-tight_path}, there is an $\tilde{S}$-tight path $\tilde{P}$ from a terminal $\tilde{s}\in \tilde{S}$ to $v$ at time $t^{\tilde{S}v}$.
Because $t^{\tilde{S}v} < t^*$, we can apply \eqref{item:strong_existing_component} of the induction hypothesis to the path $\tilde{P}$.
We obtain a terminal set $\tilde{X}\subseteq R$ with $\tilde{s}\in \tilde{X}$  and a component $\tilde{K}$ connecting $\tilde{X}\cup \{v\}$ such that
\begin{equation}\label{eq:properties_K_tilde}
\begin{aligned}
c(\tilde{P}) \ \le&\ (1+\delta) \cdot t^{\tilde{S}v} + \lambda \cdot \drop(\tilde{X}, \tilde{s}) \\
c(\tilde{K}) - \tfrac{1}{1+\gamma} \cdot \drop(\tilde{X}, \tilde{s}) \ \le&\  (1+\delta) \cdot t^{\tilde{S}v} - \mu \cdot \drop(\tilde{X},\tilde{s})  \\
\tmerge(\tilde{s},s) \ <&\  (1+\alpha\delta) \cdot t^{\tilde{S}v}& \text{ for all }s \in \tilde{X}
\end{aligned}
\end{equation}

The union $K^*\cup \tilde{K}$ of the components $K^*$ and $\tilde{K}$ is a component connecting $X^* \cup \tilde{X}$.
We will show $ c(K^* \cup \tilde{K})-  \tfrac{1}{1+\gamma}  \cdot \drop(X^* \cup \tilde{X},s^*) \le 0$, contradicting Corollary~\ref{cor:lower_bound_components}.

In order to derive a lower bound on $\drop(X^* \cup \tilde{X},s^*)$, we first prove the following claim.

\begin{claim}\label{claim:drops_combined_meeting_time}
$\drop(X^* \cup \tilde{X},s^*) \ge \drop(\tilde{X}, \tilde{s}) + \drop(X^*, s^*) + 2\cdot d^{\tilde{S}}$.
\end{claim}
\begin{claimproof}
Let $\mathcal{S}^* \subseteq \mathcal{S}$ be a drop certificate for $X^*$ such that $s^* \notin S'$ for all $S'\in \mathcal{S}^*$ and $\drop(X^*,s^*) = 2 \sum_{S\in \mathcal{S}^*} d^{S}$ (which exists by the definition of $\drop(X^*,s^*)$).
Similarly, we let $\tilde{\mathcal{S}} \subseteq \mathcal{S}$ be a drop certificate for $\tilde{X}$ such that $\tilde{s} \notin S$ for all $S\in \tilde{\mathcal{S}}$ and $\drop(\tilde{X}, \tilde{s}) = 2 \sum_{S\in \tilde{\mathcal{S}}} d^{S}$.
We will prove that $\mathcal{S}^* \cup \tilde{\mathcal{S}} \cup \{\tilde{S}\}$ is a drop certificate for $X^* \cup \tilde{X}$.
\bigskip

We claim that $X\cap \tilde{X} = \emptyset$ and that for any two distinct terminals $s_1,s_2\in  X^* \cup \tilde{X}$, there exists a set $S\in \mathcal{S}^* \cup \tilde{\mathcal{S}} \cup \{\tilde{S}\}$ with $|S\cap \{s_1,s_2\}| = 1$.
To see this, consider two terminals $s_1,s_2 \in X^* \cup \tilde{X}$.
If we have $s_1,s_2 \in X^*$, then either $s_1=s_2$ or there exists a set $S\in \mathcal{S}^*$ with $|S\cap \{s_1,s_2\}| = 1$.
Similarly, if  $s,s' \in \tilde{X}$, then either $s_1=s_2$ or there exists a set $S\in \tilde{\mathcal{S}}$ with $|S\cap \{s_1,s_2\}| = 1$.

Now consider the case where one of the terminals $s_1,s_2$ is contained in $X^*$ and the other in $\tilde{X}$, say $s_1 \in X^*$ and $s_2\in \tilde{X}$.
Using $\tmerge(s^*,s) \ <\  (1+\alpha\delta) \cdot t^{S^*v}$ for all $s \in X^*$, we get
\begin{equation}\label{eq:first_merge_time_lemma_meeting}
\tmerge(s^*, s_1) \ <\ (1+\alpha\delta) \cdot t^{S^*v} \le (1+\alpha\delta) \cdot \tfrac{\tmerge^k}{1+\alpha\delta}\ =\ \tmerge^k,
\end{equation}
and similarly, using $\tmerge(\tilde{s},s) \ <\  (1+\alpha\delta) \cdot t^{\tilde{S}v}$ for all $s \in \tilde{X}$, we get
\begin{equation}\label{eq:second_merge_time_lemma_meeting}
\tmerge(\tilde{s}, s_2) \ <\ (1+\alpha\delta) \cdot t^{\tilde{S}v} \le (1+\alpha\delta) \cdot \tfrac{\tmerge^k}{1+\alpha\delta}\ =\ \tmerge^k.
\end{equation}
Using $\tmerge(s^*,\tilde{s}) = \tmerge^k$, this implies $\tmerge(s_1,s_2) \ge \tmerge^k$  as otherwise by Observation~\ref{obs:dist_lb_merge_time}
\[
\tmerge^k \ =\ \tmerge(s^*,\tilde{s})\ \le\ \max \Bigl\{  \tmerge(s^*, s_1),  \tmerge(s_1,s_2), \tmerge( s_2,\tilde{s})  \Bigr\} \ <\ \tmerge^k.
\]
In particular, we get $s_1 \neq s_2$, implying $X^*\cap \tilde{X} = \emptyset$.
If $s_1=s^*$ and $s_2=\tilde{s}$, then $s_2 \in \tilde{S}$ and $s_1 \notin \tilde{S}$.
Now consider the case where $s_1\neq s^*$.
Then the drop certificate $\mathcal{S}^*$ contains a set $S$ with $s_1\in S$ and $s^* \notin S$.
Using Observation~\ref{obs:deactivation_time_lower_bounds_merge_time} and $\tmerge(s_1,s_2) \ge \tmerge^k$, we get
\[
d^S\ \le\ \tmerge(s^*, s_1) \ \overset{\eqref{eq:first_merge_time_lemma_meeting}}{<}\ \tmerge^k\ \le\  \tmerge(s_1,s_2),
\]
 implying $s_2 \notin S$.
Similarly, if $s_2 \neq \tilde{s}$, the drop certificate $\tilde{\mathcal{S}}$ contains a set $S$ with $s_2\in S$ and $\tilde{s}\notin S$.
Then 
\[
d^S\ \le\ \tmerge(\tilde{s}, s_2)\ \overset{\eqref{eq:second_merge_time_lemma_meeting}}{<}\ \tmerge^k \ \le\  \tmerge(s_1,s_2),
\]
implying $s_1 \notin S$.

We have shown that for any two distinct terminals $s_1,s_2\in X^* \cup \tilde{X}$, there is a set $S \in \mathcal{S}^* \cup \tilde{\mathcal{S}} \cup \{\tilde{S}\}$ with $|S\cap \{s_1,s_2\}| = 1$, which in particular implies $|\mathcal{S}^* \cup \tilde{\mathcal{S}} \cup \{\tilde{S}\}| \ge |X^* \cup X| -1$.
Moreover, we have shown $X^*\cap \tilde{X} = \emptyset$ and hence
\[
|X^* \cup X| -1\ \le\ |\mathcal{S}^* \cup \tilde{\mathcal{S}} \cup \{\tilde{S}\}|  
\ \le\  |\mathcal{S}^*| + |\tilde{\mathcal{S}}| + 1\ =\ |X^*| - 1 + |\tilde{X}| - 1 + 1\ =\ |X^* \cup X| -1.
\]
This implies that we have equality throughout and hence $\mathcal{S}^*$, $\tilde{\mathcal{S}}$, and $\{ \tilde{S}\}$ are disjoint.
We conclude that $\mathcal{S}^* \cup \tilde{\mathcal{S}} \cup \{\tilde{S}\}$ is indeed a drop certificate for $X^* \cup \tilde{X}$ and has value $\drop(\tilde{X}, \tilde{s}) + \drop(X^*, s^*) + 2\cdot d^{\tilde{S}}$.
\end{claimproof}

Having shown Claim~\ref{claim:drops_combined_meeting_time}, we next
prove a lower bound on $\drop(\tilde{X}, \tilde{s}) + \drop(X^*, s^*)$.
Recall that $t^{\tilde{S}v}\leq \tfrac{\tmerge^k}{1+\alpha\delta}$ and $t^{S^*v}\leq \tfrac{\tmerge^k}{1+\alpha\delta}$ by the assumption we made for the sake of deriving a contradiction.
Using $\tilde{S}\in \mathcal{C}^k$ and Observation~\ref{obs:dist_lb_merge_time}, we get $2 \cdot \tmerge^k = 2  \cdot \tmerge(\tilde{s},s^*) \le \dist(\tilde{s},s^*)$.
Hence, 
\begin{equation}\label{eq:lower_bound_sum_of_drops}
\begin{aligned}
2 \cdot \tmerge^k \ \le&\ c(\tilde{P}) + c(P^*_{[s^*,v]})  \\
\le&\ (1+\delta) \cdot t^{\tilde{S}v} + \lambda \cdot \drop(\tilde{X}, \tilde{s}) +(1+\delta) \cdot t^{S^*v} + \lambda \cdot \drop(X^*, s^*) \\
\le&\ 2 \cdot (1+\delta) \cdot \tfrac{\tmerge^k}{1+\alpha\delta} + \lambda \cdot \bigl(\drop(\tilde{X}, \tilde{s}) + \drop(X^*, s^*) \bigr). 
\end{aligned}
\end{equation}
Using the properties~\eqref{eq:properties_K_star} and~\eqref{eq:properties_K_tilde} of the components $K^*$ and $\tilde{K}$ guaranteed by the induction hypothesis and Claim~\ref{claim:drops_combined_meeting_time}, we get
\begin{align*}
&c(K^* \cup \tilde{K})   - \tfrac{1}{1+\gamma} \cdot \drop(\tilde{X} \cup X^*,s^*)\\
&\le\ c(K^*) - \tfrac{1}{1+\gamma} \cdot \drop(X^*,s^*) + c(\tilde{K}) - \tfrac{1}{1+\gamma} \cdot \drop(\tilde{X},s^*) - \tfrac{2}{1+\gamma} \cdot d^{\tilde{S}} \\
&\le\ (1+\delta) \cdot t^{S^*v} - \mu \cdot \drop(X^*,s^*)
+ (1+\delta) \cdot t^{\tilde{S}v} - \mu \cdot \drop(\tilde{X},\tilde{s}) - \tfrac{2}{1+\gamma} \cdot d^{\tilde{S}}\\
&\le\ 2 \cdot (1+\delta) \cdot \tfrac{\tmerge^k}{1+\alpha\delta} - \mu \cdot \bigl(\drop(\tilde{X}, \tilde{s}) + \drop(X^*, s^*) \bigr) - \tfrac{2}{1+\gamma} \cdot d^{\tilde{S}}.
\end{align*}
Using the lower bound~\eqref{eq:lower_bound_sum_of_drops} on $\drop(\tilde{X}, \tilde{s}) + \drop(X^*, s^*)$ and Lemma~\ref{lem:merge_time_approx_deactivation_time}, we conclude
\begin{align*}
&c(K^* \cup \tilde{K})   - \tfrac{1}{1+\gamma} \cdot \drop(\tilde{X} \cup X^*,s^*)\\
&\le\ 2 \cdot (1+\delta) \cdot \tfrac{\tmerge^k}{1+\alpha\delta} - \tfrac{2\mu}{\lambda} \cdot \bigl(\tmerge^k - (1+\delta)\cdot  \tfrac{\tmerge^k}{1+\alpha\delta} \bigr) - \tfrac{2}{1+\gamma} \cdot d^{\tilde{S}}\\ 
&=\ \tfrac{2}{1+\alpha\delta} \cdot \bigl( 1+ \delta - \tfrac{\mu}{\lambda} \cdot (\alpha-1)\cdot \delta \bigr) \cdot \tmerge^k - \tfrac{2}{1+\gamma} \cdot d^{\tilde{S}} \\
&\le\ \highlight{\Bigl( \tfrac{2}{1+\alpha\delta} \cdot \bigl( 1+ \delta - \tfrac{\mu}{\lambda} \cdot (\alpha-1)\cdot \delta\bigr) - \tfrac{2}{1+\gamma} \cdot \tfrac{1-\beta\delta}{1+\beta\delta} \Bigr) \cdot \tmerge^k} \\
& \highlight{\le\ 0},
\end{align*}
contradicting Corollary~\ref{cor:lower_bound_components}.
(Recall Table~\ref{table:constants} for the values of the constants used in the last inequality.)
\end{proof}

\begin{corollary}\label{cor:merge_time_and_function_f}
We have $\tmerge^k < (1+\alpha\delta)\cdot t^*$.
\end{corollary}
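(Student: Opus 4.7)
The plan is to extract the corollary as a direct consequence of Lemma~\ref{lem:lower_bound_meeting_time} applied at the tail of any edge of $P^*$ to which some set from the top chain has contributed. Concretely, first I would pick any $\tilde{S}\in\mathcal{C}^k$; such a set exists because we are in the case $\mathcal{C}\neq\emptyset$ (the easier case where $\mathcal{C}=\emptyset$ was already dispatched in Section~\ref{sec:base_contribution}) and $\mathcal{C}^k$ is one of the blocks in the partition of $\mathcal{C}$ constructed in Lemma~\ref{lem:chain_structure}. By the definition of $\mathcal{C}$, the set $\tilde{S}$ contributes to some edge $(u,v)$ of $P^*$ at a time $t'<t^*$.

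Next I would check the three conditions needed to apply Lemma~\ref{lem:lower_bound_meeting_time} at the vertex $u$. The vertex $u$ lies on $P^*$ and satisfies $u\neq w^*$ since $u$ is the tail of an edge of $P^*$ whereas $w^*$ is the last vertex of $P^*$. Moreover, because $(u,v)\in\delta^+(U^{t'}_{\tilde{S}})$ we have $u\in U^{t'}_{\tilde{S}}$, hence
\[
t^{\tilde{S} u}\ \le\ t'\ <\ t^*.
\]
Invoking Lemma~\ref{lem:lower_bound_meeting_time} then yields
\[
\tfrac{\tmerge^k}{1+\alpha\delta}\ <\ \max\bigl\{t^{\tilde{S} u},\,t^{S^* u}\bigr\}.
\]

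Finally, I would upper-bound both arguments of the maximum by $t^*$. We just observed $t^{\tilde{S} u}<t^*$, and $t^{S^* u}\le t^*$ holds because $u$ is a vertex of the $S^*$-tight path $P^*$, which exists at time $t^*$, so in particular $u\in U^{t^*}_{S^*}$. Combining, $\tfrac{\tmerge^k}{1+\alpha\delta}<t^*$, which rearranges to the desired bound $\tmerge^k<(1+\alpha\delta)\cdot t^*$. I do not anticipate any technical obstacle here: the corollary is simply a convenient repackaging of Lemma~\ref{lem:lower_bound_meeting_time}, where all the real work (the inductive comparison of the two components $K^*$ and $\tilde{K}$, and the construction of the combined drop certificate) was already done in that lemma's proof.
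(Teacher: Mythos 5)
Your argument matches the paper's proof almost verbatim: pick an edge $(u,v)$ of $P^*$ to which some $\tilde S\in\mathcal{C}^k$ contributed, note $u\neq w^*$ and $t^{\tilde S u}<t^*$, apply Lemma~\ref{lem:lower_bound_meeting_time} at $u$, and bound both entries of the resulting maximum by $t^*$. The only cosmetic difference is that the paper takes the \emph{first} such edge while you take an arbitrary one, which is immaterial here; the proof is correct.
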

\begin{proof}
Consider the first edge $(u,v)$ of the path $P^*$ where a set $S\in \mathcal{C}^k$ contributed.
Then $t^{Su} < t^*$ and hence Lemma~\ref{lem:lower_bound_meeting_time} implies $\max\{ t^{Su}, t^{S^*u} \} > \tfrac{\tmerge^k}{1+\alpha\delta}$.
Because $u$ is a vertex of the path $P^*$, which is $S^*$-tight at time $t^*$, we have $t^{S^*u} \le t^*$, implying $\tfrac{\tmerge^k}{1+\alpha\delta} <\max\{ t^{Su}, t^{S^*u} \}  \le t^*$.
\end{proof}

From Lemma~\ref{lem:lower_bound_meeting_time} and Corollary~\ref{cor:merge_time_and_function_f} we can conclude that no set $S\in \mathcal{C}^k$  contributed to any edge of the path $P^*_{[s^*,u_2]}$.

\begin{corollary}\label{cor:last_chain_contributes_only_to_new_subpath}
No set $S\in \mathcal{C}^k$  contributed to any edge of $P^*_{[s^*,u_1]}$.
\end{corollary}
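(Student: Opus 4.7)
My plan is to argue by contradiction: I would assume some $S\in\mathcal{C}^k$ contributed to an edge $(v,w)$ of $P^*_{[s^*,u_1]}$ and derive a contradiction by applying Lemma~\ref{lem:lower_bound_meeting_time} to the vertex $w$. The substance lies in bounding $t^{Sw}\le \tfrac{\tmerge^k}{1+\alpha\delta}$; together with the analogous bound on $t^{S^*w}$ this directly contradicts the conclusion of the lemma.

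First I would record the basic time bounds. If $S$ contributed to $(v,w)$, then at some time $t'$ with $a^S\le t'<\min\{t^*,d^S\}$ we have $v\in U^{t'}_S$ and $w\notin U^{t'}_S$, i.e., $t^{Sv}\le t'<t^{Sw}$. By monotonicity of the reaching times $t^{S^*\cdot}$ along the $S^*$-tight path $P^*$ and the defining property of $(u_1,u_2)$ (namely $t^{S^*u_1}\le \tfrac{\tmerge^k}{1+\alpha\delta}$), any vertex $x$ of $P^*$ lying no later on $P^*$ than $u_1$ satisfies $t^{S^*x}\le \tfrac{\tmerge^k}{1+\alpha\delta}$. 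In particular $t^{S^*w}\le \tfrac{\tmerge^k}{1+\alpha\delta}$, and $w\neq w^*$ because $u_1\neq w^*$ (which holds since $(u_1,u_2)$ is an edge of $P^*$).

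The key step would be to show $t^{Sw}\le t^{S^*w}$. Since $P^*$ is $S^*$-tight, the edge $(v,w)$ is $\delta$-tight at time $t^{S^*w}$, and once $\delta$-tight it remains $\delta$-tight at all later times. Hence at time $\max\{t^{Sv},t^{S^*w}\}$ both $v$ is reachable from $S$ by $\delta$-tight edges and the edge $(v,w)$ is $\delta$-tight, which shows $t^{Sw}\le \max\{t^{Sv},t^{S^*w}\}$. If $t^{Sv}\ge t^{S^*w}$, this would give $t^{Sw}\le t^{Sv}\le t'$, contradicting $t^{Sw}>t'$; hence $t^{Sv}<t^{S^*w}$ and therefore $t^{Sw}\le t^{S^*w}\le \tfrac{\tmerge^k}{1+\alpha\delta}$.

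Finally I would apply Lemma~\ref{lem:lower_bound_meeting_time} with $\tilde{S}=S$ and vertex $w$: Corollary~\ref{cor:merge_time_and_function_f} yields $t^{Sw}\le \tfrac{\tmerge^k}{1+\alpha\delta}<t^*$, and $w\neq w^*$, so the hypotheses of the lemma are met. Its conclusion $\max\{t^{Sw},t^{S^*w}\}>\tfrac{\tmerge^k}{1+\alpha\delta}$ then contradicts the two bounds derived above, completing the proof. The main obstacle will be the reachability bound $t^{Sw}\le t^{S^*w}$, which genuinely uses the persistence of $\delta$-tightness and the $S^*$-tightness of $P^*$; everything else is bookkeeping against the defining inequality of $(u_1,u_2)$.
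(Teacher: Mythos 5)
Your proof is correct and follows essentially the same approach as the paper: it combines Lemma~\ref{lem:lower_bound_meeting_time} with the observation that, by $S^*$-tightness of $P^*$, the edge $(v,w)$ was already $\delta$-tight at time $t^{S^*w}\le\tfrac{\tmerge^k}{1+\alpha\delta}$. The only difference is that the paper applies Lemma~\ref{lem:lower_bound_meeting_time} at the tail $v$ to conclude $t^{Sv}>\tfrac{\tmerge^k}{1+\alpha\delta}>t^{S^*w}$ and then immediately contradicts the assumption that $S$ contributed to $(v,w)$, whereas you apply it at the head $w$, which requires the extra (but correct) step of first showing $t^{Sw}\le t^{S^*w}$ via a small case split.
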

\begin{proof}
By the choice of the edge $(u_1,u_2)$, we have $t^{S^*u_1} \le \frac{\tmerge^k}{1+\alpha\delta}$ and hence because $P^*$ is $S^*$-tight, for every vertex $u$ of the path $P^*_{[s^*,u_1]}$, we have $t^{S^*u} \le t^{S^*u_1} \le \frac{\tmerge^k}{1+\alpha\delta}$.
Consider an edge $(v,w)$ of the path $P^*_{[s^*,u_1]}$.
Because $P^*$ is an $S^*$-tight path, this edge was $\delta$-tight already at time $t^{S^*w}$.

Suppose for the sake of deriving a contradiction that $S\in \mathcal{C}^k$  contributed to the edge $(v,w)$.
Then $t^{Sv} < t^*$ and hence by Lemma~\ref{lem:lower_bound_meeting_time}, we have $\max\{t^{Sv}, t^{S^*v}\} > \frac{\tmerge^k}{1+\alpha\delta}$, implying $t^{Sv} > \frac{\tmerge^k}{1+\alpha\delta}$ (because $t^{S^*v} \le \frac{\tmerge^k}{1+\alpha\delta}$).
However, the edge $(v,w)$ was $\delta$-tight already at time $t^{S^*w} \le \frac{\tmerge^k}{1+\alpha\delta} < t^{Sv}$, contradicting the fact that $S$ contributed to $(v,w)$.
\end{proof}

\subsection{Analyzing the contribution of a single chain}
\label{sec:single_chain}

The goal of this section is to prove an upper bound on the length of the path $P^*_{[u_2,w^*]}$.
To this end we will first define a potential function $\pi : V^{\epsilon} \to \mathbb{R}_{\ge 0}$.
Then we will prove the following key property.
For an edge $e=(v,w)$ we denote by $z_e^k$ the contribution of the chain $\mathcal{C}^k$ to $e$ and by $\widetilde{z}_e$ the total contribution of other sets (until the current time $t^*$). 
We will prove
\begin{equation}\label{eq:key_property_of_potential}
 \pi(w) -\pi(v) \le \widetilde{z}_e - z_e^k
\end{equation}
for every $\delta$-tight edge $e=(v,w)$.

This will allow us to bound the contribution of the chain $\mathcal{C}^k$ to subpaths of $P^*$, in terms of the potential difference of its endpoints.
More precisely, because every edge of $P^*$ is $\delta$-tight, we have $\widetilde{z}_e = \tfrac{c(e)}{1+\delta} - z_e^k$ for each edge $e$ of $P^*$. 
Hence for every subpath $P^*_{[u,v]}$, we have
\[ 
 \pi(v) -\pi(u)\ \le\ \sum_{e\in P^*_{[u,v]}} \bigl(\widetilde{z}_e - z_e^k \bigr)\ =\ \tfrac{1}{1+\delta} \cdot c\bigl(P^*_{[u,v]}\bigr) - 2 \sum_{e\in P^*_{[u,v]}} z_e^k,
\]
which implies 
\[
\sum_{e\in P^*_{[u,v]}}z_e^k \ \le\  \tfrac{1}{2} \cdot \Bigl( \tfrac{1}{1+\delta} \cdot c(P^*_{[u,v]}) + \pi(u) - \pi(v)  \Bigr).
\]
\bigskip

We now define the potential function $\pi$.
For each set $S \in \mathcal{C}^k$ and every vertex $v$, we define
\[
\pi (S, v) \coloneqq \max\Bigl\{ 0, \ \min \Bigl\{ t^*,\  d^{S}\Bigr\} - \max \Bigl\{ t^{Sv},\  a^{S} \Bigr\}\Bigr\}.
\]
Intuitively, this is (an upper bound on) the length of the time interval in which $S$ contributed to a subpath of~$P^*$ starting at the vertex $v$: the set $S$ could not contribute before its activation time $a^{S}$ or before time $t^{Sv}$ and it could not contribute after it was deactivated.
We now define 
\[
\pi(v)\ \coloneqq\ \sum_{S\in \mathcal{C}^k} \pi(S, v).
\]
See Figure~\ref{fig:potential} for an example.

\begin{figure}
\begin{center}
\begin{tikzpicture}[xscale=1.5, yscale=1.6]
\tikzset{terminal/.style={
ultra thick,draw,fill=none,rectangle,minimum size=0pt, inner sep=3pt, outer sep=2.5pt}
}
\tikzset{steiner/.style={
fill,circle,inner sep=0em,minimum size=0pt, inner sep=2pt, outer sep=1.5pt}
}
\tikzset{dual/.style={line width=2pt}}

\begin{scope}[shift={(0,0)}]

\begin{scope}[every node/.style=terminal]
\node (s) at (0,1) {};
\node (r) at (4,1) {};
\node (star) at (3.5,0) {};
\end{scope}
\node[left=2pt] at (s) {$s$};
\node [right=2pt] at (r) {$r$};
\node [right=2pt] at (star) {$s^*$};

\begin{scope}[every node/.style=steiner]
\node (a0) at (1,0) {};
\node (a1) at (1,1) {};
\node (a2) at (1,2) {};
\node (b0) at (2,0) {};
\node (b1) at (2,1) {};
\node (b2) at (2,2) {};
\node (c1) at (3,1) {};
\node (c2) at (3,2) {};
\end{scope}

\begin{scope}[ultra thick]
\begin{scope}[blue]
\draw (s) -- (a0);
\draw (s) -- (a1);
\draw (s) -- (a2);
\end{scope}
\begin{scope}[orange]
\draw (a0) -- (b0);
\draw (a1) -- (b1);
\draw (a2) -- (b2);
\draw  (a2) -- (b1);
\draw (a1) -- (b0);
\end{scope}
\draw[green!50!black] (b2) -- (c2);
\draw[green!80!black] (b1) -- (c1);
\begin{scope}[red!70!black]
\draw (b0) -- (star);
\draw (c1)-- (r);
\draw (c2) -- (r);
\end{scope}
\end{scope}

\node[blue] at (0.2,1.7) {$1-\tfrac{\delta}{2}$};
\node[orange] at (1.5,2.3) {$\tfrac{\delta}{2}$};
\node[green!50!black] at (2.5,2.3) {$\delta$};
\node[green!80!black] at (2.5,1.3) {$\tfrac{3}{2}\delta$};
\node[red!70!black] at (3.6,1.7) {$1$};

\end{scope}

\begin{scope}[shift={(5.5,0)}]

\draw[line width=8pt, opacity=0.2, rounded corners]
(3.6,0) -- (1.95,0) to[bend right=15] (0.95,1.02) -- (2.05,0.98) to[bend right=15] (0.95,2) -- (3.13,2);

\begin{scope}[every node/.style=terminal]
\node (s) at (0,1) {};
\node (r) at (4,1) {};
\node (star) at (3.5,0) {};
\end{scope}
\node[left=2pt] at (s) {$s$};
\node [right=2pt] at (r) {$r$};
\node [right=2pt] at (star) {$s^*$};

\begin{scope}[every node/.style=steiner]
\begin{scope}[magenta]
\node (a0) at (1,0) {};
\node (a1) at (1,1) {};
\node (a2) at (1,2) {};
\end{scope}
\begin{scope}[cyan]
\node (b0) at (2,0) {};
\node (b1) at (2,1) {};
\node (b2) at (2,2) {};
\end{scope}
\node (c1) at (3,1) {};
\node (c2) at (3,2) {};
\end{scope}
\node[magenta] at (1,2.3) {$\frac{3}{2}\cdot \frac{\delta}{1+\delta}$};
\node[cyan] at (2,2.3) {$\frac{\delta}{1+\delta}$};

\begin{scope}[ultra thick, ->, >=latex]
\begin{scope}[densely dotted]
\draw (s) to (a0);
\draw (s) to (a1);
\draw (s) to (a2);
\draw[bend right=15] (a0) to (b0);
\draw (a1) to[bend right=15] (b0);
\draw (a1) to (b1);
\draw (a2) to[bend right=15] (b1);
\draw (a2) to (b2);
\draw (b2) to (c2);
\end{scope}
\begin{scope}[dashed]
\draw (b1) to (c1);
\end{scope}
\begin{scope}
\draw (star) to (b0);
\draw[bend right=15] (b0) to (a0);
\draw (b0) to[bend right=15] (a1);
\draw (b1) to[bend right=15] (a2);
\end{scope}
\end{scope}

\end{scope}

\end{tikzpicture}
\caption{\label{fig:potential}
The left part of the figure shows an instance with three terminals $s$, $s^*$, and $r$. 
The costs of the edges are shown in the corresponding color.
For $t\in [0,1)$, we have  $\mathcal{S}^t = \{ \{s\}, \{s^*\}, \{r\}\}$ and for $t\in [1, t_{\max}) =[1,1+\frac{\delta}{2})$, we have $\mathcal{S}^t = \{ \{s, s^*\}, \{r\}\}$.\newline
The right part of the figure shows the $\delta$-tight edges at time $t^*=1$ and an $S^*$-tight path of length $1+3\delta$ highlighted in gray.
On solid edges only the set $S^*=\{s^*\}$ contributed, on dotted edges only the set $S=\{s\}$ contributed, and on the dashed edge both $S$ and $S^*$ contributed.
The color of a vertex $v$ indicates its potential $\pi(v)=\pi(S,v)$, where all black Steiner nodes and all terminals have potential zero.
}
\end{center}
\end{figure}

\begin{lemma}\label{lem:key_property_potential}
The potential function $\pi$ fulfills \eqref{eq:key_property_of_potential} for every $\delta$-tight edge $e=(v,w)$.
\end{lemma}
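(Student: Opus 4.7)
The plan is to interpret $\pi(S,u)$ as a Lebesgue measure on the time axis and then reduce the inequality to the dual feasibility constraint for the \emph{reverse} directed copy of $e$.

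First, I would unfold the definition of $\pi(S,u)$ to obtain the measure-theoretic identity
\[
\pi(S,u)\ =\ \Bigl|\bigl\{t \in [0, t^*) : a^S \le t < d^{S} \text{ and } u \in U_S^t\bigr\}\Bigr|,
\]
since the four conditions $t<t^*,\ t<d^S,\ t\ge a^S,\ t\ge t^{Su}$ appearing in the nested $\min$/$\max$ in the definition of $\pi(S,u)$ carve out exactly this set, and the outer $\max\{0,\cdot\}$ handles the case where it is empty. Call this set $A_u$. Applying the elementary identity $|A_w|-|A_v|=|A_w\setminus A_v|-|A_v\setminus A_w|$ then yields the per-set identity
\[
\pi(S,w) - \pi(S,v)\ =\ z^{S}_{(w,v)} - z^{S}_{(v,w)},
\]
where $z^S_{(a,b)}$ denotes the total measure of times $t\in [0,t^*)\cap [a^S, d^S)$ at which $(a,b)\in \delta^+(U_S^t)$, i.e., the contribution of $S$ to the directed edge $(a,b)$. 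Indeed, a time $t$ lies in $A_w\setminus A_v$ precisely when $w\in U_S^t$ and $v\notin U_S^t$, which is the defining condition for $S$ to contribute to $(w,v)$ at time $t$, and symmetrically for $A_v\setminus A_w$.

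Summing this identity over $S\in \mathcal{C}^k$ and writing $z^k_{(w,v)}\coloneqq \sum_{S\in \mathcal{C}^k}z^{S}_{(w,v)}$ (so that $z^k_{(v,w)}=z^k_e$) gives
\[
\pi(w)-\pi(v)\ =\ z^k_{(w,v)}-z^k_e.
\]
To bound $z^k_{(w,v)}$ I would apply the dual feasibility constraint \eqref{eq:constraint_with_slack} to the reverse edge $(w,v)$, which has the same cost $c(e)$:
\[
z^k_{(w,v)} + \widetilde{z}_{(w,v)}\ \le\ \tfrac{c(e)}{1+\delta},
\]
where $\widetilde{z}_{(w,v)}$ is the contribution to $(w,v)$ from sets outside $\mathcal{C}^k$. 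Combining with the $\delta$-tightness of $e=(v,w)$, $\tfrac{c(e)}{1+\delta}=z^k_e+\widetilde{z}_e$, and using $\widetilde{z}_{(w,v)}\ge 0$, we conclude
\[
\pi(w)-\pi(v)\ =\ z^k_{(w,v)}-z^k_e\ \le\ \widetilde{z}_e - \widetilde{z}_{(w,v)}\ \le\ \widetilde{z}_e,
\]
which is exactly \eqref{eq:key_property_of_potential}.

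I do not expect any substantive obstacle: the chain structure of $\mathcal{C}^k$ and the $S^*$-tightness of $P^*$ play no role in this per-edge inequality (they are invoked elsewhere to bound the aggregated potential $\pi(w^*)$ itself). The only delicate point is the measure-theoretic reformulation of $\pi(S,u)$ and the identification of the two asymmetric differences with the contributions to the two oppositely directed copies of $e$; both are routine bookkeeping.
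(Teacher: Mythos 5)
Your measure-theoretic reformulation of $\pi(S,u)$ and the resulting per-set identity
$\pi(S,w)-\pi(S,v)=z^{S}_{(w,v)}-z^{S}_{(v,w)}$ are correct and cleaner than the paper's explicit case analysis of the nested $\min$/$\max$ expressions; summing gives $\pi(w)-\pi(v)=z^k_{(w,v)}-z^k_e$.
However, the final step falls short of the target.
Equation~\eqref{eq:key_property_of_potential} asks for $\pi(w)-\pi(v)\le \widetilde{z}_e - z_e^k$, but your chain of inequalities only yields $\pi(w)-\pi(v)\le \widetilde{z}_e$: from feasibility of $(w,v)$ and $\delta$-tightness of $(v,w)$ you get $z^k_{(w,v)}\le z^k_e+\widetilde{z}_e-\widetilde{z}_{(w,v)}$, hence $z^k_{(w,v)}-z^k_e\le\widetilde{z}_e-\widetilde{z}_{(w,v)}\le\widetilde{z}_e$, which is weaker by the additive term $z^k_e$.
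The stronger inequality is genuinely needed: in the proof of Lemma~\ref{lem:bound_subpath_potential} the bound is summed along $P^*_{[u_2,u_3]}$ in the form $z^k_e-\widetilde{z}_e\le\pi(v)-\pi(w)$, and the $-z^k_e$ term is exactly what makes that telescoping yield the factor $2\widetilde{z}_e$.

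The missing ingredient is structural and does not follow from dual feasibility of the reverse edge alone.
One must use that $\mathcal{C}^k$ is a \emph{chain}: from this the paper deduces (after a possible relabelling of the endpoints of $\{v,w\}$) that $t^{Sv}\le t^{Sw}$ for every set $S\in\mathcal{C}^k$ that could possibly contribute to either orientation of $\{v,w\}$, so that $z^{S}_{(w,v)}=0$ for all $S\in\mathcal{C}^k$, i.e.\ $z^k_{(w,v)}=0$.
The nesting of the chain is essential here — it forces all sets of $\mathcal{C}^k$ to ``cross'' the edge $\{v,w\}$ in the same direction — and is the heart of the proof; without it one cannot rule out cancellation between $z^k_{(v,w)}$ and $z^k_{(w,v)}$.
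Once $z^k_{(w,v)}=0$ is established, your identity immediately gives $\pi(w)-\pi(v)=-z^k_e\le\widetilde{z}_e-z^k_e$ when $(v,w)$ is $\delta$-tight, and a symmetric argument via feasibility handles the case that $(w,v)$ is the $\delta$-tight orientation.
As written, then, your proof has a genuine gap in its last step, though your reformulation of the potential could be used to shorten the bookkeeping once the chain-structure observation is added.
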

\begin{proof}
We consider an edge $\{v,w\}\in E^{\epsilon}$ and prove that both orientations of it fulfill \eqref{eq:key_property_of_potential} if they are $\delta$-tight.
More precisely, we will show that
\begin{equation}\label{eq:potential_dir_vw}
 \pi(w) -\pi(v)\ \le\ \widetilde{z}_{(v,w)} - z_{(v,w)}^k
\end{equation} 
if $(v,w)$ is $\delta$-tight and
\begin{equation}\label{eq:potential_dir_wv}
 \pi(v) -\pi(w)\ \le\ \widetilde{z}_{(w,v)} - z_{(w,v)}^k
\end{equation} 
if $(w,v)$ is $\delta$-tight.
\bigskip

We call a set $S\in \mathcal{C}^k$ \emph{irrelevant} (for $\{v,w\}$) if it satisfies at least one of the following conditions:
\begin{itemize}\itemsep0pt
\item $t^* \le a^{S}$
\item $t^* \le \min\bigl\{ t^{Sv}, t^{Sw}\bigr\}$
\item $a^{S} \ge \max \bigl\{ t^{Sv}, t^{Sw}\bigr\}$
\item $d^{S} \le \min \bigl\{ t^{Sv}, t^{Sw}\bigr\}$.
\end{itemize}
We observe that if $S$ is irrelevant, then its contribution (until the current time $t^*$) to the edges $(v,w)$  and $(w,v)$ is $0$.
Moreover, we then also have $\pi(S,v) = \pi(S,w)$.
\bigskip

We call  $S\in \mathcal{C}^k$ \emph{relevant} if it is not irrelevant.
Let us assume that the chain $\mathcal{C}^k$ contains at least one relevant set (otherwise the claim trivially holds), and let $S_{\min}$ be the minimal such set. We assume $t^{S_{\min}v} \le t^{S_{\min}w}$ without loss of generality (by symmetry).
We denote by $z^{S}_e$ the contribution of a set $S$ to an edge $e$ (until the current time $t^*$) and claim that
\begin{equation}\label{eq:relate_contribution_to_potential_diff}
z^{S}_{(v,w)}\ =\ \pi(S,v) -  \pi(S,w) 
\end{equation}
and
\begin{equation}\label{eq:no_contribution_to_wv}
z^{S}_{(w,v)}\ =\ 0
\end{equation}
for every set $S\in \mathcal{C}^k$.
\bigskip

Before proving \eqref{eq:relate_contribution_to_potential_diff} and \eqref{eq:no_contribution_to_wv}, we observe that they imply \eqref{eq:potential_dir_vw} and \eqref{eq:potential_dir_wv}.
Summing \eqref{eq:relate_contribution_to_potential_diff} for all sets $S\in \mathcal{C}^k$, we obtain $z^k_{(v,w)} = \pi(v) - \pi(w)$.
Because $\widetilde{z}_{(v,w)} \ge 0$, this implies \eqref{eq:potential_dir_vw}.
Summing \eqref{eq:no_contribution_to_wv} for all sets $S\in \mathcal{C}^k$, we obtain $z^k_{(w,v)}=0$.
Hence, if $(w,v)$ is $\delta$-tight, we have 
\[
\pi(v) - \pi(w) =  z^k_{(v,w)} \le \frac{c(\{v,w\})}{1+\delta} = \frac{c(\{v,w\})}{1+\delta} - z^k_{(w,v)} - z^k_{(w,v)} = \widetilde{z}_{(w,v)} - z^k_{(w,v)}.
\]
\bigskip

It remains to prove \eqref{eq:relate_contribution_to_potential_diff} and \eqref{eq:no_contribution_to_wv} for every set $S\in \mathcal{C}^k$.
Observe that every irrelevant set $S$  satisfies \eqref{eq:relate_contribution_to_potential_diff} and \eqref{eq:no_contribution_to_wv} because both sides of \eqref{eq:relate_contribution_to_potential_diff} and \eqref{eq:no_contribution_to_wv} are zero.

Now consider a relevant set $S$.
First, we show $t^{Sv} \le t^{Sw}$.
If $S = S_{\min}$, this holds by the above assumption (which we made without loss of generality).
Otherwise $S_{\min} \subsetneq S$.
Then we have $a^{S} \ge d^{S_{\min}}$ and
$t^{Sv} \le t^{S_{\min}v}$ (by Observation~\ref{obs:monotonicity_reaching_times}).
Because $S$ and $S_{\min}$ are relevant, this implies
\[
t^{Sv} 
\ \le\  t^{S_{\min}v} 
\ =\ \min \bigl\{ t^{S_{\min}v} ,  t^{S_{\min}w}   \bigr\} 
\ <\  d^{S_{\min}} 
\ \le\ a^{S}
\ <\ \max\bigl\{   t^{Sv} , t^{Sw} \bigr\},
\]
implying $t^{Sv} < t^{Sw}$.

Having shown $t^{Sv} \le t^{Sw}$, we conclude that $z^{S}_{(w,v)} =0$ and hence \eqref{eq:no_contribution_to_wv}.
It remains to prove \eqref{eq:relate_contribution_to_potential_diff}.
We observe that if the set $S$ contributed to the edge $(v,w)$, then it started to contribute at time $\max \bigl\{ t^{Sv}, a^{S} \bigr\}$ and stopped contributing at time $\min \bigl\{ t^*, d^{S}, t^{Sw} \bigr\}$.
More precisely, we have
\begin{equation}\label{eq:contribution_direction_one}
z^{S}_{(v,w)} = \max \Bigl\{ 0,\ \min \bigl\{ t^*, d^{S}, t^{Sw} \bigr\} - \max \bigl\{ t^{Sv}, a^{S} \bigr\} \Bigr\}.
\end{equation}
We now distinguish two cases.\\

\textbf{Case 1:} $t^{Sw} > \min\{t^*, d^{S}\}$.\\[2mm]
Then $\pi(S,w)=\max\bigl\{ 0, \ \min \bigl\{ t^*,\  d^{S}\bigr\} - \max \bigl\{ t^{Sw},\  a^{S} \bigr\}\bigr\}=0$ and hence 
\begin{align*}
\pi(S,v) - \pi(S,w) =&\ \pi(S,v) \\
=&\  \max\Bigl\{ 0, \ \min \bigl\{ t^*,\  d^{S}\bigr\} - \max \bigl\{ t^{Sv},\  a^{S} \bigr\}\Bigr\} \\
=&\  \max \Bigl\{ 0,\ \min \bigl\{ t^*, d^{S}, t^{Sw} \bigr\} - \max \bigl\{ t^{Sv}, a^{S} \bigr\} \Bigr\}\\
\overset{\eqref{eq:contribution_direction_one}}{=}&\  z^{S}_{(v,w)} .
\end{align*}

\textbf{Case 2:} $t^{Sw} \le \min\{t, d^{S}\}$.\\[2mm]
Then we have $t^{Sv} \le t^{Sw} \le \min\{t, d^{S}\}$.
Moreover, because $S$ is relevant, we have $a^{S} < t^*$ and hence $a^{S} \le \min\{t^*, d^{S}\}$. 
We conclude
\begin{equation}\label{eq:pot_difference}
\begin{aligned}
\pi(S,v) - \pi(S,w) =&\  \max\Bigl\{ 0, \ \min \bigl\{ t^*,\  d^{S}\bigr\} - \max \bigl\{ t^{Sv},\  a^{S} \bigr\}\Bigr\} \\
&\ - \max\Bigl\{ 0, \ \min \bigl\{ t^*,\  d^{S}\bigr\} - \max \bigl\{ t^{Sw},\  a^{S} \bigr\}\Bigr\} \\
=&\ \max \bigl\{ t^{Sw},\  a^{S} \bigr\} -  \max \bigl\{ t^{Sv},\  a^{S} \bigr\}
\end{aligned}
\end{equation}
and by \eqref{eq:contribution_direction_one} we have
\begin{equation}\label{eq:contribution_of_S}
\begin{aligned}
z^{S}_{(v,w)} =&\ \max \Bigl\{ 0,\ \min \bigl\{ t^*, d^{S}, t^{Sw} \bigr\} - \max \bigl\{ t^{Sv}, a^{S} \bigr\} \Bigr\} \\
=&\ \max \Bigl\{ 0,\ t^{Sw}  - \max \bigl\{ t^{Sv}, a^{S} \bigr\} \Bigr\}.
\end{aligned}
\end{equation}
Because $S$ is relevant, we have
$a^{S} < \max \bigl\{ t^{Sv},t^{Sw} \bigr\} = t^{Sw}$ and hence
\[
\pi(S,v) - \pi(S,w) \overset{\eqref{eq:pot_difference}}{=} t^{Sw}  - \max \bigl\{ t^{Sv}, a^{S}\bigr\} \overset{\eqref{eq:contribution_of_S}}{=} z^{S}_{(v,w)},
\]
where we also used $t^{Sv} \le t^{Sw}$ for the last equality.
\end{proof}

Recall that we will use the potential $\pi$ to obtain an upper bound on the length of $P^*_{[u_2,w^*]}$.
To this end, we first observe that Lemma~\ref{lem:lower_bound_meeting_time} implies an upper bound on the potential $\pi(u_2)$. 

\begin{lemma}\label{lem:upper_bound_potential}
We have $\pi (u_2) \le \max \Bigl\{ 0,\ \min \bigl\{ t^*,\ \tmerge^k \bigr\} -   \tfrac{\tmerge^k}{1+\alpha\delta} \Bigr\} + \tfrac{\epsilon}{1+\delta}$.
\end{lemma}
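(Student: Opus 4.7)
The plan is to reduce the bound on $\pi(u_2)$ to a bound on $\pi(u_1)$ via the short $\delta$-tight edge $(u_1,u_2)$, and then to control $\pi(u_1)$ by telescoping along the chain $\mathcal{C}^k$ and invoking Lemma~\ref{lem:lower_bound_meeting_time}.

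For the first reduction, since $(u_1,u_2)$ lies on the $S^*$-tight path $P^*$ it is $\delta$-tight, so Lemma~\ref{lem:key_property_potential} together with $z_{(u_1,u_2)}^k \ge 0$ and $c(u_1,u_2) \le \epsilon$ yields
\[
\pi(u_2)-\pi(u_1)\ \le\ \widetilde{z}_{(u_1,u_2)} - z_{(u_1,u_2)}^k\ \le\ \tfrac{c(u_1,u_2)}{1+\delta}\ \le\ \tfrac{\epsilon}{1+\delta}.
\]
It thus suffices to show $\pi(u_1) \le \max\{0,\ \min\{t^*,\tmerge^k\} - \tfrac{\tmerge^k}{1+\alpha\delta}\}$.

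To bound $\pi(u_1)$, I would enumerate $\mathcal{C}^k = \{S_1 \subsetneq \dots \subsetneq S_m\}$. Since $\mathcal{S}$ is laminar and consecutive sets in the chain merge (so $a^{S_{i+1}} = d^{S_i}$), the intervals $[a^{S_i},d^{S_i})$ are pairwise disjoint. For each $i$, $\pi(S_i,u_1)$ equals the length of $[\max\{t^{S_i u_1},a^{S_i}\},\ \min\{t^*,d^{S_i}\})$ when nonempty. By Observation~\ref{obs:monotonicity_reaching_times} we have $t^{S_i u_1} \ge t^{S_m u_1}$ for every $i$, and trivially $d^{S_i} \le d^{S_m}$; hence each of these $m$ sub-intervals lies in $[t^{S_m u_1},\ \min\{t^*,d^{S_m}\})$. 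Taking the measure of their disjoint union and using $d^{S_m} \le \tmerge(S_m,s^*) = \tmerge^k$ from Observation~\ref{obs:lower_bound_merge_time_sets} yields $\pi(u_1) \le \max\{0,\ \min\{t^*,\tmerge^k\} - t^{S_m u_1}\}$.

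Finally, since $u_1$ is the tail of an edge of $P^*$ we have $u_1 \ne w^*$, so Lemma~\ref{lem:lower_bound_meeting_time} applies at $\tilde{S}=S_m$, $v=u_1$: either $t^{S_m u_1} \ge t^*$ (making the above bound trivially $0$), or $\max\{t^{S_m u_1},t^{S^* u_1}\} > \tfrac{\tmerge^k}{1+\alpha\delta}$, which combined with $t^{S^* u_1} \le \tfrac{\tmerge^k}{1+\alpha\delta}$ from the defining property of $u_1$ forces $t^{S_m u_1} > \tfrac{\tmerge^k}{1+\alpha\delta}$. Either case delivers the desired bound on $\pi(u_1)$. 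The main subtlety I expect is recognizing that the contributions from all of $\mathcal{C}^k$ telescope, by disjointness of the time intervals $[a^{S_i},d^{S_i})$, into a quantity governed only by the maximal element $S_m$; this collapse is exactly what permits a single application of Lemma~\ref{lem:lower_bound_meeting_time} to close the argument rather than one application per element of the chain.
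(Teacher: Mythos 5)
Your proposal is correct and follows essentially the same structure as the paper's proof: reduce $\pi(u_2)$ to $\pi(u_1)$ via Lemma~\ref{lem:key_property_potential} on the short edge $(u_1,u_2)$, telescope the contributions of the chain $\mathcal{C}^k$ into a single interval, and close with one application of Lemma~\ref{lem:lower_bound_meeting_time}. The only cosmetic difference is that you apply Lemma~\ref{lem:lower_bound_meeting_time} to the \emph{maximal} element $S_m$ and bound the union of intervals by containment in $[t^{S_m u_1},\min\{t^*,d^{S_m}\})$, whereas the paper telescopes explicitly and applies the lemma to the \emph{minimal} contributing set $S_1$; both give the same final bound. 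Two minor imprecisions that do not affect correctness: you write $a^{S_{i+1}} = d^{S_i}$, but only $a^{S_{i+1}} \ge d^{S_i}$ holds in general (intermediate merges in $\mathcal{S}$ may occur outside $\mathcal{C}^k$), which still suffices for disjointness; and the hypothesis $t^{\tilde S v} < t^*$ of Lemma~\ref{lem:lower_bound_meeting_time} needs to be checked, which you handle correctly by noting the bound is trivial when $t^{S_m u_1}\ge t^*$.
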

\begin{proof}
Recall that $(u_1,u_2)$ is the last edge of $P^*$ with $t^{S^*u_1} \le \tfrac{\tmerge^k}{1+\alpha\delta}$.
We will prove the following upper bound on $\pi(u_1)$:
\begin{equation}\label{lem:upper_bound_potential:eq1}
\pi(u_1) \leq \max \Bigl\{ 0,\ \min \bigl\{ t^*,\ \tmerge^k \bigr\} -   \tfrac{\tmerge^k}{1+\alpha\delta} \Bigr\}.     
\end{equation}
This implies the claim by applying Lemma~\ref{lem:key_property_potential} to $e=(u_1,u_2)$: 
\[
\pi(u_2)\ \le\ \pi(u_1) + \widetilde{z}_e - z_e^k\ \le\ \pi(u_1) + \tfrac{c(e)}{1+\delta}\ \le\  \pi(u_1) + \tfrac{\epsilon}{1+\delta}.
\]

If $\pi(u_1)=0$, the claimed bound \eqref{lem:upper_bound_potential:eq1} is trivially satisfied. 
Hence, we may assume $\pi(u_1) > 0$.
Let $\{ S \in \mathcal{C}^k : \pi(S, u_1) > 0 \} = \{ S_1, S_2, \dots, S_l \}$ with $S_1 \subsetneq S_2 \subsetneq \dots \subsetneq S_l$.
Then $\pi(u_1) =\sum_{i=1}^l \pi(S_i, u_1)$ and for $i\in\{1,\dots,l\}$,
\begin{align*}
\pi(S_i, u_1) \ =&\  \max\bigl\{ 0, \ \min \bigl\{ t^*,\  d^{S_i}\bigr\} - \max \bigl\{ t^{S_i u_1},\  a^{S_i} \bigr\}\bigr\} \\
=&\  \min\bigl\{ t^*,\  d^{S_i}\bigr\} - \max \bigl\{ t^{S_i u_1},\  a^{S_i} \bigr\}.
\end{align*}

For $i\in \{1,\dots, l-1\}$, we have $a^{S_{i+1}} \ge d^{S_i}$ because $S_i \subsetneq S_{i+1}$.
Hence, we have $\max \bigl\{ t^{S_{i+1}u_1},\  a^{S_{i+1}} \bigr\} \ge \min\bigl\{ t^*,\  d^{S_i}\bigr\}$ and thus
\begin{align*}
\pi(u_1) =&\ \sum_{i=1}^l \pi(S_i, u_1) \\
=&\ \sum_{i=1}^l    \bigl( \min\bigl\{ t^*,\  d^{S_i}\bigr\} - \max \bigl\{ t^{S_i u_1},\  a^{S_i} \bigr\}\bigr) \\
 \le&\ \min\bigl\{ t,\  d^{S_l}\bigr\} - \max \bigl\{ t^{S_1 u_1},\  a^{S_1} \bigr\}.
\end{align*}
The definition of the edge $(u_1,u_2)$ implies $t^{S^*u_1} \le \tfrac{\tmerge^k}{1+\alpha\delta}$.
Hence, by Lemma~\ref{lem:lower_bound_meeting_time} for every set $S\in \mathcal{C}^k$ we have $t^{Su_1} > \tfrac{\tmerge^k}{1+\alpha\delta}$  (unless $t^{Su_1} \ge t^*$, but then $t^{Su_1} \ge t^* >  \tfrac{\tmerge^k}{1+\alpha\delta}$ by Corollary~\ref{cor:merge_time_and_function_f}).
By Observation~\ref{obs:lower_bound_merge_time_sets}, we have $\tmerge^k\geq d^{S_l}$.
We conclude
\begin{align*}
\pi(u_1) \le&\ \min\bigl\{ t^*,\  d^{S_l}\bigr\} - \max \bigl\{ t^{S_1u_1},\  a^{S_1} \bigr\}  \\
\le&\  \min\bigl\{ t^*,\  \tmerge^k \bigr\} - \max \left\{ \tfrac{\tmerge^k}{1+\alpha\delta},\  a^{S_1} \right\} \\
\le&\  \min\bigl\{ t^*,\  \tmerge^k \bigr\} - \tfrac{\tmerge^k}{1+\alpha\delta}. 
\end{align*}
\end{proof}

In order to prove an upper bound on the length of the path $P^*_{[u_2,w^*]}$, we consider the last vertex $u_3$ of $P^*_{[u_2,w^*]}$ with $t^{S^*u_3} \le \tmerge^k$ (recall Figure~\ref{fig:analysis_division_subpath} for an illustration).
If no such vertex exists, we let $u_3=u_2$. We will upper bound the length of the subpaths $P^*_{[u_2,u_3]}$ and $P^*_{[u_3,w^*]}$ separately in the next two lemmas.
First, we use the potential function $\pi$ to bound the length of $P^*_{[u_2,u_3]}$.

 \begin{lemma}\label{lem:bound_subpath_potential}
We have
 \[
 c(P^*_{[u_2,u_3]}) \ \le\ (1+\delta) \cdot 2 \cdot \Bigl( t^k_{\mathrm{merge}} - \tfrac{\tmerge^k}{1+\alpha\delta} \Bigr) +  (1+\delta) \cdot \Bigl( \min\bigl\{ t^*,\  \tmerge^k \bigr\} - \tfrac{\tmerge^k}{1+\alpha\delta} \Bigr) + \epsilon.
 \]
 \end{lemma}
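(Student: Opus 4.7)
The plan is to combine the potential-function identity of Lemma~\ref{lem:key_property_potential} with a direct bound on $\widetilde{z}_e$ on the subpath $P^*_{[u_2,u_3]}$, and then substitute the bound on $\pi(u_2)$ from Lemma~\ref{lem:upper_bound_potential}. The key observation that makes everything fit together is that on $P^*_{[u_2,u_3]}$ only two kinds of terminal sets contribute to edges: sets containing $s^*$, and sets in $\mathcal{C}^k$. Indeed, by Lemma~\ref{lem:only_last_chain_contributes_to_new_subpath} no set of $\mathcal{C}^1 \cup \cdots \cup \mathcal{C}^{k-1}$ contributes to any edge of $P^*_{[u_2,w^*]}$, and the sets in $\mathcal{C}^k$ are exactly those whose contribution is recorded in $z_e^k$ rather than in $\widetilde{z}_e$. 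Hence, on $P^*_{[u_2,u_3]}$, the quantity $\widetilde{z}_e$ equals the contribution of $s^*$-containing sets alone.

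With this observation, I first bound $\sum_e \widetilde{z}_e$ on $P^*_{[u_2,u_3]}$. Applying Lemma~\ref{lem:contribution_basic} edge-by-edge and telescoping (which is valid because $P^*$ is $S^*$-tight, so $t^{S^* \cdot}$ is non-decreasing along $P^*$) yields
\[
 \sum_{(v,w)\in P^*_{[u_2,u_3]}} \widetilde{z}_{(v,w)}\ \le\ t^{S^* u_3} - t^{S^* u_2}\ \le\ \tmerge^k - \tfrac{\tmerge^k}{1+\alpha\delta},
\]
using $t^{S^*u_3}\le \tmerge^k$ from the definition of $u_3$ (with the degenerate case $u_3 = u_2$ giving an empty sum that trivially satisfies the claim) and $t^{S^*u_2} > \tfrac{\tmerge^k}{1+\alpha\delta}$ from the maximality in the definition of the edge $(u_1,u_2)$. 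Next, since every edge of $P^*$ is $\delta$-tight, Lemma~\ref{lem:key_property_potential} applies to each edge of $P^*_{[u_2,u_3]}$ and telescoping the potential gives $\pi(u_3)-\pi(u_2)\le \sum_e (\widetilde{z}_e - z_e^k)$; rearranging and using $\pi(u_3)\ge 0$ leads to $\sum_e z_e^k \le \sum_e \widetilde{z}_e + \pi(u_2)$. Combined with the $\delta$-tightness identity $\tfrac{c(e)}{1+\delta} = \widetilde{z}_e + z_e^k$ summed over $e\in P^*_{[u_2,u_3]}$, this yields
\[
 \tfrac{c(P^*_{[u_2,u_3]})}{1+\delta}\ \le\ 2\sum_e \widetilde{z}_e + \pi(u_2)\ \le\ 2\Bigl(\tmerge^k - \tfrac{\tmerge^k}{1+\alpha\delta}\Bigr) + \pi(u_2).
\]

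Finally, I would invoke Lemma~\ref{lem:upper_bound_potential} to bound $\pi(u_2)$. Here I need to note that by Corollary~\ref{cor:merge_time_and_function_f} we have $\tmerge^k < (1+\alpha\delta)\cdot t^*$, so $\min\{t^*,\tmerge^k\} > \tfrac{\tmerge^k}{1+\alpha\delta}$, which means the outer maximum in the bound of Lemma~\ref{lem:upper_bound_potential} is attained by its positive argument, giving $\pi(u_2) \le \min\{t^*,\tmerge^k\} - \tfrac{\tmerge^k}{1+\alpha\delta} + \tfrac{\epsilon}{1+\delta}$. Substituting into the previous display and multiplying by $(1+\delta)$ produces exactly the claimed bound, since $(1+\delta)\cdot\tfrac{\epsilon}{1+\delta} = \epsilon$. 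There is no serious obstacle here beyond careful bookkeeping: the main subtlety is correctly isolating the contribution to $\widetilde{z}_e$ on this subpath (enabled by Lemma~\ref{lem:only_last_chain_contributes_to_new_subpath}) so that the $\widetilde{z}_e$-bound and the potential-telescoping bound can be combined cleanly, and verifying that the outer maximum in the $\pi(u_2)$-bound is indeed positive via Corollary~\ref{cor:merge_time_and_function_f}.
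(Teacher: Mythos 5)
Your proof is correct and follows essentially the same path as the paper's: isolate the contribution on $P^*_{[u_2,u_3]}$ to $s^*$-containing sets and $\mathcal{C}^k$ via Lemma~\ref{lem:only_last_chain_contributes_to_new_subpath}, telescope Lemma~\ref{lem:contribution_basic} to bound $\sum\widetilde{z}_e$, telescope the potential inequality of Lemma~\ref{lem:key_property_potential}, combine with $\delta$-tightness, and finish with the $\pi(u_2)$ bound from Lemma~\ref{lem:upper_bound_potential} together with Corollary~\ref{cor:merge_time_and_function_f}. The only cosmetic difference is that you rearrange to bound $\sum z_e^k$ before adding back $\sum\widetilde{z}_e$, whereas the paper writes the chain directly on $c(P^*_{[u_2,u_3]})$; the content is identical.
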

 \begin{proof}
Recall that  $(u_1,u_2)$ was the last edge of $P^*$ with $t^{S^*u_1} \le \tfrac{\tmerge^k}{1+\alpha\delta}$.
We may assume $u_2 \neq u_3$ because otherwise the claimed bound on $c(P^*_{[u_2,u_3]})$ is trivially satisfied (using $t^* \ge \tfrac{\tmerge^k}{1+\alpha\delta}$ by Corollary~\ref{cor:merge_time_and_function_f}).
Then $t^{S^*u_3} \le t^k_{\mathrm{merge}}$ by the choice of $u_3$.
Moreover, the edge $(u_1,u_2)$ is not the last edge of $P^*$ and thus
$t^{S^*u_2} > \tfrac{\tmerge^k}{1+\alpha\delta}$.
 
 By Lemma~\ref{lem:only_last_chain_contributes_to_new_subpath}, the only sets that contributed to edges of $P^*_{[u_2,u_3]}$, are sets containing the terminal $s^*$ and sets from the chain $\mathcal{C}^k$.
Therefore, $\widetilde{z}_e \le t^{S^*w} -t^{S^*v}$ for every edge $e=(v,w)$ of $P^*_{[u_2,u_3]}$ by Lemma~\ref{lem:contribution_basic}.
Using that every edge of $P^*$ is $\delta$-tight and satisfies \eqref{eq:key_property_of_potential} by Lemma~\ref{lem:key_property_potential}, we conclude
\begin{align*}
c(P^*_{[u_2,u_3]})\ =&\ (1+\delta)\cdot \sum_{e\in P^*_{[u_2,u_3]}}\bigl(\tilde{z}_e+z^k_e\bigr) \\
\overset{Lem. \ref{lem:key_property_potential}}{\le} &\ (1+\delta) \cdot \Bigl( \sum_{e\in P^*_{[u_2,u_3]}} 2 \cdot \widetilde{z}_e  + \pi(u_2) -\pi(u_3)  \Bigr) \\
\le&\  (1+\delta) \cdot \Bigl( 2\bigl(t^{S^*u_3} - t^{S^*u_2}\bigr) + \pi(u_2) -\pi(u_3)  \Bigr) \\
\le&\ (1+\delta) \cdot \Bigl( 2\bigl( t^k_{\mathrm{merge}} - \tfrac{\tmerge^k}{1+\alpha\delta} \bigr) + \pi(u_2)  - 0\Bigr) \\
\overset{Lem. \ref{lem:upper_bound_potential}}{\le}&\ (1+\delta) \cdot 2 \cdot \Bigl( t^k_{\mathrm{merge}} - \tfrac{\tmerge^k}{1+\alpha\delta} \Bigr) +  (1+\delta) \cdot \max\Bigl\{0,\min\bigl\{ t^*,\  \tmerge^k \bigr\} - \tfrac{\tmerge^k}{1+\alpha\delta}\Bigr\}  + \epsilon\\
= &\  (1+\delta) \cdot 2 \cdot \Bigl( t^k_{\mathrm{merge}} - \tfrac{\tmerge^k}{1+\alpha\delta} \Bigr) +  (1+\delta) \cdot \Bigl( \min\bigl\{ t^*,\  \tmerge^k \bigr\} - \tfrac{\tmerge^k}{1+\alpha\delta} \Bigr) + \epsilon,
\end{align*}
where in the last equality we used $t^*\geq \tfrac{\tmerge^k}{1+\alpha\delta}$ by Corollary~\ref{cor:merge_time_and_function_f}.
\end{proof}

We next upper bound the length of $P^*_{[u_3,w^*]}$.
To this end, we will exploit the fact that the chain $\mathcal{C}^k$ can only contribute to the first edge of the path $P^*_{[u_3,w^*]}$, but not to any other edge of $P^*_{[u_3,w^*]}$ (recall Figure~\ref{fig:analysis_division_subpath} for an illustration).

  \begin{lemma}\label{lem:bound_length_of_trivial_subpath}
If $u_3 \neq w^*$, we have 
 \[
 c(P^*_{[u_3,w^*]}) \ <\ (1+\delta)\cdot \bigl( t^* -   \tmerge^k \bigr) + \epsilon.
 \]
 \end{lemma}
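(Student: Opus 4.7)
The plan is to argue that the chain $\mathcal{C}^k$ cannot contribute to any edge of $P^*_{[u_4, w^*]}$; combined with Lemma~\ref{lem:only_last_chain_contributes_to_new_subpath} (which rules out the earlier chains on all of $P^*_{[u_2, w^*]}$), this leaves only sets $S$ with $s^*\in S$ as possible contributors to edges of $P^*_{[u_4, w^*]}$, and Lemma~\ref{lem:contribution_basic} will then telescope to give the desired bound. The single edge $(u_3, u_4)$ has cost at most $\epsilon$ by the construction of $G^{\epsilon}$, which accounts for the additive $\epsilon$ in the target bound.

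The starting observation is that the assumption $u_3 \neq w^*$ forces $\tmerge^k < t^*$. Indeed, if $\tmerge^k \ge t^*$, then since $P^*$ is $S^*$-tight at time $t^*$ every vertex $v$ of $P^*$ would satisfy $t^{S^*v} \le t^* \le \tmerge^k$, so $w^*$ itself would be the last vertex of $P^*_{[u_2, w^*]}$ with $t^{S^*v} \le \tmerge^k$, contradicting the choice of $u_3$. Combined with $t^* < d^{S^*}$ this yields $\tmerge^k < d^{S^*}$. Now for any $S \in \mathcal{C}^k$ and any $s \in S$, Observation~\ref{obs:deactivation_time_lower_bounds_merge_time} applied to $S^*$ gives $\tmerge(s^*, s) < d^{S^*}$ iff $s \in S^*$; since $\tmerge(s^*, s) = \tmerge^k < d^{S^*}$, we obtain $s \in S^*$, hence $S \subseteq S^*$. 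Because $s^* \in S^* \setminus S$, we actually have $S \subsetneq S^*$. This strict inclusion is the key point that Lemma~\ref{lem:fist_chains_contained} did not provide for the last chain, and it is only available here thanks to the hypothesis $u_3 \neq w^*$.

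For any vertex $v$ of $P^*_{[u_4, w^*]}$, monotonicity of the reaching times along the $S^*$-tight path $P^*$, together with the defining property $t^{S^*u_4} > \tmerge^k$ of $u_4$, yields $t^{S^*v} \ge t^{S^*u_4} > \tmerge^k$. Combining $S \subsetneq S^*$ with Observation~\ref{obs:monotonicity_reaching_times} and with $d^S \le \tmerge^k$ from Observation~\ref{obs:lower_bound_merge_time_sets}, we get
\[
t^{Sv}\ \ge\ t^{S^*v}\ >\ \tmerge^k\ \ge\ d^S.
\]
Consequently $v \notin U^t_S$ for any $t \in [a^S, d^S)$, so $S$ contributes to no edge leaving $v$, and in particular not to the edge $(v, w)$ of $P^*_{[u_4, w^*]}$.

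Putting this together, only sets containing $s^*$ contribute to the edges of $P^*_{[u_4, w^*]}$. Lemma~\ref{lem:contribution_basic} and telescoping give that the total such contribution is at most $t^{S^*w^*} - t^{S^*u_4} < t^* - \tmerge^k$, and since every edge of $P^*_{[u_4, w^*]}$ is $\delta$-tight we conclude $c(P^*_{[u_4, w^*]}) < (1+\delta)(t^* - \tmerge^k)$. Adding $c((u_3, u_4)) \le \epsilon$ yields the claimed bound. The only real subtlety is the strict inclusion $S \subsetneq S^*$ for $S \in \mathcal{C}^k$; everything else is a short calculation given Lemmas~\ref{lem:contribution_basic} and~\ref{lem:only_last_chain_contributes_to_new_subpath}.
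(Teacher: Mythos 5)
Your proof is correct and follows essentially the same route as the paper's: both deduce $\tmerge^k < t^{S^*w^*} \le t^*$ from $u_3 \neq w^*$, use Observation~\ref{obs:deactivation_time_lower_bounds_merge_time} to obtain $S \subsetneq S^*$ for all $S\in \mathcal{C}^k$, then combine Observations~\ref{obs:monotonicity_reaching_times} and~\ref{obs:lower_bound_merge_time_sets} to rule out contributions from $\mathcal{C}^k$ on $P^*_{[u_4,w^*]}$, and finish by telescoping Lemma~\ref{lem:contribution_basic} and absorbing the edge $(u_3,u_4)$ into the additive $\epsilon$. No substantive differences.
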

 \begin{proof}
Because $u_3 \neq w^*$, the choice of $u_3$ implies $\tmerge^k < t^{S^*w^*} \le t^*$.
 Hence, because $d^{S^*} > t^*$ (by assumption), every set $S\in \mathcal{C}^k$ is a subset of the set $S^*$ by Observation~\ref{obs:deactivation_time_lower_bounds_merge_time}.
 Therefore,  by Observation~\ref{obs:monotonicity_reaching_times} we have $t^{Sv} \ge t^{S^*v}$ for every vertex $v$ and every set $S\in \mathcal{C}^k$.
 
 Let $u_4$ be the successor of $u_3$ on the path $P^*$.
 Then by the choice of $u_3$, we have $t^{S^*v} > \tmerge^k $ for every vertex $v$ of $P^*_{[u_4,w^*]}$.
 For every set $S\in \mathcal{C}^k$ and every vertex $v$ of $P^*_{[u_4,w^*]}$ we therefore have $t^{Sv} \ge t^{S^*v} > \tmerge^k \ge d^S$ (using Observation~\ref{obs:lower_bound_merge_time_sets} in the last inequality).
 This shows that no set from the chain $\mathcal{C}^k$ contributed to any edge of $P^*_{[u_4,w^*]}$.
 Hence, by Lemma~\ref{lem:only_last_chain_contributes_to_new_subpath}, no set from $\mathcal{C}$ contributed to $P^*_{[u_4,w^*]}$.
 This implies that for every edge $e=(v,w)$ of $P^*_{[u_4,w^*]}$ we have $t^{S^*w} - t^{S^*v} \ge \frac{c(e)}{1+\delta}$ by Lemma~\ref{lem:contribution_basic}.
 We conclude
 \begin{align*}
 c(P^*_{[u_3,w^*]}) \ \le&\  c(P^*_{[u_4,w^*]}) + \epsilon \\
  \le&\ \sum_{(v,w)\in P^*_{[u_4,w^*]}} (1+\delta) \cdot (t^{S^*w} - t^{S^*v})  + \epsilon \\
  =&\ (1+\delta) \cdot (t^{S^*w^*} - t^{S^*u_4})  + \epsilon \\
  <&\ (1+\delta)\cdot \bigl( t^* -  \tmerge^k  \bigr) + \epsilon,
 \end{align*}
where we used that $t^{S^*u_4} > \tmerge^k $ by the definition of $u_3$.
 \end{proof} 
 
 Combining Lemma~\ref{lem:bound_subpath_potential} and Lemma~\ref{lem:bound_length_of_trivial_subpath} we obtain the main result of this section.
 
 \begin{lemma}\label{lem:bound_length_new_subpath}
  We have 
 \[
 c(P^*_{[u_2, w^*]}) \ \le\ (1+\delta) \cdot t^* +  (1+\delta) \cdot 2 \cdot \Bigl( t^k_{\mathrm{merge}} - \tfrac{\tmerge^k}{1+\alpha\delta} \Bigr) + 2\epsilon -  (1+\delta)\cdot \tfrac{\tmerge^k}{1+\alpha\delta}.
 \]
 \end{lemma}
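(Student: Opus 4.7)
The plan is to combine Lemma~\ref{lem:bound_subpath_potential} and Lemma~\ref{lem:bound_length_of_trivial_subpath} by splitting the path $P^*_{[u_2,w^*]}$ at the intermediate vertex $u_3$ and doing a simple case analysis on whether $u_3 = w^*$ or not. Since $u_3$ is a vertex of $P^*$, we trivially have $c(P^*_{[u_2,w^*]}) = c(P^*_{[u_2,u_3]}) + c(P^*_{[u_3,w^*]})$, with the convention that the second term is $0$ if $u_3 = w^*$.

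First, consider the case $u_3 = w^*$. Then $c(P^*_{[u_2,w^*]}) = c(P^*_{[u_2,u_3]})$ and Lemma~\ref{lem:bound_subpath_potential} directly yields
\[
c(P^*_{[u_2,w^*]})\ \le\ (1+\delta)\cdot 2\cdot\Bigl(\tmerge^k - \tfrac{\tmerge^k}{1+\alpha\delta}\Bigr) + (1+\delta)\cdot\Bigl(\min\{t^*,\tmerge^k\} - \tfrac{\tmerge^k}{1+\alpha\delta}\Bigr) + \epsilon.
\]
Using $\min\{t^*,\tmerge^k\}\le t^*$ and $\epsilon\le 2\epsilon$, this is at most the target bound.

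Next, consider the case $u_3\neq w^*$. By the definition of $u_3$ as the last vertex of $P^*_{[u_2,w^*]}$ with $t^{S^*u_3}\le \tmerge^k$, we have $t^{S^*w^*}>\tmerge^k$, so in particular $\tmerge^k<t^*$ and hence $\min\{t^*,\tmerge^k\}=\tmerge^k$. Adding the bounds from Lemma~\ref{lem:bound_subpath_potential} and Lemma~\ref{lem:bound_length_of_trivial_subpath}, the term $(1+\delta)(\tmerge^k - \tfrac{\tmerge^k}{1+\alpha\delta})$ combines with $(1+\delta)(t^*-\tmerge^k)$ to give exactly $(1+\delta)\cdot t^* - (1+\delta)\cdot \tfrac{\tmerge^k}{1+\alpha\delta}$, while the remaining term $(1+\delta)\cdot 2 \cdot (\tmerge^k - \tfrac{\tmerge^k}{1+\alpha\delta})$ is kept as is, and the two error terms add to $2\epsilon$. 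This matches the claimed inequality.

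There is no real obstacle here — all the nontrivial structural work has already been done in Lemmas~\ref{lem:bound_subpath_potential} and~\ref{lem:bound_length_of_trivial_subpath}, and the only subtlety is verifying that the definition of $u_3$ makes $\min\{t^*,\tmerge^k\}=\tmerge^k$ precisely in the case where the second lemma is applicable. The rest is arithmetic.
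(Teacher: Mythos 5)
Your proposal is correct and follows essentially the same route as the paper: split $P^*_{[u_2,w^*]}$ at $u_3$, apply Lemma~\ref{lem:bound_subpath_potential} (and, when $u_3\neq w^*$, Lemma~\ref{lem:bound_length_of_trivial_subpath}), and do the arithmetic in two cases. The only cosmetic difference is that you verify $\min\{t^*,\tmerge^k\}=\tmerge^k$ in the second case, whereas the paper simply uses $\min\{t^*,\tmerge^k\}\le \tmerge^k$; both yield the same bound.
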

 \begin{proof}
 If $u_3=w^*$, Lemma~\ref{lem:bound_subpath_potential} implies
  \begin{align*}
 c(P^*_{[u_2, w^*]}) \ =&\ c(P^*_{[u_2,u_3]}) \\
  \le&\ (1+\delta) \cdot 2 \cdot \Bigl( t^k_{\mathrm{merge}} - \tfrac{\tmerge^k}{1+\alpha\delta} \Bigr) +  (1+\delta) \cdot \Bigl( \min\bigl\{ t^*,\  \tmerge^k \bigr\} - \tfrac{\tmerge^k}{1+\alpha\delta} \Bigr) + \epsilon \\
  <&\  (1+\delta) \cdot t^* +  (1+\delta) \cdot 2 \cdot \Bigl( t^k_{\mathrm{merge}} - \tfrac{\tmerge^k}{1+\alpha\delta} \Bigr) + 2\epsilon -  (1+\delta)\cdot \tfrac{\tmerge^k}{1+\alpha\delta}.
 \end{align*}
 
Otherwise, Lemma~\ref{lem:bound_subpath_potential} and Lemma~\ref{lem:bound_length_of_trivial_subpath} imply
 \begin{align*}
 c(P^*_{[u_2, w^*]}) \ =&\ c(P^*_{[u_2,u_3]}) +  c(P^*_{[u_3, w^*]}) \\
  \le&\ (1+\delta) \cdot 2 \cdot \Bigl( t^k_{\mathrm{merge}} - \tfrac{\tmerge^k}{1+\alpha\delta} \Bigr) +  (1+\delta) \cdot \Bigl( \min\bigl\{ t^*,\  \tmerge^k \bigr\} - \tfrac{\tmerge^k}{1+\alpha\delta} \Bigr) + \epsilon \\
  &\ + (1+\delta)\cdot \bigl( t^* - \tmerge^k \bigr) + \epsilon  \\
  \le&\  (1+\delta) \cdot t^* +  (1+\delta) \cdot 2 \cdot \Bigl( t^k_{\mathrm{merge}} - \tfrac{\tmerge^k}{1+\alpha\delta} \Bigr) + 2\epsilon -  (1+\delta)\cdot \tfrac{\tmerge^k}{1+\alpha\delta}.
 \end{align*}
 \end{proof}

\subsection{Completing the proof of Lemma~\ref{lem:distance_bound}}\label{sec:completing_induction}

In this section we prove~\eqref{item:strong_existing_component} of Claim~\ref{claim:main} for the $S^*$-tight path $P^*$at the current time $t^*$.
This completes the proof of Lemma~\ref{lem:distance_bound} and hence of Lemma~\ref{lem:dual_feasible} and Theorem~\ref{thm:main}.

Because $(u_1,u_2)$ is an edge of $P^*$, the $S^*$-tight path $P^*_{[s^*,u_1]}$  has strictly fewer edges than $P^*$ and hence we can apply the induction hypothesis to it.
By the definition of $u_1$, we have $t^{S^*u_1} \le \tfrac{\tmerge^k}{1+\alpha\delta}$ and hence $P^*_{[s^*,u_1]}$ was $S^*$-tight at time  $\tfrac{\tmerge^k}{1+\alpha\delta}$.
By \eqref{item:strong_existing_component} of the induction hypothesis, we obtain a terminal set $X$ with $s^*\in X$  and a component $K$ connecting $X\cup \{u_1\}$ such that
\begin{align*}
c(P^*_{[s^*,u_1]}) \ \le&\ (1+\delta) \cdot  \tfrac{\tmerge^k}{1+\alpha\delta} + \lambda \cdot \drop(X, s^*), \\
c(K) - \tfrac{1}{1+\gamma} \cdot \drop(X, s^*) \ \le&\  (1+\delta) \cdot \tfrac{\tmerge^k}{1+\alpha\delta} - \mu \cdot \drop(X,s^*) \\
\tmerge(s^*,s) \ <&\  (1+\alpha\delta) \cdot \tfrac{\tmerge^k}{1+\alpha\delta} = \tmerge^k & \text{ for all }s \in X.
\end{align*}

Now consider a set $\overline{S}\in \mathcal{C}^k$.
Because $\overline{S}$ contributes to some edge of $P^*$, Corollary~\ref{cor:last_chain_contributes_only_to_new_subpath} implies that there exists a vertex $v$ of $P^*_{[u_1,w^*]}$ with $t^{\overline{S}v} < \min\{t^*, d^{\overline{S}}\}$.
Hence, by Lemma~\ref{lem:reachability_implies_S-tight_path} at time $t^{\overline{S}v}$ there exists an $\overline{S}$-tight path $\overline{P}$ from some terminal $\overline{s}\in \overline{S}$ to the vertex $v$ on the path $P^*_{[u_1,w^*]}$.
By \eqref{item:actual_distance_bound} of the induction hypothesis applied to the path $\overline{P}$, we have 
\begin{equation}\label{eq:bound_new_connection}
c(\overline{P}) < (1+\beta\delta) \cdot \min\{t^*, d^{\overline{S}}\}.
\end{equation}
\bigskip

To prove \eqref{item:strong_existing_component} for the $S^*$-tight path $P^*$at the current time $t^*$ we need to show that there exists a terminal set $X^*$ with $s^*\in X^*$  and a component $K^*$ connecting $X^*\cup \{w^*\}$ such that
\begin{align}
c(P^*) \ \le&\ (1+\delta) \cdot t^* + \lambda \cdot \drop(X^*, s^*) 
 \label{item:cost_path_to_show}\\
c(K^*) - \tfrac{1}{1+\gamma} \cdot \drop(X^*, s^*) \ \le&\  (1+\delta) \cdot t^* - \mu \cdot \drop(X^*,s^*) 
\label{item:cost_comp_to_show}\\
\tmerge(s^*,s) \ <&\  (1+\alpha\delta) \cdot t^* & \text{ for all }s \in X^*
\label{item:bound_merge_time_to_show}
\end{align}
\begin{figure}
\begin{center}
\begin{tikzpicture}[very thick, yscale=0.5, xscale=1]
\tikzset{terminal/.style={
ultra thick,draw,fill=none,rectangle,minimum size=0pt, inner sep=3pt, outer sep=2.5pt}
}
\tikzset{steiner/.style={
fill=black,circle,inner sep=0em,minimum size=0pt, inner sep=2pt, outer sep=1.5pt}
}

\node[terminal] (start) at (-1.5, -4) {};
\node[left=3pt] at (start) {$s^*$};
\node[terminal] (s1) at (0, -6.5) {};
\node[terminal] (s2) at (1.5, -1) {};
\node[terminal] (s3) at (5.5, -7.5) {};
\node[left=3pt] at (s3) {$\overline{s}$};
\node[steiner] (u1) at (4,-4) {};
\node[above=2pt] at (u1) {$u_1$};
\node[steiner] (v) at (6.5,-4) {};
\node[above=2pt] at (v) {$v$};
\node[steiner] (w) at (10,-4) {};
\node[above=2pt] at (w) {$w^*$};

\coordinate (m1) at (1,-4);
\coordinate (m2) at (2.5,-4);

\begin{scope}[ultra thick, blue]
\draw (start) -- (m1) -- (m2) -- (u1);
\draw (s1) -- (m1);
\draw (s2) -- (m2);
\end{scope}

\node[blue] at (0.5,-2.8) {$K$};

\begin{scope}[->, >=latex, ultra thick, densely dashed, orange]
\draw (u1) to (v);
\draw (v) to (w);
\end{scope}

\node[orange!90!black] at (8,-2.6) {$P^*_{[u_1,w^*]}$};

\begin{scope}[->, >=latex, ultra thick, densely dotted, green!70!black]
\draw (s3) to (v);
\end{scope}

\node[green!70!black] at (6.5,-5.9) {$\overline{P}$};

\end{tikzpicture}
\caption{\label{fig:completing_induction}
An illustration of the component $K^*$ that we construct from the component $K$, the path $P^*_{[u_1,w^*]}$, and the path $\overline{P}$.
The terminals in $X^*$, which are connected by the component $K^*$, are shown as squares.
}
\end{center}
\end{figure}
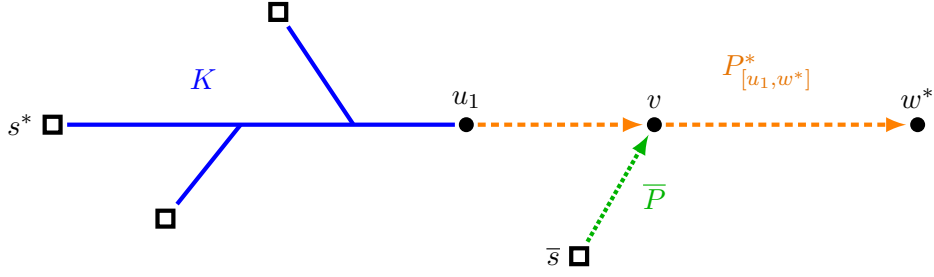

We define $X^* \coloneqq X \cup \{\overline{s}\}$ and let $K^*$ be the union of the component $K$, the path $P^*_{[u_1,w^*]}$, and the path~$\overline{P}$.
See Figure~\ref{fig:completing_induction}.
Then $K^*$ is a component connecting $X^*\cup \{w^*\}$.
It remains to show \eqref{item:cost_path_to_show} -- \eqref{item:bound_merge_time_to_show}.
\bigskip

We first show \eqref{item:bound_merge_time_to_show}.
Recall that for $s\in X$ we have $\tmerge(s^*,s) < \tmerge^k$.
Moreover, because $\overline{s} \in \overline{S} \in \mathcal{C}^k$, we have $\tmerge(s^*,\overline{s}) = \tmerge^k$.
By Corollary~\ref{cor:merge_time_and_function_f}, we have $\tmerge^k < (1+\alpha\delta) \cdot t^*$, completing the proof of~\eqref{item:bound_merge_time_to_show}.
\bigskip

Before proving \eqref{item:cost_path_to_show} and \eqref{item:cost_comp_to_show}, we first derive a lower bound on $\drop(X^*, s^*)$.
Let $\mathcal{S}^*$ be a drop certificate for $X$ of value $\drop(X,s^*)$ such that $s^*\notin S$ for all $S\in \mathcal{S}^*$ (which exists by the definition of $\drop(X,s^*)$).

We now show that $\mathcal{S}^* \cup \{\overline{S}\}$ is a drop certificate for $X^*$.
Note that we have $s^* \notin \overline{S}$ because $\overline{S} \in \mathcal{C}^k$.
For any two terminals $s_1,s_2 \in X$, the drop certificate $\mathcal{S}^*$ contains a set $S$ with $|S\cap \{s_1,s_2\}| =1$.
Consider now a terminal $s\in X$. 
Because $\tmerge(s,s^*) < \tmerge^k$ and  $\tmerge(s^*, \overline{s}) =\tmerge^k$, Observation~\ref{obs:merge_time_max} implies $\tmerge(s,\overline{s}) \ge \tmerge^k \ge d^{\overline{S}}$, where the second inequality follows from Observation~\ref{obs:lower_bound_merge_time_sets}.
Hence, we have $s \notin \overline{S}$ by Observation~\ref{obs:deactivation_time_lower_bounds_merge_time}.
In particular, this implies that $s\neq \overline{s}$ and $\mathcal{S}^* \cup \{\overline{S}\}$ is a drop certificate for $X^*$ .
Because $s^*\notin S$ for all $S\in \mathcal{S}^* \cup \{\overline{S}\}$ we conclude 
\begin{equation}\label{eq:final_lower_bound_drop}
\drop(X^*, s^*) \ge \drop(X,s^*) + 2 \cdot d^{\overline{S}}.
\end{equation}
\bigskip

We now prove \eqref{item:cost_path_to_show}.
Notice that $c(P^*_{[u_1,w^*]}) \le \epsilon + c(P^*_{[u_2,w^*]})$ and recall that we have $c(P^*_{[s^*,u_1]}) \ \le\ (1+\delta) \cdot  \tfrac{\tmerge^k}{1+\alpha\delta} + \lambda \cdot \drop(X, s^*)$ by the induction hypothesis.
Then we obtain \eqref{item:cost_path_to_show} as follows:
\begin{align*}
c(P^*) \ \le&\ c(P^*_{[s^*,u_1]}) + \epsilon + c(P^*_{[u_2,w^*]}) \\
\le&\ (1+\delta) \cdot  \tfrac{\tmerge^k}{1+\alpha\delta} + \lambda \cdot \drop(X, s^*) + \epsilon + c(P^*_{[u_2,w^*]})\\
\overset{Lem. \ref{lem:bound_length_new_subpath}}{\le}&\   (1+\delta) \cdot  \tfrac{\tmerge^k}{1+\alpha\delta} + \lambda \cdot \drop(X, s^*)+ \epsilon\\
&\ + (1+\delta) \cdot t^* +  (1+\delta) \cdot 2 \cdot \bigl( t^k_{\mathrm{merge}} - \tfrac{\tmerge^k}{1+\alpha\delta} \bigr) + 2\epsilon - (1+\delta) \cdot  \tfrac{\tmerge^k}{1+\alpha\delta} \\
=&\ (1+\delta) \cdot t^* +  (1+\delta) \cdot 2 \cdot \bigl( t^k_{\mathrm{merge}} - \tfrac{\tmerge^k}{1+\alpha\delta} \bigr) + 3 \epsilon + \lambda \cdot \drop(X, s^*)\\
=&\ (1+\delta) \cdot t^* +  2 \cdot \tfrac{1+\delta}{1+\alpha\delta} \cdot \alpha \delta \cdot \tmerge^k + 3 \epsilon + \lambda \cdot \drop(X, s^*)\\
\overset{Lem. \ref{lem:merge_time_approx_deactivation_time}}{\le}&\ (1+\delta) \cdot t^* +  2 \cdot \tfrac{1+\delta}{1+\alpha\delta} \cdot \alpha\delta \cdot \tfrac{1+\beta\delta}{1-\beta\delta} \cdot d^{\overline{S}} + 3 \epsilon + \lambda \cdot \drop(X, s^*)\\
\overset{\eqref{eq:choice_of_epsilons}}{\le}&\ (1+\delta) \cdot t^* +  \highlight{2 \cdot \tfrac{1+\delta}{1+\alpha\delta} \cdot \alpha\delta \cdot \tfrac{1+\beta\delta}{1-\beta\delta} \cdot d^{\overline{S}} + 3 \epsilon'\cdot d^{\overline{S}} }+ \lambda \cdot \drop(X, s^*)\\
\le&\ (1+\delta) \cdot t^* + \highlight{2\lambda\cdot d^{\overline{S}} } + \lambda \cdot \drop(X, s^*)\\
\overset{\eqref{eq:final_lower_bound_drop}}{\le}&\ (1+\delta) \cdot t^* +\lambda \cdot \drop(X^*, s^*),
\end{align*}
where in the second-last inequality we used the values of the constants given in Table~\ref{table:constants}.
\bigskip

Finally, we prove \eqref{item:cost_comp_to_show}. 
Recall that $c(K)\leq (1+\delta) \cdot \tfrac{\tmerge^k}{1+\alpha\delta} +\bigl( \tfrac{1}{1+\gamma} - \mu \bigr) \cdot \drop(X,s^*)$ by the induction hypothesis.
Using again $c(P^*_{[u_1,w^*]}) \le \epsilon + c(P^*_{[u_2,w^*]})$, we can conclude \eqref{item:cost_comp_to_show} as follows:
\begin{align*}
c(K^*) \ \le&\ c(K) + \epsilon + c(P^*_{[u_2,w^*]}) + c(\overline{P})\\
\le&\ (1+\delta) \cdot \tfrac{\tmerge^k}{1+\alpha\delta} +\bigl( \tfrac{1}{1+\gamma} - \mu \bigr) \cdot \drop(X,s^*) + \epsilon + c(P^*_{[u_2,w^*]}) + c(\overline{P})\\
\overset{\eqref{eq:bound_new_connection}}{\leq} &\ (1+\delta) \cdot \tfrac{\tmerge^k}{1+\alpha\delta} +\bigl( \tfrac{1}{1+\gamma} - \mu \bigr) \cdot \drop(X,s^*) + \epsilon + c(P^*_{[u_2,w^*]}) + (1+\beta\delta) \cdot \min\{t^*, d^{\overline{S}}\}\\
\overset{Lem. \ref{lem:bound_length_new_subpath}}{\le}&\   (1+\delta) \cdot \tfrac{\tmerge^k}{1+\alpha\delta} +\bigl( \tfrac{1}{1+\gamma} - \mu \bigr) \cdot \drop(X,s^*)  + \epsilon \\
&\ + (1+\delta) \cdot t^* +  (1+\delta) \cdot 2 \cdot \bigl( t^k_{\mathrm{merge}} - \tfrac{\tmerge^k}{1+\alpha\delta} \bigr) + 2\epsilon -  (1+\delta)\cdot \tfrac{\tmerge^k}{1+\alpha\delta} \\
&\ + (1+\beta\delta) \cdot \min\{t^*, d^{\overline{S}}\} \\
\le&\  (1+\delta) \cdot t^* +\bigl( \tfrac{1}{1+\gamma} - \mu \bigr) \cdot \drop(X,s^*) + 3\epsilon + (1+\beta\delta) \cdot d^{\overline{S}}  +  (1+\delta) \cdot 2 \cdot \bigl( t^k_{\mathrm{merge}} - \tfrac{\tmerge^k}{1+\alpha\delta} \bigr) \\
=&\ (1+\delta) \cdot t^* +\bigl( \tfrac{1}{1+\gamma} - \mu \bigr) \cdot \drop(X,s^*) + 3\epsilon + (1+\beta\delta) \cdot d^{\overline{S}}  +  2\alpha \delta \cdot \tfrac{1+\delta}{1+\alpha\delta} \cdot \tmerge^k \\
\overset{Lem. \ref{lem:merge_time_approx_deactivation_time}}{\le}&\  (1+\delta) \cdot t^* +\bigl( \tfrac{1}{1+\gamma} - \mu \bigr) \cdot \drop(X,s^*) + 3\epsilon + (1+\beta\delta) \cdot d^{\overline{S}}  +  2\alpha\delta \cdot \tfrac{1+\delta}{1+\alpha\delta} \cdot \tfrac{1+\beta\delta}{1-\beta\delta} \cdot d^{\overline{S}}\\
\overset{\eqref{eq:choice_of_epsilons}}{\leq} &\
(1+\delta) \cdot t^* +\bigl( \tfrac{1}{1+\gamma} - \mu \bigr) \cdot \drop(X,s^*) + \highlight{3\eps'\cdot d^{\overline{S}} + (1+\beta\delta) \cdot d^{\overline{S}}  +  2\alpha\delta \cdot \tfrac{1+\delta}{1+\alpha\delta} \cdot \tfrac{1+\beta\delta}{1-\beta\delta} \cdot d^{\overline{S}}}\\
\leq &\ (1+\delta) \cdot t^* +\bigl( \tfrac{1}{1+\gamma} - \mu \bigr) \cdot \drop(X,s^*) + \highlight{\bigl( \tfrac{1}{1+\gamma} - \mu \bigr)\cdot 2d^{\overline{S}}}\\
\overset{\eqref{eq:final_lower_bound_drop}}{\leq} &\  (1+\delta) \cdot t^* +\bigl( \tfrac{1}{1+\gamma} - \mu \bigr) \cdot \drop(X^*,s^*),
\end{align*}
where in the second-last inequality we again plugged in the constants shown in Table~\ref{table:constants} and in the last inequality we used \highlight{$\tfrac{1}{1+\gamma} - \mu \ge 0$}.

This completes the proof of Lemma~\ref{lem:distance_bound} and hence also of Lemma~\ref{lem:dual_feasible} and Theorem~\ref{thm:main}.

\appendix
\section{Necessity of non-laminar dual solutions}
\label{sec:non-laminar}

A solution $y$ of \eqref{eq:dual-bcr-tree} has \emph{laminar support} if the support $\mathcal{L} \coloneqq \{ U: y_U > 0\}$ of $y$ is a laminar family, i.e., for all sets $A,B\in \mathcal{L}$ we have $A\subseteq B$, $B\subseteq A$, or $A\cap B = \emptyset$.

In this section we provide a family of instances of the Steiner tree problem where the maximum value of a solution to \eqref{eq:dual-bcr-tree} with laminar support is arbitrarily close to $\frac{1}{2}\opt$, where $\opt$ denotes the minimum cost of a Steiner tree.
This shows that if for some fixed constant $\rho> 0$ (independent of the specific instance), we want to find a Steiner tree $T$ and a solution $y$ to  \eqref{eq:dual-bcr-tree} with $c(T) \le (2-\rho)\cdot \sum_{\emptyset \neq U\subseteq V\setminus \{r\}} y_U$ , we cannot restrict ourselves to dual solutions $y$ with laminar support.

We consider the family of instances illustrated in Figure~\ref{fig:example_nonlaminar}.
\begin{figure}[h]
\begin{center}
\begin{tikzpicture}
\tikzset{terminal/.style={
ultra thick,draw,fill=none,rectangle,minimum size=0pt, inner sep=3pt, outer sep=2.5pt}
}
\tikzset{steiner/.style={
fill=black,circle,inner sep=0em,minimum size=0pt, inner sep=2pt, outer sep=1.5pt}
}

\def\numn{20}
\def\numk{5}
\def\rad{2}

\foreach \i in {1,...,\numn} {
  \pgfmathsetmacro\r{90+(\i)*360/(\numn)}
  \pgfmathparse{Mod(\i,\numn / \numk)==0?1:0}
      \ifthenelse{\pgfmathresult > 0}{
         \node[terminal] (t\i) at (\r:\rad) {};
      }{
         \node[steiner] (t\i) at (\r:\rad) {};
      }
}
\foreach \i in {1,...,\numn} {
  \pgfmathsetmacro\j{int(Mod(\i, \numn) + 1)}
   \draw[very thick] (t\i) -- (t\j);
}
\node[above=2pt] (r) at (t\numn) {$r$};

\end{tikzpicture}
\caption{\label{fig:example_nonlaminar}
Illustration of a family of Steiner tree instances, where the graph $G$ is a cycle with $n$ vertices, out of which $k$ are terminals (shown as squares). In the depicted example, $n=20$ and $k=5$.
We choose $n$ to be divisible by $k$ and every $\frac{n}{k}$th vertex along the cycle is a terminal.
Every edge has cost one.
}
\end{center}
\end{figure}
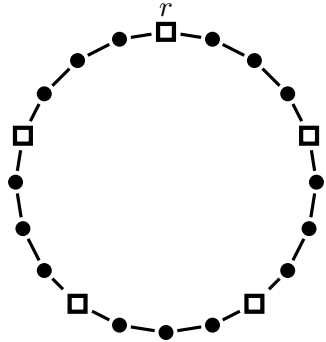

We have an instance for every integer $k \ge 2$ and every integer $n\ge k$ that is an integer multiple of $k$.
The cost of an optimum Steiner tree for such an instance is $\opt=\frac{k-1}{k}\cdot n$.
We show that for $k$ and $\frac{n}{k}$ large enough the maximum value of a solution to \eqref{eq:dual-bcr-tree} with laminar support gets arbitrarily close to $\frac{1}{2}\opt$.

\begin{lemma}\label{lem:bound_laminar}
Let $y$ be a solution to \eqref{eq:dual-bcr-tree} with laminar support.
Then $y$ has value
\[
 \sum_{\emptyset \neq U \subseteq V\setminus\{r\}} y_U \ \le\ \frac{n}{2}+k.
\]
\end{lemma}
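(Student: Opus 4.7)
The approach is a clean two-stage combinatorial argument; the whole proof should fit in under a page.

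\emph{Stage 1: reduction to arcs.} I would first argue that, without loss of generality, every set $U$ in the support of $y$ is a connected arc (subpath) of $P \coloneqq V\setminus\{r\}$, where $P$ is viewed as a path obtained by deleting $r$ from the cycle. This follows from a standard splitting argument: if some $U$ in the support has connected components $A_1,\dots,A_m$ in $P$, redistribute $y_U$ by adding its value to $y_{A_i}$ for every $A_i$ that contains a terminal. This keeps $y$ feasible because $\delta^+(U) = \bigsqcup_i \delta^+(A_i)$, keeps the family laminar (every other set in $\mathcal{L}$ was either disjoint from $U$ or a subset of $U$ with all its components lying inside a single $A_i$), and does not decrease the objective.

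\emph{Stage 2: counting via laminarity.} For each $v\in P$ let
\[
  L(v) \coloneqq \sum_{U=[v,\cdot]\in\mathcal{L}} y_U, \qquad R(v) \coloneqq \sum_{U=[\cdot,v]\in\mathcal{L}} y_U,
\]
and write $s(v)\coloneqq y_{\{v\}}$, $L^*(v)\coloneqq L(v)-s(v)$, $R^*(v)\coloneqq R(v)-s(v)$. The BCR dual constraints on the two directed edges leaving $v$ give $L(v),R(v)\le 1$. The key structural observation, and the main content of the argument, is:

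\emph{In a laminar family of arcs on a path, at each vertex $v$ at most one of $L^*(v)$ and $R^*(v)$ is nonzero.}

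Indeed, if $U_1=[v,b]$ with $b>v$ and $U_2=[a,v]$ with $a<v$ were both in $\mathcal{L}$, then $U_1\cap U_2=\{v\}\neq\emptyset$ yet $b\in U_1\setminus U_2$ and $a\in U_2\setminus U_1$, contradicting laminarity. Consequently $L^*(v)+R^*(v)=\max\{L^*(v),R^*(v)\}\le 1-s(v)$.

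Since each arc contributes to exactly one $L(\cdot)$ and one $R(\cdot)$, we have $\sum_U y_U=\sum_v L(v)=\sum_v R(v)$, whence
\[
  \sum_U y_U \;=\; \tfrac12\sum_{v\in P}\bigl(L(v)+R(v)\bigr)
  \;=\; \tfrac12\sum_{v\in P}\bigl(L^*(v)+R^*(v)\bigr)+\sum_{v\in P}s(v).
\]
Using $L^*(v)+R^*(v)\le 1-s(v)$ and $|P|=n-1$:
\[
  \sum_U y_U \;\le\; \tfrac12\sum_{v\in P}\bigl(1-s(v)\bigr)+\sum_{v\in P}s(v)
  \;=\; \tfrac{n-1}{2}+\tfrac12\sum_{v\in P}s(v) \;\le\; \tfrac{n-1}{2}+\tfrac{k-1}{2},
\]
where the final inequality uses that a singleton $\{v\}$ can belong to $\mathcal{L}$ only if $v$ is a terminal distinct from $r$, so there are at most $k-1$ such singletons, each of weight $\le 1$. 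The result is strictly less than $\tfrac{n}{2}+k$, proving the lemma (with room to spare).

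\emph{Main obstacle.} The only subtle step is identifying the structural constraint ``at each vertex $v$, non-singleton arcs cannot both start and end at $v$''; this is precisely the feature of laminar families that non-laminar BCR duals exploit to gain the missing factor of two on cycle-like instances. Once this observation is made, the rest is routine double counting. The singleton boundary case (where an arc both starts and ends at $v$) is handled cleanly by isolating $s(v)$ and accounting for it separately, which also provides the additive $k$ slack in the final bound.
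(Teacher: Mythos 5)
Your proof is correct and takes a genuinely different route from the paper's. Both arguments begin with the same reduction to arcs (sets of consecutive vertices). After that, the paper's proof counts at the edge level: it considers the laminar family $\mathcal{L}^R = \{U \cap R : U \in \mathcal{L}\}$, which has at most $2k$ members, shows that any edge whose two orientations both carry positive dual load must lie in $\delta(U_X)$ for two disjoint sets $X \in \mathcal{L}^R$ (where $U_X$ is the maximal member of $\mathcal{L}$ whose terminal set is $X$), and concludes that there are at most $2k$ such edges; summing the constraints over both orientations of all $n$ edges then gives $2\sum_U y_U \le n + 2k$. Your proof instead counts at the vertex level, isolating the structural fact that in a laminar arc family a non-singleton arc cannot both start and end at the same vertex $v$; this gives $L^*(v)+R^*(v)\le 1-s(v)$ pointwise and, after double counting, the slightly tighter bound $\tfrac{n-1}{2}+\tfrac{k-1}{2}$. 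Both proofs exploit the same underlying phenomenon---laminarity cannot saturate both orientations of a cycle edge---but localize it differently; the vertex-level accounting is arguably more transparent. One small presentational gap in your Stage~1: your laminarity-preservation claim omits the case $W \supsetneq U$ (trivially fine, since then $W$ contains every $A_i$), and the assertion that subsets of $U$ have all components inside a single $A_i$ is most cleanly justified either by taking $U$ minimal (as the paper does) or by replacing all sets by their components simultaneously, using that a connected subset of some $W\in\mathcal{L}$ meeting a component of $W$ must lie entirely within that component. These are cosmetic and do not affect the correctness of your argument.
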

\begin{proof}
Let $\mathcal{L} \coloneqq \{ U: y_U > 0\}$ denote the support of $y$.
By assumption, $\mathcal{L}$ is a laminar family.

We first show that we may assume that every element of $\mathcal{L}$ is a set of consecutive vertices of the cycle $G$.
If this is not the case, consider a minimal element $U$ of $\Lscr$ violating this condition. 
Then we can write the cut $\delta^+(U)$ as the disjoint union of cuts $\delta^+(U_1), \dots, \delta^+(U_p)$, where each set $U_i \subseteq V\setminus \{r\}$ is a nonempty set of consecutive vertices of the cycle $G$ and $p \ge 2$.
Then decreasing the value of $y_U$ to zero and increasing $y_{U_1}$ by the same amount, we maintain a solution to \eqref{eq:dual-bcr-tree} of the same value.
Moreover, we maintain laminarity of the support of $y$ by the minimal choice of $U$.
Hence, we may assume that every element of $\mathcal{L}$ is a set of consecutive vertices of $G$.

Let $\mathcal{L}^R \coloneqq \{ U \cap R : U\in \mathcal{L}\}$.
Then $\mathcal{L}^R$ is a laminar family because $\mathcal{L}$ is a laminar family.
Because $\mathcal{L}^R$ is a laminar family of subsets of $R$, it has at most $2k$ elements.
For each set $X\in \mathcal{L}^R$, the family $\{ U \in \mathcal{L} : U\cap R = X\}$ is a chain.
We write $U_X$ to denote the maximal element of this chain.
Now consider an edge $e=\{v,w\}$ of $G$, where the contribution of $y$ to each of $(v,w)$ and $(w,v)$ is positive, i.e, $ \sum_{U : (v,w)\in \delta^+(U)} y_U > 0 $ and 
$\sum_{U : (w,v)\in \delta^+(U)} y_U \ >\ 0$. 
Then $e \in \delta(U_X)$ for two disjoint sets $X \in \mathcal{L}^R$.
Because $\delta(U)$ contains exactly two edges (in $G$) for every set $U\in \mathcal{L}$ and $|\mathcal{L}^R|\le 2k$, this implies that there are at most $2k$ edges $\{v,w\}$ with this property.
We conclude 
\[
 \sum_{\{v,w\}\in E}\left( \sum_{U : (v,w)\in \delta^+(U)} y_U + \sum_{U : (w,v)\in \delta^+(U)} y_U \right) \ \le\ n + 2k.
\]
Because every cut $\delta^+(U)$ with $U\in \mathcal{L}$ has exactly two elements (in $\overrightarrow{E}$),  this implies that the value of $y$ is at most $\frac{n}{2} + k$.
\end{proof}

Lemma~\ref{lem:bound_laminar} implies that on the instances from Figure~\ref{fig:example_nonlaminar}, every solution $y$ to \eqref{eq:dual-bcr-tree} with laminar support has value at most $\frac{n}{2}+k = (\frac{1}{2} - \frac{k}{n}) \cdot n$.
For $k$ and $\frac{n}{k}$ being large enough, the ratio of this value $(\frac{1}{2} - \frac{k}{n}) \cdot n$ and the cost $\opt=\frac{k-1}{k}\cdot n$ of an optimum Steiner tree becomes arbitrarily close to $\frac{1}{2}$.

Of course, for this particular family of instances a simple preprocessing could eliminate the Steiner nodes of degree $2$.
However, we might be given only the metric closure of the cycle $G$, or the graph $G$ might have some additional edges that have high enough cost such that they do not change the cost of an optimum Steiner tree.
Then considering dual solutions with laminar support is still not sufficient and the simple preprocessing eliminating degree-2 Steiner nodes does not help anymore. 

\subsubsection*{Acknowledgment}
We thank the Banff International Research Station and the organizers of the BIRS workshop ``Approximation Algorithms and the Hardness of Approximation'' in September 2023, during which fruitful discussions related to this work have taken place.

\bibliographystyle{alpha}
\newcommand{\etalchar}[1]{$^{#1}$}

\end{document}